\newtheorem{prop}{Proposition} 	
\newtheorem{cor}{Corollary}		
\newtheorem{lem}{Lemma}			
\newtheorem*{lem*}{Lemma}
\newtheorem{defn}{Definition}		
\newtheorem{rem}{Remark}			
\newtheorem{ass}{Assumption}
\begin{document}
\title{Keeping up with ``The Joneses'': reference dependent choice with social comparisons}

\author{Alastair Langtry\footnote{University of Cambridge and King's College, Cambridge CB2 1ST, UK.  Email: \emph{atl27@cam.ac.uk.} } \footnote{I am grateful to Matthew Elliott, Lukas Freund, Edoardo Gallo, Charles Parry and Gabriela Stockler for insightful comments and discussions, and to seminar participants at the University of Cambridge. This work was supported by the Economic and Social Research Council [award reference ES/P000738/1]. Any remaining errors are the sole responsibility of the author.}}

\date{28 June 2022 \\ \vspace{7mm}}

\maketitle
\begin{abstract}
Keeping up with ``The Joneses'' matters. This paper examines a model of reference dependent choice where reference points are determined by social comparisons. An increase in the strength of social comparisons, even by only a few agents, increases consumption and decreases welfare for everyone. Strikingly, a higher marginal cost of consumption can increase welfare. In a labour market, social comparisons with co-workers create a \emph{big fish in a small pond} effect, inducing incomplete labour market sorting. Further, it is the skilled workers with the weakest social networks who are induced to give up income to become the \emph{big fish}. \\ 

\noindent \textbf{JEL:} D60, D85, D91, J24  \\
\noindent \textbf{Keywords:} social comparisons, reference dependence,  networks, labour market sorting.\\
\end{abstract}
\newpage
\begin{quote}
    \emph{}
\end{quote}

\begin{quote} \emph{\footnotesize{
``No man is an island, entire of itself'' }}
\flushright{
\footnotesize{Meditation XVII, John Donne, 1624}}
\end{quote}
\vspace{3mm}

Working life has long provided ample opportunity to compare yourself to your coworkers: from the clothes they wear and the cars they drive to their latest luxury holiday. Trends in clothes, cars and holidays have come and gone, but the ability to compare yourself to your coworkers has not. In contrast, the rapid growth of social media over the past two decades has made it much easier to compare yourself to your friends. Photos and videos of their houses, holidays and parties are far more readily available than ever before. A short scroll through Facebook or Instagram now shows even your loosest of acquaintances' consumption decisions -- people who you would previously have never been able to compare yourself to.

People do not make decisions in a vacuum. Instead, they often evaluate their actions relative to the world -- and the people -- around them. Taking this casual insight seriously, I build a model of reference dependent choice where agents derive benefits from consumption relative to that of their friends.
To do this, I take the benchmark model of \cite{koszegi2006model} and change the determinant of the reference point. In \cite{koszegi2006model}, the reference point is an agent's beliefs about her own future consumption. In contrast, I assume the reference point is a weighted sum of other agents' consumption, where the weights capture the strength of social comparisons. 

To incorporate social comparisons it is necessary to extend \cite{koszegi2006model} to include multiple agents. Agents are embedded in a social network, and their reference point is derived from their neighbours. There is also a constant marginal cost of consumption and a constant marginal of benefit of connections in the social network -- to capture the myriad benefits of social connections. Agents make their consumption decisions simultaneously and I study the Nash equilibria of this game.

I first consider a setting where the network is fixed, and agents choose consumption directly. With a fixed network, a unique equilibrium exists when social comparisons are not too strong. Equilibrium consumption (when it exists) relies critically on Bonacich centrality -- which captures how well-connected agents are to other well-connected agents.\footnote{\cite{katz1953new} and \cite{bonacich1987power} originally proposed this metric. The link between equilibrium consumption and Bonacich centrality is due to \cite{ballester2006s}.} Within the confines of the model, stronger social comparisons, even by only a few agents, increase consumption but \emph{reduce} welfare for all agents. Higher consumption by one agent pushes her neighbours to consume more, through an increase in their reference points. This creates a rush to ``keep up with The Joneses''. Everyone spends more on consumption, but nobody gets any further ahead, so welfare falls. 

Strikingly, a higher marginal cost of consumption can increase welfare. This arises because it reduces \emph{other people’s} consumption, reducing the need to ``keep up with The Joneses'' and making makes an agent better off for a given level of her own consumption. When an agent’s Bonacich centrality is sufficiently high, this effect dominates the standard effects that cause higher costs to reduce welfare. Within my benchmark model, a flat tax would improve then welfare, even if the proceeds are entirely wasted.

Next, I examine the implications of agents choosing their neighbours (i.e. when the network is endogenous). In this case, they separate into a set of ``social classes''. This is where agents are only friends with people who have the same level of consumption as they do. 


One way of interpreting the cost of consumption is the effort of work needed to acquire it. 
Such a view implicitly thinks of work as a solitary activity. In reality however, most people work in firms. It is typically possible to increase consumption by moving to a more productive firm, without necessarily working harder. However, moving firms comes with an obvious catch -- your coworkers change. Therefore we can also think of the change in coworkers' consumption as the cost of moving firm to get higher consumption for yourself. This is unavoidable.\footnote{The idea that coworkers' income affects agents' utility has a long history in economics \citep{marshallprinciples, veblen1899} and is supported by much more recent evidence in a range of settings (see for example \cite{card2012inequality, breza2018morale}).} 

To recognise this reality I then allow agents to choose their firm. This allows us to explore how social comparisons affects labour market sorting and segregation. Sorting in this context is where skilled workers work disproportionately for high productivity firms, and segregation is where skilled workers work disproportionately with each other. I consider a stylised benchmark model with two types of workers and two types of firms.
The model is then able to explain two key stylised facts that have emerged from empirical studies of labour markets. First, that sorting is not complete -- the most skilled workers are frequently not matching with the most productive firms. Measuring sorting as the correlation between firm and  worker types, many authors find a only moderate degree of labour market sorting in practice -- a correlation coefficient in the region of 0.4 to 0.6. \cite{bagger2019empirical} find this level of sorting in Danish data, \cite{borovivckova2017high} in Austrian administrative data, and \cite{torres2018sources} and \cite{card2018firms} in Portuguese data. 

Social comparisons with coworkers can explain the incomplete sorting observed in practice. Given the stark assumptions of the model, they can cause a skilled worker to want to be the \emph{big fish in a small pond}. This means she wants to work with coworkers who earn less than she does, even if her own wages are lower. Consequently, some skilled workers choose to work at low productivity firms in order to work with lower paid coworkers. 

The second stylised the fact the model can account for is that sorting has increased over time. \cite{song2019firming} and \cite{card2013workplace} demonstrate that sorting and segregation have risen over time, and both attribute the majority of the increase in income inequality over recent decades to these phenomena. The model predicts increasing sorting when there is technology change that facilitates better comparisons with friends. This is because better comparisons with friends causes the big-fish effect to diminish over time, in turn increasing sorting. 

I argue that the the rise of social media is precisely such a technology change. Social media has made it far easier to observe the consumption decisions of a wide group of friends and acquaintances. There is also increasing empirical evidence that social media intensifies social comparisons \citep{burke2020social, verduyn2017social, verduyn2015passive, krasnova2013envy}, although it is difficult to clearly demonstrate causality. The huge increase in social media use over recent decades would then show up in my model as stronger social comparisons with friends. This makes the \emph{big fish} effect weaker and so drives up sorting in the labour market, as workers find it more important to keep up with their friends rather than to be the \emph{big fish} at work. More recently, the increasing shift to working from home in some industries may make social comparisons with friends more important relative the those with co-workers by causing workers to see their coworkers less frequently. This shift could also exacerbate the effects of social media -- driving sorting up further. 

Beyond the extent of sorting, the model also predicts the \emph{pattern} of sorting: that is, who works where. Among skilled workers, it is those making the weakest social comparisons with friends who choose the work for low productivity firms and earn higher wages relative to their co-workers, but lower \emph{absolute} wages. This presents a novel mechanism through which social connections increase wages. With weak social connections, the \emph{big fish} effect looms large, and skilled workers choose jobs at low productivity firms. They are not crowded out by those with better social connections, but feel pressured into low productivity firms by a strong \emph{big fish} effect. 
Hence workers with weaker social networks \emph{choose} lower productivity jobs, rather than having worse job opportunities due to fewer referrals (see, for example, \cite{bolte2020role}). Which of these mechanisms -- social comparisons or job referrals -- is at work has important policy implications. If social comparisons play the leading role, then Bolte et al.'s [2020] main policy proposal for addressing inequality -- one-time affirmative action to provide jobs to disadvantaged groups -- will be less effective. In a world where social comparisons are important, social networks create inequality of \emph{aspiration}. Weak social networks cause skilled workers to \emph{prefer} jobs at low productivity firms via the \emph{big fish} effect. 

Addressing inequality of opportunity -- the goal of Bolte et al.'s [2020] affirmative action -- is not sufficient to ensure that skilled workers from disadvantaged groups receive the high productivity jobs. In contrast, strengthening the social networks of skilled workers from disadvantaged groups addresses both issues. It provides them with better job opportunities (via more referrals) and also changes their preferences so that they choose to work in high productivity jobs. While altering social networks is very challenging for a policymaker, it may be necessary if currently disadvantaged groups are to achieve equal labour market outcomes.

\newpage
\subsection*{Related Literature}
This paper continues a strand of the behavioural economics literature that examines the source of reference points.\footnote{There is of course a much wider literature concerning reference dependence and loss aversion, with seminal contributions from \cite{tversky1979prospect, thaler1980toward, tversky1991loss}. See \cite{barberis2013thirty} for a review of the literature, and \cite{dellavigna2009psychology, camerer1998prospect} for reviews of some of the empirical evidence.} 
\cite{koszegi2006model} provide perhaps the best known contribution to this strand. They suppose that an agent's reference point depends on her expectations and find that in situations with uncertainty, an agent is willing to pay more for a good when she believes she is more likely to buy it. This paper considers a setting without uncertainty, and so one where K{\H{o}}szegi and Rabin's mechanism has no bite.

Beyond my novel motivation for including social comparisons in agents' preferences, this paper makes two main contributions to the significant literature on social comparisons.\footnote{Within economics, this strand dates at least to \cite{veblen1899}. Other classical contributions include \cite{duesenberry1949income, layard1980human, frank1985choosing, frank1985demand, clark1996satisfaction, cres1997externalities} and \cite{clark1998comparison}. More recent contributions include \cite{hopkins2004running, luttmer2005neighbors, frank2005positional} and \cite{schelling2006micromotives}.} 
First, is the striking result that a uniform increase in the marginal cost of consumption can penalise agents who are highly central in the social network while benefiting less central agents. There is an obvious and potentially important policy implication. A flat tax on consumption makes the rich better off. 

This builds on Frank's [\citeyear{frank1985choosing, frank2001luxury}] idea that progressive consumption taxes can improve outcomes by ``slowing down'' the race to keep up with the Joneses. Making status goods more expensive reduces everyone's consumption of them. But it does so in a way that doesn't change people's \emph{position} in terms of consumption levels. So everyone maintains their position, but consumes less of the status goods -- making them better off. The core logic of my result mirrors Frank's argument. Introducing a network highlights that the impacts of government intervention can depend critically on an agent's position in a social network, and that the Frank's idea about progressive consumption taxes extend to flat taxes.\footnote{This idea of extending to a flat tax is not entirely novel. \cite{frank1985choosing} argues that government regulation can prevent a race to the bottom on workplace safety standards. This is akin to a flat tax in this context. Pressure to ``keep up with The Joneses'' can induce workers to accept worse conditions in exchange for some additional income. But because everyone faces this pressure, nobody ``gets ahead of The Joneses'' and are all worse off as a result.} It also allows for richer, or at least more explicit, patterns of social comparisons.


Further, it complements the extensive literature on the Easterlin Paradox \citep{easterlin1974does}.\footnote{See \cite{clark2008relative} for a thorough survey. Some more recent work in this area includes \cite{decancq2015happiness} and \cite{easterlin2020easterlin}. Easterlin's original finding is that individual happiness correlates positively with income at a point in time, but aggregate happiness does not rise over time as aggregate incomes increase.} With the confines of my model, I find that higher income (captured by my model as a lower marginal cost of consumption) can reduce welfare. This is even starker that Easterlin's finding of little-to-no relationship between income and happiness over time. This is because the only good I have in my model is a status good.

Second, it adds to the branch that focuses on labour market behaviour.\footnote{Some important empirical contributions include \cite{perez2011if, card2012inequality, breza2018morale} and \cite{fu2019social}. They demonstrate that social comparisons with coworkers and are important for labour market behaviour. My contribution is theoretical.} Existing literature shows that social comparisons impact a variety of outcomes in labour markets -- most notably on wage-setting, both in competitive \citep{frank1985demand}  and imperfectly competitive environments \citep{goerke2017social}. I explore the impact of social comparisons on labour market sorting. 

Close to my paper, \cite{von2012social} shows how they can induce workers to sort into unemployment. \cite{von2012social} assumes there are two types of workers, and only one type make social comparisons. He additionally makes a very stark assumption that unemployed workers only compare themselves to other unemployed workers, while the employed compare themselves to both others at their firm and workers at other firms. This asymmetry in the composition of social comparisons then drives some to choose unemployment. Related, \cite{vasquez2021} shows how co-worker altruism might induce wage rigidity and hence involuntary unemployment.

In contrast, all agents in my model make social comparisons and their non-coworkers do not change when they move to a different firm. Additionally, social comparisons in my model can explain why labour market sorting is incomplete and why it has risen over recent decades.

My model is a network game with linear best replies, by now a well-studied class of games.
\footnote{A review of this strand of literature is beyond the scope of this paper. See \citet*[Chapter 5]{bramoulle2016oxford} and \citet*[Chapter 9]{jackson2008} for comprehensive coverage.} I make two contributions to this literature. First, I to show a comparative static for rerouteing a link (extending well-known results concerning adding or removing a link). Second, I demonstrate one way in which network games can be used to model a well-known behavioural bias. Some of the first results, notably \Cref{prop:equilibrium} and \Cref{prop:ref strength}, are similar in technical detail to results in \cite{ballester2006s} and \cite{ushchev2020social} respectively, but their motivation via reference dependent choice is novel.

There is also a growing strand of the networks literature that examines endogenous network formation. Perhaps closest to my paper, \cite{hiller2017peer} considers a network game with strategic complementarities where links are two-sided (and so require consent from both agents), finding that equilibrium networks are either complete or empty. A key difference in this paper is that social comparisons are a negative externality (\cite{hiller2017peer} assumes positive externalities).\footnote{Related, \cite{staab2019formation} studies how agents who make social comparisons form groups when there is a price to group membership. In this setting, he finds that social comparisons can reduce -- rather than increase -- segregation.} 
In contrast, I find that agents form groups based on their benefits of forming links, and only link with others in the same group. This creates groups, each with a different level of consumption, who do not interact with one another -- very stark social classes emerge. 

Related, \citet{ushchev2020social} study a game where agents want to be close to the average action in their neighbourhood. They suppose some preference for conformity. In spite of the contrasting motivation and the simplified network component, \citet[S5.5]{ushchev2020social} also find that the network becomes completely homophilous when agents can choose their links. This means agents only form links with others of the same type (their analysis assumes only two types for simplicity).\footnote{More recently, \cite{sadler2021games} have nested a wide variety of endogenous network formation games into a single framework. While my game does not fit because I consider weighted networks, the stark social classes emerging in this paper are very similar to the \emph{ordered overlapping cliques} networks in \cite{sadler2021games}.} 


In spirit, this paper is most similar to \cite{ghiglino2010keeping} and \cite{immorlica2017social}, which both examine social comparisons taking place on a network.\footnote{Few treatments of social comparisons prior to \cite{ghiglino2010keeping} include an explicit network, instead focusing on average consumption (e.g. \cite{aronsson2015keeping, hopkins2004running}) or an agent's ordinal rank (e.g. \cite{neumark1998relative, frank1985choosing, frank1985demand}).} 
\cite{ghiglino2010keeping} consider a two-good general equilibrium exchange economy, where social comparisons apply to one good. In both this paper and in \cite{ghiglino2010keeping} Bonacich centralities are the key determinant of agents' consumption.\footnote{\cite{ghiglino2010keeping} also find that Bonacich centralities determine prices -- something outside the scope of this paper.} 
My focus on a single good and a partial equilibrium approach permits my stronger comparative statics results regarding consumption and welfare. It also facilitates investigation of endogenous network formation and the application to labour markets. Further, it also allows me to drop two assumptions \cite{ghiglino2010keeping} make for tractability: that an agent cares about her consumption relative to the simple average of her friends' consumption, and that the strength of social comparisons only depends on the number friends. Ghiglino and Goyal's \citeyear{ghiglino2010keeping} use of Cobb-Douglas utility helps generate linear best replies in a setting with multiple goods. I make alternative, and even starker, simplifications -- namely a single good and linear costs  

\cite{bramoulle2022loss} subsequently build loss aversion into the framework of \cite{ghiglino2010keeping}, and find that loss aversion has a material impact on outcomes. It can generate a continuum of equilibria where all agents consume the same quantity of the status good. This occurs when agents' incomes are sufficiently similar to each other (relative to the strength of loss aversion). In contrast, agents in my model always consume above their reference point (even if only by a small amount), and so loss aversion has little bite. This is because they do not face an immovable budget constraint, but rather a constant marginal cost of consumption.

In its setup, \cite{immorlica2017social} is closer to this paper -- it also takes a partial equilibrium approach and focuses on a single good. The key difference is that \cite{immorlica2017social} suppose that agents only make social comparisons with to those richer than themselves. This comes at a significant cost to tractability. They find multiple equilibria, and only analyse one -- the one with highest consumption by agents. There, they find that agents stratify into a ``class structure'' according to their position in the network. Consumption then depends on ``social class''. In contrast, I assume agents form an overall reference point based on their friends' consumption. This preserves continuity in most functions, provides a unique equilibrium and allows for simpler determination of equilibrium consumption and welfare, and associated comparative statics. I also find an analogy to Immorlica et al.'s (2017) ``social classes'' when allowing agents to choose their own friends.


The rest of the paper is organised as follows. \Cref{sec:model} sets out the model. \Cref{sec:results} presents the main results when the network is fixed. \Cref{sec:endo network} considers how agents choose friends. \Cref{sec:sorting model} considers how agents choose coworkers and jobs. \Cref{sec:conclusion} concludes. 

\section{Model}\label{sec:model}

There are $n$ agents. Each agent $i$ simultaneously chooses a level of consumption $x_i \geq 0$. The agents are embedded in a weighted and directed network $G$. The network $G$ is an $n \times n$ matrix, where the entry $G_{ij}$ denotes the strength of a link from $i$ to $j$. An agent $j$ is $i$'s \emph{neighbour} (or \emph{friend}) if and only if $G_{ij} > 0$. Assume $i$ cannot be neighbours with herself (so $G_{ii} = 0$ for all $i$).

Now define $\alpha_{i} = \sum_{j} G_{ij}$ and $g_{ij} = \frac{G_{ij}}{\alpha_{i}}$ for all $i,j$. Then $G \equiv \text{diag}(\alpha) g$, where $\alpha = [\alpha_1,...,\alpha_n]^{T}$. Notice that $g$ is row stochastic (i.e. each row sums to one). Call $\alpha$ the \emph{strengths} and $g$ the \emph{structure}. This novel decomposition will help us to better understand the network effects later on.\footnote{Mathematically, this is a trivial decomposition. Conceptually however, it allows us to separate out the effects of social comparisons per se from the network effects.}

An agent $i$'s \emph{reference point} is a weighted sum of her neighbours' consumptions, where the weights are determined by the network $G$. Mathematically, $i$'s reference point is $\alpha_i \sum_{j} g_{ij} x_{j}$. 
An agent $i$ cares about her consumption, $x_i$, only relative to her reference point, not in absolute terms. All agents face a constant marginal cost of consumption, $c> 0$, and receive a marginal benefit $b_i > 0$ of links. So $i$'s utility function is:

\begin{align}\label{eq: prefs}
u_{i} = f \left( x_{i} -  \alpha_{i} \sum_{j} g_{ij} x_{j} \right) - c x_{i} + b_i \alpha_i \sum_j g_{ij}
\end{align}

where $f(\cdot)$ is twice continuously differentiable, strictly increasing and concave. To focus on the interesting where agents would choose positive but finite consumption absent any social comparisons, I assume that $f'(0) > c > f'(+\infty)$. Further, assume that for any $i$, there exists some $j$ such that $b_i = b_j$. That is, every agents shares her value of $b_i$ with at least one other agent. \\

\noindent \textbf{Discussion.} The key feature of reference dependent choice here is that an agent get benefits from consumption \emph{relative to} her reference point, but pays costs for absolute consumption. The key innovation relative to existing work on reference dependent choice is a novel source of the reference point -- comparisons with neighbours in a social network.
Several of the assumptions I make are for clarity of exposition only. I relax each in the Online Appendix. Doing so does not change the core insights of the model.
\begin{itemize}
    \item \textbf{Linear sub-utility.} In \cite{koszegi2006model}, the benefit of consumption relative to the reference point (which they call ``gain-loss utility'') is a function of agents' \emph{utility from} consumption. Formally, it is $f(m(x_i) - m(r_i))$, where $r_i$ is $i$'s reference point and $m(\cdot)$ is a concave function. I assume this sub-utility function, $m(\cdot)$ is linear. 
    
    Relaxing this, and using some strictly concave function $m(\cdot)$ in \Cref{eq: prefs} does not change the insights of the model, but makes them harder to follow. However, it does prevent closed form solutions, and so breaks the tight link between network centrality and outcomes. 

    \item \textbf{Single good.} Moving to a K-dimensional consumption bundle with additively separable preferences, as in \cite{koszegi2006model}, would have no impact on the results. Here, this means that agents make optimal choices for each good separately. So a K-dimensional bundle collapses down to K separate single-good problems. 
    
    \item \textbf{Homogeneous cost.} My model, embodied by \Cref{eq: prefs}, assumes that the marginal cost of consumption is the same for all agents. Introducing arbitrary heterogeneity in the cost parameter, $c_i$, does not materially change the results, but complicates matters without providing additional insight. 
    
    \item \textbf{No loss aversion.} In \Cref{eq: prefs} , the function $f(\cdot)$ is everywhere concave. Capturing loss aversion \`a la \cite{tversky1979prospect} involves making $f(\cdot)$ kinked at zero and convex in the negative domain. In my model adding this has no impact, as agents always choose to consume above their reference point (even if only by a very small amount).
\end{itemize}

\noindent Additionally, the linking benefit, $b_i$, captures the simple idea that social connections bring benefits. These include improved health, job opportunities, access to new information and risk sharing, to name but a few.\footnote{\cite*{reynolds1990social} show that social connections increase cancer survival rates, and \cite{newman2020value, gerst2015loneliness, kawachi2001social} document mental health benefits. \cite{calvo2004effects, chandrasekhar2020} show the roles social networks can play in getting a job, and \cite{banerjee2013diffusion} shows improved access to other information. \cite{karlan2009trust, ambrus2014consumption} demonstrate how networks can facilitate risk sharing.} 
In the interest of simplicity, I assume that there are no benefits from friends-of-friends. Only direct friends bring benefits. Implicitly this assumes indirect connections are not of first-order importance. In any case, these benefits only play a role when the network is endogenous. \\

\noindent \textbf{Bonacich centrality.} An agent cares about her neighbours' consumption, who in turn care about their neighbours' consumption, and so on. She will end up making comparisons, albeit indirectly, with everyone she is \emph{connected} to in the network. Agent $i$ is connected to $j$ if there exists a \emph{walk} from $i$ to $j$.  A walk is a sequence of links $g_{i,i+1},...,g_{j-1,j}$ such that $g_{k,k+1} > 0$ for all links in the sequence. 

The strength with which $i$ compares herself to $j$ depends on the weight of these walks (of any length) from $i$ to $j$. Bonacich centrality \citep{bonacich1987power}, counts the total weight of all walks (of any length, from 1 to $\infty$) from agent $i$ to \emph{all other agents}. It captures the idea that making strong comparisons with society as a whole involves making strong comparisons with those who themselves make strong comparisons.

\begin{defn}[Bonacich Centrality] \label{defn:bonacich centrality}
\noindent The Bonacich centrality of agent $i$ is:
\begin{align}
C^{b}_{i} = \sum_{j} \left[ \sum_{k=0}^{\infty} G^k \right]_{ij}
\end{align}
\end{defn}

Bonacich centrality is typically defined as $C^{b}(G,\beta) = \sum_{k=0}^{\infty} \beta^k G^k$, where $\beta$ is a decay parameter (see, for example, \citet[Ch9]{jackson2008}). In my model however, the choice of decay parameter is determined by individual preferences as it is implicitly built into the reference strength. 

\section{Social Comparisons, Consumption and Welfare}\label{sec:results}
The natural starting point is to ask how agents behave in equilibrium. Before turning to the result, it is helpful to see an agent's best response function.
\begin{align}
    BR(x_{-i}) = f^{' -1}(c) + \alpha_i \sum_j g_{ij} x_j
\end{align}
Note that best responses are linear in neighbours' consumption. I now show that there is a unique equilibrium set of actions, where consumption is proportional to agents' Bonacich centrality.

\begin{rem}\label{prop:equilibrium}
If $\lambda_1 < 1$ then there is a unique Nash Equilibrium with $x_{i}^{*} = C^{b}_{i} \cdot f'^{-1}(c)$ for all $i$, where $\lambda_1$ is the largest eigenvalue modulus of the matrix $G$.
\end{rem}

Intuitively, the largest eigenvalue $\lambda_1$ captures the extent to which a change is amplified by the network. When $\lambda_{1} < 1$, the impact of an agent's consumption on that of her neighbours is (eventually) less than one-for-one. A sufficient condition for this is $\alpha_{i} < 1$ for all $i$.\footnote{$\alpha_i < 1$ for all $i$ implies that $G$ is a substochastic matrix. So $\lambda_1 < 1$ follows from the Perron-Frobenious Theorem (see for example \cite[Ch.1]{seneta2006non} or \cite{pillai2005perron}).} In that case, the impact is less than one-for-one effect happens at every step. In contrast, Bonacich centrality is not well defined when $\lambda_1 > 1$, and the solution ceases to have a meaningful economic interpretation. Therefore I will assume that $\lambda_1 < 1$ throughout the remainder of the paper.

Agents with high centrality are more closely connected to other agents with high consumption. This means they have a high reference point, which in turn pushes them to choose high consumption. So consumption exhibits strategic complementarities: higher consumption by one agent raises the marginal benefit of consumption for her neighbours (via the effect on the reference point). This leads to similar outcomes to other network games of strategic complements \citep{ballester2006s, calvo2009peer}. The differences in functional form and in motivation do not matter for this. 

I now turn to the impact of individual components of the model on consumption and welfare. The obvious first component is the strength of social comparisons.

\begin{prop}\label{prop:ref strength}
If $\lambda_{1} < 1$: (i) $x_{i}^{*}$ is weakly increasing, and (ii) $u_i^*$ is weakly decreasing, in $\alpha_{j}$ for all $i,j$, and strictly so if $i=j$.
\end{prop}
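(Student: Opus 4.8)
The plan is to collapse everything onto a single scalar channel — equilibrium own-consumption — using the closed form from \Cref{prop:equilibrium}, and then to sign that channel by walk-counting. Write $\kappa := f'^{-1}(c)$, which is strictly positive because $f'$ is strictly decreasing with $f'(0) > c$. By \Cref{prop:equilibrium} the equilibrium vector is $x^* = \kappa\,(I-G)^{-1}\mathbf{1}$, i.e. $x_i^* = \kappa\, C^b_i$. The first move is to read off equilibrium welfare. Applying $(I-G)$ to $x^*$ gives $(I-G)x^* = \kappa\mathbf{1}$, that is $x_i^* - \alpha_i\sum_j g_{ij}x_j^* = \kappa$ for every $i$: in equilibrium each agent sits exactly $\kappa$ above her reference point, independently of the network. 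Hence the gain-loss term equals the constant $f(\kappa)$, and since the linking benefit is inert in the fixed-network analysis (as noted after \Cref{eq: prefs}), equilibrium welfare reduces to $u_i^* = f(\kappa) - c\,x_i^* + \text{const}$. Welfare is thus a strictly decreasing affine function of own equilibrium consumption, so both parts come down to signing $\partial x_i^*/\partial\alpha_j$.

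For part (i) I would expand Bonacich centrality as a walk sum, $C^b_i = \sum_{k=0}^{\infty}[G^k\mathbf{1}]_i$, which converges because $\lambda_1 < 1$. Increasing $\alpha_j$ while holding the structure $g$ and the other strengths fixed scales row $j$ of $G = \operatorname{diag}(\alpha)g$ upward and leaves every other entry unchanged, so $G$ rises entrywise. Since $G \ge 0$, the standard induction $0 \le G \le G' \Rightarrow 0 \le G^k \le (G')^k$ (entrywise) shows every power $G^k$ — hence $C^b_i$ and $x_i^* = \kappa C^b_i$ — is weakly increasing in $\alpha_j$, giving (i) for all $i,j$. As a cross-check one can differentiate: $\partial_{\alpha_j}(I-G)^{-1} = (I-G)^{-1}e_j\,g_{j\cdot}(I-G)^{-1}$, whose product with $\mathbf{1}$ is manifestly nonnegative since every factor is nonnegative.

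For the strictness when $i=j$, I would isolate the length-one term in the walk expansion: $C^b_i = 1 + \alpha_i + \sum_{k\ge 2}[G^k\mathbf{1}]_i$, using $[G\mathbf{1}]_i = \sum_m G_{im} = \alpha_i$. Raising $\alpha_i$ increases this single term strictly, while all higher-order terms move weakly upward by the monotonicity above; hence $C^b_i$, and therefore $x_i^*$, is strictly increasing in $\alpha_i$. Feeding this back through $u_i^* = f(\kappa) - c\,x_i^* + \text{const}$ with $c>0$ delivers (ii): $u_i^*$ is weakly decreasing in every $\alpha_j$ and strictly decreasing in $\alpha_i$. Throughout I would restrict attention to increases in $\alpha_j$ that keep $\lambda_1 < 1$ (and $\alpha_i > 0$, so that $g_{i\cdot}$ is well defined), the maintained regime of the proposition.

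The main obstacle is not the monotonicity but pinning down exactly what is held fixed. The delicate point is the welfare claim for $i=j$: if one let the linking benefit $b_i\alpha_i$ vary with $\alpha_i$, the positive term $+b_i$ would compete with $-c\,\partial x_i^*/\partial\alpha_i$ and the sign could fail, so I must invoke the convention (stated after the utility definition) that link benefits play no role in the fixed-network analysis and treat them as a constant in this comparative static. The only other genuinely substantive step is the strict inequality, which hinges on recognising that the length-one walk contribution to own centrality is precisely $\alpha_i$ and therefore always responds strictly; everything else is routine nonnegativity and convergence bookkeeping under $\lambda_1<1$.
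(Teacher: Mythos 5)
Your proof is correct; part (i) coincides with the paper's argument, while part (ii) takes a slightly different (and arguably cleaner) route. For (i) the paper does exactly what you do: expand $(I-\operatorname{diag}(\alpha)g)^{-1}$ as a Neumann series and observe that raising $\alpha_j$ raises every power of $G$ entrywise. For (ii), the paper differentiates equilibrium utility with respect to $\alpha_k$ and then invokes the equilibrium first-order condition (\Cref{lem:foc in eq}) to cancel terms, arriving at $\frac{d u_i^*}{d\alpha_k} = -c\,\frac{d}{d\alpha_k}\bigl(\alpha_i\sum_j g_{ij}x_j^*\bigr)\le 0$; you instead substitute the identity $x_i^* - \alpha_i\sum_j g_{ij}x_j^* = f'^{-1}(c)$ into utility \emph{before} differentiating, collapsing welfare to the affine form $u_i^* = f\bigl(f'^{-1}(c)\bigr) - c\,x_i^*$, so that (ii) falls out as an immediate corollary of (i). This substitute-then-differentiate device is precisely the one the paper itself deploys in the proof of \Cref{prop:perturbation}(ii), so your route is fully in its spirit; it buys brevity and makes the strictness at $i=j$ transparent, whereas the paper's order of operations makes the economic channel explicit (welfare falls exactly through the induced rise in the reference point, scaled by $c$). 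Two points where your write-up is actually more careful than the paper's. First, for strictness in (i) you isolate the length-one walk contribution $[G\mathbf{1}]_i = \alpha_i$, which strictly increases; the paper instead asserts that $(\operatorname{diag}(\alpha)g)_{ii}$ is strictly increasing in $\alpha_i$, but that diagonal entry is identically zero since $g_{ii}=0$ by assumption — the correct object is the row sum you identify. Second, your explicit caveat about the linking benefit is warranted: since $b_i\alpha_i\sum_j g_{ij} = b_i\alpha_i$ varies with $\alpha_i$, the welfare claim for $i=j$ only holds once that term is set aside, and the paper's proof makes the same move silently by writing equilibrium utility as $f(\cdot)-c\,x_i^*$ with the link-benefit term omitted; you state the convention rather than assume it.
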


When an agent feels social comparisons more keenly (i.e. her reference strength increases), she chooses higher consumption to better keep up with ``The Joneses''. This more intense social competition has a knock-on effect: an increase in $i$'s consumption raises her neighbours' reference points, pushing them to consume more. This effect ripples out through society, increasing \emph{everyone's} consumption. However this social competition is destructive. Given the assumptions of the model, agents can only ever keep up with ``The Joneses'' -- collectively they get stuck in a rat race and nobody gets ahead. So everyone becomes worse off as their benefits of consumption don't change, but they have to spend more.

Coupled with the growing evidence that social media intensifies social comparisons (for example \cite{burke2020social, verduyn2017social, verduyn2015passive} and \cite{krasnova2013envy}), this suggests that social media has been harmful for agents. Such an insight is clearly not new, but this result clarifies the mechanism through which it actually harms its users. The increased importance of online social networks also provides policy-makers with more options, as they are much easier to influence than in-person networks. Online, policy interventions across the entire network are possible. For example, Instagram has trialled removing the number of ``likes'' attached to a post \citep{instachange2021, instachange2019} in an apparent attempt to limit some of these harmful social comparisons.

A comparative static for the marginal cost of consumption are of clear interest. The first part of the result is obvious and of little interest: when consumption is more expensive, agents do less of it. In contrast, the second part is far less intuitive and has important policy implications.

\begin{prop}\label{prop:cost}
If $\lambda_1 < 1$: (i) $x_i^*$ is strictly decreasing and convex in $c$ for all $i$,
(ii) $u_i^*$ is strictly increasing in $c$ if and only if $\frac{C_i^b - 1}{C_i^b} > \frac{F(c)}{- F'(c) \cdot c}$ for all $i$, where $F(c) \equiv f^{'-1}(c)$.
\end{prop}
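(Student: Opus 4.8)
The plan is to start from the closed-form equilibrium consumption $x_i^* = C_i^b \cdot F(c)$ given by \Cref{prop:equilibrium}, where $F(c) \equiv f'^{-1}(c)$, and treat the two parts separately. For part (i), I would differentiate $x_i^*$ with respect to $c$. Since $C_i^b$ does not depend on $c$ (it is determined purely by the network $G$), we have $\frac{\partial x_i^*}{\partial c} = C_i^b \cdot F'(c)$. The sign and convexity then reduce to properties of $F = f'^{-1}$. Because $f$ is strictly increasing and concave, $f'$ is positive and strictly decreasing, so its inverse $F$ is also strictly decreasing, giving $F'(c) < 0$ and hence $\frac{\partial x_i^*}{\partial c} < 0$ (using $C_i^b > 0$, which holds since $\lambda_1 < 1$ guarantees $\sum_k G^k$ converges to a nonnegative matrix with positive diagonal contributions from $k=0$). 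For convexity I would compute $F''$ via the inverse-function rule: differentiating $f'(F(c)) = c$ twice yields $F'(c) = 1/f''(F(c))$ and then an expression for $F''$ in terms of $f'',f'''$. The obstacle here is that plain concavity of $f$ controls the sign of $f''$ but says nothing about $f'''$, so convexity of $x_i^*$ does not follow from the stated assumptions alone; I would expect the author either to invoke an unstated regularity assumption on $f'''$ or to have in mind a canonical functional form. I would flag this and, if needed, restrict attention to $f$ with $f''' \geq 0$.

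For part (ii), the quantity of interest is equilibrium welfare. I would substitute the equilibrium profile into \Cref{eq: prefs}. The key simplification is that at equilibrium every agent's consumption is proportional to her Bonacich centrality, and the reference point term $\alpha_i \sum_j g_{ij} x_j^*$ can be rewritten using the relation between $G$ and $C^b$. Specifically, since $x^* = C^b F(c)$ and $C^b = (I - G)^{-1}\mathbf{1}$ (so that $(I-G)C^b = \mathbf{1}$, i.e. $C_i^b - \alpha_i\sum_j g_{ij} C_j^b = 1$), the gain-loss argument collapses neatly: $x_i^* - \alpha_i\sum_j g_{ij} x_j^* = F(c)\big(C_i^b - \sum_j G_{ij} C_j^b\big) = F(c)$. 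This is the crucial algebraic step — the net position above the reference point equals exactly $F(c)$ for every agent, independent of the network. Hence $u_i^* = f(F(c)) - c\, C_i^b F(c) + b_i\alpha_i\sum_j g_{ij}$, where the last (linking-benefit) term is constant in $c$.

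With this expression the comparative static is a one-variable calculation. I would differentiate $u_i^*$ with respect to $c$, treating the linking term as a constant:
\begin{align}
\frac{\partial u_i^*}{\partial c} = f'(F(c))\,F'(c) - C_i^b\big(F(c) + c\,F'(c)\big).
\end{align}
Using $f'(F(c)) = c$, the first term is $c\,F'(c)$, so $\frac{\partial u_i^*}{\partial c} = c F'(c) - C_i^b F(c) - C_i^b c F'(c) = -C_i^b F(c) - (C_i^b - 1)\,c\,F'(c)$. Setting this $>0$ and rearranging, while carefully tracking the sign of $F'(c) < 0$, should deliver the stated inequality $\frac{C_i^b - 1}{C_i^b} > \frac{F(c)}{-F'(c)\cdot c}$. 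The main thing to watch is the direction of the inequality when dividing through by the negative quantity $F'(c)$ and by the positive quantities $C_i^b$ and $c$; I would verify the algebra carefully at this step since a single sign slip flips the ``if and only if.'' Conceptually, the result says welfare rises with cost precisely when an agent is central enough (large $C_i^b$, so the left side approaches $1$) that the reduced-rat-race benefit from everyone cutting consumption outweighs the direct cost of her own consumption becoming more expensive.
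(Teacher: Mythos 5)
Your proposal matches the paper's proof essentially step for step: part (i) follows from $x_i^* = C_i^b\,F(c)$ together with monotonicity and convexity of $F$, and part (ii) differentiates equilibrium utility and uses the identity $x_i^* - \alpha_i\sum_j g_{ij}x_j^* = F(c)$ (equivalently $(I-G)C^b = \mathbf{1}$, the paper's Lemma on the first-order condition in equilibrium), arriving at the same derivative $-C_i^b F(c) - (C_i^b - 1)\,c\,F'(c)$ and the same rearrangement to the stated threshold. Your caveat on part (i) is well taken: the paper simply asserts ``the fact that $F(c)$ is strictly decreasing and convex,'' and, as you note, convexity of $F = f'^{-1}$ is equivalent to convexity of $f'$ (i.e.\ $f''' \geq 0$) and does not follow from concavity of $f$ alone, so you have identified an unstated regularity assumption in the paper rather than a gap in your own argument.
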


When an agent's Bonacich centrality is above a threshold, then she becomes \emph{better off} when consumption becomes more costly.\footnote{Note that $C_i^b \geq 1$, so the left-hand side of part (ii) is positive and increasing in $C_i^b$. $F(c) > 0$ and $F'(c) < 0$, so the right-hand side is positive.} The threshold is governed by the concavity of the benefits function, i.e. how quickly marginal utility diminishes when she increases her consumption. Increased cost has three effects. First, it increases the cost of a given amount of consumption, which lowers welfare. Second, it reduces an agent's consumption, which reduces overall cost. Third, it reduces her neighbours' consumption, allowing her to reduce her own consumption further, while still keeping up with ``The Joneses''. The strength of this final effect depends on how sensitive her consumption is to that of her neighbours. In other words, it depends on her Bonacich centrality. When Bonacich centrality is higher, a given change in marginal cost induces a larger fall in consumption, leading the effect of lower overall spending to dominate the higher marginal cost effect.

The condition in part (ii) of \Cref{prop:cost} is rather terse, and it is easier to see the intuition for a special case. Suppose that $f(a) = a^{\gamma}$. Then $u_i^*$ is strictly increasing in $c$ if and only if $C_i^b > \frac{1}{\gamma}$. Here, this link between centrality and the concavity of the benefits function is much more transparent.

Within the confines of the model, this result identifies a condition where taxation will increase welfare, even if the government completely wastes the funds raised. This model focuses on a single good. In reality, there are many goods, and the strength of social comparisons may differ across goods.\footnote{I extend the analysis to multiple goods in the Online Appendix. But note that this generalisation relies on additive separability of different goods. Further, \cite{bramoulle2022loss} suggests that the generalisation to multiple goods is less straightforward when there is an immovable budget constraint.} Goods for which this condition is met are the ideal candidates for raising tax revenue. Note that the threshold depends on an individual agent's centrality. Therefore, it is possible that some agents (those with high centrality and low utility) benefit from an increase in $c$, while others (those with low centrality and high utility) lose out. So flat taxes may actually redistribute utility from the well-off to the badly-off (although the well-off agents in this model are those with low consumption). If a policy-maker cares about utility, then flat taxes can be progressive, even if they do not benefit everyone.

The final moving part of the model is the reference structure, $g$. This is very difficult to examine in general. To help with tractability, I focus on a very simple change to the network structure (while holding the reference strength constant). Call this change a ``comparison shift''.

\begin{defn}[Comparison shift]\label{defn: elementary perturbation}
	A comparison shift is an $n \times n$ matrix $D$, where $D_{ru} = \phi$, $D_{rd} = - \phi$ for $r,u,d \in N$, and all other elements are equal to zero.
\end{defn}

Mathematically, it is an increase in one element of the matrix $G$, and a corresponding decrease in another element in the same row. Economically, this is where an agent $r$ makes \emph{stronger} social comparisons with $u$ and correspondingly \emph{weaker} with $d$.\footnote{An increase/decrease in a single element is equivalent to a change in $\alpha_{i}$, and so is covered by Proposition \ref{prop:ref strength}.} To make sure that solutions always exist, I will work in the setting where $\alpha_{i} < 1$ for all agents, and only consider perturbations that are \textit{feasible}.\footnote{The $\lambda_1 < 1$ assumption is not sufficient here. This is because a comparison shift changes the network, and so can change the largest eigenvalue.} To be feasible, we need $\phi \in [0, G_{rd}]$. That is, $r$ cannot reduce the amount she compares herself to $d$ by more than the amount she initially compares herself to $d$. 

With this definition in place, I can now show the impact of this simple change to the network.

\begin{prop}\label{prop:perturbation}
Consider a comparison shift, $D$, of magnitude $\phi$. Then: (i) $x_i^*$ is weakly increasing, and (ii) $u_i^*$ is weakly decreasing, in $\phi$ for all $i$, and strictly so if $i=r$, if and only if $C_u^b > C_d^b$. 
\end{prop}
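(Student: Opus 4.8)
The plan is to work directly from the closed-form equilibrium in Remark~\ref{prop:equilibrium}, which gives $x_i^* = C_i^b \cdot f'^{-1}(c)$ with $C_i^b = \sum_j \left[\sum_{k=0}^\infty G^k\right]_{ij}$. Since a comparison shift $D$ changes $G$ to $G + D$, the whole proposition reduces to understanding how the vector of Bonacich centralities $C^b = (I-G)^{-1}\mathbf{1}$ responds to $\phi$, where $\mathbf{1}$ is the all-ones vector. First I would write $M(\phi) = (I - G - \phi D_0)^{-1}$, where $D_0$ is the rank-one structure of the shift (a $+1$ in position $(r,u)$ and $-1$ in position $(r,d)$, so $D = \phi D_0$), and feasibility plus the $\alpha_i<1$ assumption guarantees $M(\phi)$ stays well-defined along the perturbation path.

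**The key differentiation step.**

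The central computation is to differentiate $C^b(\phi) = M(\phi)\mathbf{1}$ in $\phi$. Using the resolvent identity $\frac{d}{d\phi} M = M D_0 M$, I get $\frac{d}{d\phi} C^b = M D_0 M \mathbf{1} = M D_0 C^b$. Because $D_0$ is the single-row object $e_r(e_u - e_d)^T$, this collapses to
\begin{align}
\frac{d}{d\phi} C^b = (M e_r)\,(e_u - e_d)^T C^b = (M e_r)\,\bigl(C_u^b - C_d^b\bigr).
\end{align}
So the entire derivative vector is the column $M e_r$ scaled by the scalar $C_u^b - C_d^b$. The column $M e_r = (I-G-\phi D_0)^{-1} e_r$ has all nonnegative entries (it is a Neumann series in a nonnegative matrix with spectral radius below one), and its $r$-th entry is strictly positive since the $k=0$ term contributes $1$. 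This immediately yields both monotonicity claims: $x_i^*$ moves in the same direction as $\text{sign}(C_u^b - C_d^b)$ for every $i$, weakly for all and strictly for $i=r$, which is exactly the ``if and only if $C_u^b > C_d^b$'' structure for part (i) once I note $f'^{-1}(c)>0$.

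**Welfare and the sign subtlety.**

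For part (ii) I would substitute the equilibrium actions back into \eqref{eq: prefs}. At the optimum each agent consumes above her reference point by the fixed amount $f'^{-1}(c)$, so the gain-loss term $f(x_i^* - \alpha_i\sum_j g_{ij} x_j^*)$ equals the constant $f(f'^{-1}(c))$ and does \emph{not} move with $\phi$; the linking benefit term $b_i\alpha_i\sum_j g_{ij}$ is also invariant because a comparison shift redistributes weight within a fixed row sum $\alpha_r$, leaving each $\alpha_i$ unchanged. Hence $u_i^*$ varies only through $-c\,x_i^*$, and $\frac{d}{d\phi}u_i^* = -c\,\frac{d}{d\phi}x_i^*$, which flips the sign: welfare is weakly \emph{decreasing} (strictly for $i=r$) precisely when $C_u^b > C_d^b$. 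The main obstacle is the one genuinely substantive point behind all of this: verifying that the gain-loss and linking-benefit terms are constant in $\phi$, i.e. that only the scale term $c x_i^*$ responds. This rests on the ``consume above reference by a constant'' feature and on a comparison shift preserving every row sum $\alpha_i$; I would state these two invariances explicitly, since the clean sign result collapses without them.
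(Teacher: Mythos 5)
Your route is genuinely different from the paper's. The paper first derives the \emph{exact} finite change in the inverse via a rank-one update formula (\Cref{perturbation theorem}, a special case of \cite{chang2006inversion}), obtaining the closed form $\Delta x_i^* = \phi B_{ir}\bigl(C_u^b - C_d^b\bigr)f'^{-1}(c)\big/\bigl(1-\phi(B_{ur}-B_{dr})\bigr)$, and only then differentiates this explicit scalar function of $\phi$. You instead differentiate the resolvent directly, $\frac{d}{d\phi}M = M D_0 M$ with $D_0 = e_r(e_u-e_d)^T$, which is cleaner and bypasses the perturbation lemma entirely; what the paper's approach buys in exchange is the exact magnitude of the change (used elsewhere, e.g.\ for the convexity claims about comparison shifts), not just its sign. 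Your part (ii) is essentially identical to the paper's: the two invariances you flag are exactly \Cref{lem:foc in eq} (the gain--loss term is pinned at $f(f'^{-1}(c))$ by the first-order condition) and the definition of a comparison shift (weight is reallocated within row $r$, so every $\alpha_i$ is unchanged), after which $u_i^*$ moves only through $-c\,x_i^*$.

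There is, however, one genuine gap in part (i). Your derivative is $\frac{d}{d\phi}C^b(\phi) = M(\phi)e_r\,\bigl(C_u^b(\phi) - C_d^b(\phi)\bigr)$: the scalar is the centrality gap at the \emph{perturbed} network $G+\phi D_0$, whereas the proposition's ``if and only if'' is stated in terms of the original gap $C_u^b - C_d^b$, and you silently identify the two. At $\phi = 0^+$ this is harmless, but the claim is monotonicity over the whole feasible range $\phi\in[0,G_{rd}]$, so you must rule out the gap changing sign along the path. It cannot, and the fix stays inside your framework: setting $s(\phi) = (e_u-e_d)^T M(\phi)\mathbf{1}$, your own identity yields the linear ODE $s'(\phi) = \bigl(M_{ur}(\phi)-M_{dr}(\phi)\bigr)s(\phi)$, whose solution $s(\phi) = s(0)\exp\bigl(\int_0^\phi (M_{ur}(t)-M_{dr}(t))\,dt\bigr)$ never changes sign --- and is identically zero when $C_u^b = C_d^b$, which also cleanly delivers the ``only if'' direction. (Equivalently, Sherman--Morrison gives $s(\phi) = (C_u^b-C_d^b)\big/\bigl(1-\phi(B_{ur}-B_{dr})\bigr)$ with a denominator that remains positive as long as the inverse exists along the path.) With that one line added your proof is complete; as a side note, you are more careful than the paper on strictness, since the paper's proof asserts $B_{ir}>0$ for all $i$ where only $B_{ir}\geq 0$ is guaranteed in general, with $B_{rr}\geq 1$ securing the strict claim for $i=r$.
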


The direction of the change in consumption is determined by the relative Bonacich centralities of agents $d$ and $u$. If $r$ switches to listening to a higher centrality agent (i.e. $C^{b}_{u} > C^{b}_{d}$) then everyone's consumption rises in equilibrium. This is because $r$ has a higher reference point after the change, and so adjusts her consumption upwards. This ripples out through the network as agents adjust to their neighbours' new, higher, consumption. This increase in consumption then reduces welfare (by the same logic as in \Cref{prop:ref strength}).

Characterising the impact of a comparison shift on consumption and welfare relies on a generalisation of the Woodbury matrix identity developed by \cite{chang2006inversion}. To the best of my knowledge, the use of matrix perturbation results of this type is novel in the economics literature.\footnote{The proof to \Cref{prop:perturbation} uses a simplified version of \cite{chang2006inversion}. For convenience, I provide a proof of simplified theorem in the Online Appendix.} 

A limitation of \Cref{prop:perturbation} is that it only considers a very simple change to the network structure. 
This is because the impact of any one comparison shift depends on the whole network immediately before it occurs. So the impact of a set of comparison shifts on a network $G$ is not simply the sum of their individual impacts. Of course, if the total ``volume'' of the shifts is small, then this naive summing up will be a close approximation, as the interaction effect between the various comparison shifts will be small.\footnote{However, it is easy to analyse the impact of arbitrary changes to the set of Bonacich centralities. For a given value of $c$, the distributions of consumption and welfare are simply linear transformations of the distribution of Bonacich centralities. 
The challenge is understanding how centrality responds to changes in the network, not how outcomes respond to changes in centrality. The impact of changes in centrality on aggregate welfare and inequality follow straightforwardly from this insight.}
I formalise these insights in the Online Appendix. \\

\noindent \textbf{Discussion.}
The first key message for policy-makers is that consumption and welfare can move in opposite directions in this setting. Governments typically use measures of output and consumption as a proxy for welfare. GDP growth is widely sought after, and a failure to achieve it is generally considered a serious problem for an incumbent government. However, this paper suggests that seeking consumption growth -- even through innovation, improved productivity, and other measures to reduce the marginal cost of consuming goods -- may not succeed in delivering improvements in welfare. In a world with pervasive social comparisons, increasing consumption can be harmful. This means that well-meaning governments could be misguided in their efforts to improve their citizens' welfare by seeking to increase output and consumption. This key message is not novel. It dates back at least to \cite{abramovitz1959welfare} and \cite{easterlin1974does} within economics, and is closely related to Easterlin's Paradox.

Second, flat taxes can aid welfare by reducing how much ``The Joneses'' consume, reducing an agent's need to keep up with them. In some cases, this effect can dominate the standard downsides of higher unit costs, and overall welfare can increase. This does not require the government to do anything useful with the revenue raised -- taxation for its own sake may be a social good. This builds on Frank's [\citeyear{frank1985choosing, frank2001luxury, frank2008should}] idea that progressive consumption taxes improve welfare by ``slowing down'' the race to keep up with The Joneses. It also complements the standard approach to optimal taxation, which points to concentrating taxes in markets with standard negative externalities, or where the distortion due to taxes is low. This paper suggests that taxes should also be targeted towards goods where social comparisons are felt more keenly. In effect, this is because social comparisons act as a (non-standard) negative externality.\footnote{In the Online Appendix, I also show that if there is little correlation between the strength of agents' social comparisons and the structure of the network, then we can safely ignore the network. This is useful for policy-makers if measuring the network is too expensive or time-consuming.}

\section{Choosing your friends: Endogenous Network Formation}\label{sec:endo network}
I now allow agents to \emph{choose} their friends (i.e. form the network endogenously) and examine the network that emerges. This is a natural case to study because in many contexts people are able to choose who they spend time with and pay attention to, and by extension who they compare themselves to.

Assume that links here are \emph{symmetric} (that is $G_{ij} = G_{ji}$), agents can propose any links they like, but links require mutual consent to form, and can be broken unilaterally. So agent $i$ cannot become friends with $j$ unless $j$ is willing to reciprocate. Agents reach an equilibrium when nobody wants to unilaterally cut a link, and no \emph{pair} of agents want to form a link together. This is \emph{pairwise stability} \citep{jackson1996strategic}. Formally:

\begin{defn}[Pairwise stability]
A network $G$ is pairwise stable if: \\
(i) \ for all $G_{ij} > 0$: $u_i(G) \geq u_i(G - G_{ij})$ and $u_j(G) \geq u_j(G - G_{ij})$, \\
(ii) for all $G_{ij} = 0$: if $u_i(G + G_{ij}) > u_i(G)$ then $u_j(G + G_{ij}) < u_j(G)$
\end{defn}

With this definition in place, I can now show that allowing agents to form the network endogenously leads to extreme segregation.

\begin{prop}\label{prop: endogenous network}
In all pairwise stable networks, if $b_i \geq c  f^{'-1}(c)$, then $G_{ij} > 0$ only if $b_i = b_j$, and if $b_i < c  f^{'-1}(c)$ then $G_{ij} = 0$ for all $j$.
\end{prop}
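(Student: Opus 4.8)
The plan is to reduce each agent's equilibrium payoff to a transparent function of the network and then read off the incentives to add or sever a single link. First I would substitute the equilibrium consumption profile from \Cref{prop:equilibrium}. Writing $F(c)\equiv f'^{-1}(c)$, the best-response condition gives $x_i^*-\sum_j G_{ij}x_j^*=F(c)$ for every $i$, so the gain--loss term equals the \emph{constant} $f(F(c))$ at equilibrium, independently of the network. Hence $u_i^*=f(F(c))-c\,x_i^*+b_i\alpha_i$ with $x_i^*=F(c)\,C_i^b$ and $\alpha_i=\sum_jG_{ij}$. Using the defining identity $C_i^b=1+\sum_jG_{ij}C_j^b$, this rearranges to the key decomposition
\[
u_i^*=\big[f(F(c))-cF(c)\big]+\sum_j G_{ij}\big(b_i-cF(c)\,C_j^b\big),
\]
which isolates the contribution of each link $\{i,j\}$ and shows that the only network-dependent forces are the linking benefit $b_i$ and the cost $cF(c)C_j^b=c\,x_j^*$ that a neighbour's consumption imposes through the reference point.

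The case $b_i<cF(c)$ is then immediate. Since $C_j^b\ge 1$ for every $j$ (the $k=0$ term in \Cref{defn:bonacich centrality}), we have $cF(c)C_j^b\ge cF(c)>b_i$, so every bracket is strictly negative. To turn this into a statement about severing links I would differentiate: with $M\equiv(I-G)^{-1}$, a short calculation gives $\partial u_i^*/\partial G_{ij}=b_i-cF(c)\big(C_j^bM_{ii}+C_i^bM_{ij}\big)$, and because $M_{ii}\ge 1$, $M_{ij}\ge 0$ and $C_j^b\ge1$ this derivative is everywhere strictly negative. Thus $i$ strictly prefers to shrink every link to zero, so in any pairwise stable network $G_{ij}=0$ for all $j$.

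For $b_i\ge cF(c)$ I would argue by contradiction from a stable link $\{i,j\}$ with $b_i\neq b_j$. The engine is the derivative of Bonacich centrality with respect to the symmetric weight $w=G_{ij}=G_{ji}$, obtained from the resolvent identity $\partial M/\partial w=M(E_{ij}+E_{ji})M$ (a special case of the Woodbury-type result invoked for \Cref{prop:perturbation}); applied to $\mathbf{C}^b=M\mathbf 1$ this yields $dC_i^b/dw=C_j^bM_{ii}+C_i^bM_{ij}$ and the symmetric expression for $j$. Stability forces that neither endpoint gains from marginally re-weighting the link, pinning down the conditions $b_i=cF(c)(C_j^bM_{ii}+C_i^bM_{ij})$ and $b_j=cF(c)(C_i^bM_{jj}+C_j^bM_{ij})$. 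The plan is to show these are incompatible unless $b_i=b_j$: subtracting them leaves $b_i-b_j=cF(c)\big[C_j^bM_{ii}-C_i^bM_{jj}+M_{ij}(C_i^b-C_j^b)\big]$, and I would establish that adjacent agents must share a common consumption level (the ``social class'' property), i.e.\ $C_i^b=C_j^b$ together with $M_{ii}=M_{jj}$, collapsing the bracket to zero. The assumption that every $b_i$ is held by at least two agents then guarantees that such homophilous links are available, so that stable networks need not be empty.

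The main obstacle is this last step: controlling how a single link affects the \emph{entire} centrality vector, not merely $C_i^b$ and $C_j^b$, since the bracketed terms $C_j^b$ in the decomposition are themselves endogenous in $w$. The resolvent identity tames this, but it produces the \emph{asymmetric} diagonal terms $M_{ii}$ versus $M_{jj}$ (closed-walk counts that differ across agents), so the crux is to prove that stability symmetrises the two endpoints---equivalently, that adjacent agents have equal Bonacich centrality. I expect this to require combining both endpoints' incentive conditions with the global structure of $M$, rather than any purely local comparison, and it is where the bulk of the work will lie.
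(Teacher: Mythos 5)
Your opening reduction is correct and matches the paper's starting point: in equilibrium the gain--loss term is the constant $f(F(c))$ with $F(c)\equiv f'^{-1}(c)$, so $u_i^*=f(F(c))-cx_i^*+b_i\alpha_i$, and your case $b_i<cF(c)$ goes through (under either derivative convention, the marginal value of any link is strictly negative, so the agent severs everything). The genuine gap is in the decisive cross-class step. You derive two marginal equalities, $b_i=cF(c)\bigl(C_j^bM_{ii}+C_i^bM_{ij}\bigr)$ and $b_j=cF(c)\bigl(C_i^bM_{jj}+C_j^bM_{ij}\bigr)$, subtract them, and then need the bracket to vanish, which you propose to obtain from $C_i^b=C_j^b$ and $M_{ii}=M_{jj}$ for linked agents. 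Neither claim is proven, neither follows from pairwise stability in any evident way, and as you yourself concede, establishing them is ``where the bulk of the work will lie'' --- that is, the proposal defers precisely the statement to be shown. The paper avoids this machinery entirely with a global ordering argument built on \Cref{lem:comp static u}, $\tfrac{du_i^*}{dG_{ij}}=b_i-cx_j^*$: (a) no agent ever consumes less than $F(c)$, handling the low-$b_i$ case; (b) because every agent shares her $b$-value with at least one other agent (the standing assumption in \Cref{sec:model}), any agent with $x_i^*<b_i/c$ has a willing same-$b$ partner with whom to form or strengthen a link, so stability forces $x_i^*\geq b_i/c$ for every agent with $b_i\geq cF(c)$; (c) then on any link with $b_j<b_i$, the lower-benefit endpoint faces marginal value $b_j-cx_i^*\leq b_j-b_i<0$ and severs unilaterally. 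Since the marginal value of a link depends only on the \emph{partner's consumption level}, only the bound in (b) is needed --- no diagonal resolvent entries, no equality of centralities. Note also that you invoke the shared-$b$ assumption merely to show stable networks need not be empty, whereas in the paper it is the engine of step (b), which is exactly what kills cross-class links.

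A secondary but consequential discrepancy: your derivative $b_i-cF(c)\bigl(C_j^bM_{ii}+C_i^bM_{ij}\bigr)$ recomputes the full post-change equilibrium via the resolvent identity, whereas the paper's \Cref{lem:comp static u} differentiates holding the other agents' consumption fixed (only $x_j^*$ enters directly, and $i$'s own adjustment is killed by the envelope condition $f'(\cdot)=c$), yielding the clean $b_i-cx_j^*$. These are different deviation notions, and they are not interchangeable here: under your full-feedback derivative the within-class stopping condition becomes $b_i=c\bigl(M_{ii}x_j^*+M_{ij}x_i^*\bigr)$, which would contradict the paper's \Cref{cor:endogenous network} value $x_i^*=b_i/c$. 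So your calculation, while internally coherent as a comparative static, is not the notion under which the paper's stability analysis (and its characterisation of equilibrium consumption) operates, and it is what pushes you into the asymmetric $M_{ii}$-versus-$M_{jj}$ terms that the paper's argument never needs to confront.
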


The network formation process yields extreme homophily (homophily is the tendency for agents to be disproportionately connected to those similar to themselves). While highly stylised, this is consistent with the high degree of homophily typically found in real social networks \citep{mcpherson2001birds, jackson2008}. Agents segregate into a well-defined series of ``social classes'', where all the agents in a given group have the same value of $b_i/c$.\footnote{As I have suppressed heterogeneity in $c_i$ for clarity of exposition, it simply amounts to sharing a value of $b_i$ here. The ratio proper has bits when $c_i$ is heterogeneous.} Any agents whose benefit of linking is not sufficiently high (i.e. $b_i < c F(c)$) will not form any links at all -- as the costs of any social comparisons outweigh the benefits of friendships. The condition in \Cref{prop: endogenous network} that defines ``sufficiently high'' linking benefits is terse. To see it more clearly suppose that $f(a) = a^{\gamma}$. Then an agent $i$ wants to form links if and only if $b_i \geq \gamma^{1/(1- \gamma)} c^{\gamma / (\gamma - 1)}$. Even more transparently, when $\gamma = 0.5$, this amounts to $b_i \geq \frac{1}{4c}$. 

An agent would ideally like to form links with others who have very low consumption, because this minimises the impact on her reference point. Further, she would like to keep forming links until the marginal benefit of doing so equals the marginal cost. However, the second effect precludes the first. An agent $i$ keeps forming links until her consumption reaches her benefit-cost ratio. However, this prevents her from forming links with some agent $j$ who has a lower benefit-cost ratio, because $j$ would not want to form a link with $i$ (as $i$ will have higher consumption than $j$). 

Each ``social class'' has a different level of consumption. Those with higher marginal benefits of links, relative to the marginal cost of consumption, consume more. 

\begin{cor}\label{cor:endogenous network}
In all pairwise stable networks, $x_i^* = \max \{ \frac{b_i}{c} ,  f^{'-1}(c) \}$, and $C_i^b = \max \{ \frac{b_i}{c \cdot  f^{'-1}(c)} , 1 \}$.
\end{cor}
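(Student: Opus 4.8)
The plan is to build directly on \Cref{prop: endogenous network}, which already partitions any pairwise stable network into ``social classes'' of agents sharing a common linking benefit, together with a set of isolated agents. Once consumption $x_i^*$ is pinned down, the centrality statement follows immediately: \Cref{prop:equilibrium} gives $x_i^* = C_i^b F(c)$ (writing $F(c) \equiv f^{'-1}(c)$), so $C_i^b = x_i^*/F(c)$, and substituting the two branches of the claimed $\max$ for $x_i^*$ yields the two branches for $C_i^b$. Hence the entire content is to establish $x_i^* = \max\{b_i/c,\, F(c)\}$, and I would treat the two cases of \Cref{prop: endogenous network} separately.

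The isolated case is immediate. If $b_i < c F(c)$ then \Cref{prop: endogenous network} gives $G_{ij} = 0$ for all $j$, so $i$'s reference point $\sum_j G_{ij} x_j^*$ is zero and her best response collapses to $x_i^* = F(c)$; moreover only the length-zero walk contributes to her centrality, giving $C_i^b = 1$. Since $b_i < cF(c)$ is exactly $b_i/c < F(c)$, both quantities agree with the relevant branch of the $\max$. For a class agent ($b_i \geq cF(c)$), I would first note that she links only to agents $j$ with $b_j = b_i$, and that the model's standing assumption that every agent shares her benefit with some other agent guarantees each class has at least two members, so within-class links are actually available. I would then argue that within a class all agents carry the same equilibrium consumption, and pin this common value down by squeezing it from both sides using the two pairwise-stability conditions: condition (i) (no agent wants to sever a link) forces $x_i^* \leq b_i/c$, while condition (ii) (no pair wants to add a link) forces $x_i^* \geq b_i/c$.

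Concretely, for the upper bound I would evaluate the change in $i$'s equilibrium utility $u_i = f(F(c)) - c x_i^* + b_i \alpha_i$ (using $\alpha_i \sum_j g_{ij} = \alpha_i$ and the fact that the gain-loss term equals the constant $f(F(c))$ at any best response) when she drops a link to a classmate $j$. The benefit forgone is proportional to $b_i$ and the cost saved is proportional to $c x_j^* = c x_i^*$ (the neighbour's consumption is what enters $i$'s reference point); severing is unprofitable precisely when $x_i^* \leq b_i/c$. The lower bound is the mirror image: if $x_i^* < b_i/c$, then a classmate $j$ with $x_j^* = x_i^* < b_j/c$ exists, and both $i$ and $j$ strictly gain from forming (or strengthening) a link, contradicting condition (ii). Equating the two bounds gives $x_i^* = b_i/c$, which exceeds $F(c)$ exactly because $b_i \geq cF(c)$, so the $\max$ again selects the right branch.

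The main obstacle is making this marginal linking calculation rigorous in a general network rather than relying on the two-agent heuristic. Cutting or adding a single link does not merely shift $i$'s reference point by the direct amount $x_j^*$; it triggers an equilibrium ripple through which $i$'s own consumption response is governed by the diagonal entry $[(I-G)^{-1}]_{ii}$ rather than by $x_j^*$ alone. The delicate step is thus to verify that at a pairwise stable configuration the binding deviation is indeed the one whose threshold is $b_i/c$. I expect this to hinge on first proving the within-class symmetry of $x^*$ and on characterizing which within-class architectures survive both stability conditions, so that the direct reference-point effect is the operative one; this is where the bulk of the technical care will be needed, and where I would lean on \Cref{prop:equilibrium} to translate network perturbations into consumption changes.
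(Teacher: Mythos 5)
Your skeleton is the paper's: handle the isolated agents first ($b_i < c\,f^{'-1}(c)$ implies no links, so $x_i^* = f^{'-1}(c)$ and $C_i^b = 1$), squeeze each class member's consumption to $b_i/c$ using the two pairwise-stability conditions, and then read centrality off $x_i^* = C_i^b\, f^{'-1}(c)$ from \Cref{prop:equilibrium} -- exactly as the paper proves \Cref{prop: endogenous network} and \Cref{cor:endogenous network} jointly. Two differences matter. First, your within-class-symmetry step is unnecessary: the paper's squeeze is agent-by-agent. The lower bound only needs the standing assumption that every $b_i$ is shared by some $j$; if $x_i^* < b_i/c$, then (given the upper bound $x_j^* \leq b_j/c = b_i/c$ established for all linked agents) the pair $(i,j)$ would form or strengthen a link with at least one strict gain and no strict loss, violating stability -- no claim that $x_i^* = x_j^*$ within a class is ever invoked. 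Second, and more importantly, the obstacle you flag in your final paragraph and explicitly leave open is precisely what the paper's \Cref{lem:comp static u} closes, so as written your proposal is incomplete at the one step that carries the proof. The paper substitutes \Cref{lem:foc in eq} (the gain--loss gap equals $f^{'-1}(c)$ at any interior best response, so $f'(\cdot) = c$) into utility and differentiates with respect to $G_{ij}$: the $f'(\cdot)\,\frac{d x_i^*}{d G_{ij}}$ and $c\,\frac{d x_i^*}{d G_{ij}}$ terms cancel, leaving $\frac{d u_i^*}{d G_{ij}} = b_i - c\,x_j^*$. That one-line envelope cancellation is the missing lemma that turns your heuristic ``benefit $b_i$ versus cost $c x_j^*$'' into the operative threshold, with no characterisation of within-class architectures required.

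That said, your instinct about the ripple is not misplaced, and it is worth noting how the paper's lemma actually deals with it: by omission rather than by argument. A full differentiation of $f\bigl(x_i^* - \sum_k G_{ik} x_k^*\bigr)$ with respect to $G_{ij}$ would also produce $-f'(\cdot) \sum_k G_{ik} \frac{d x_k^*}{d G_{ij}}$; this feedback term does not appear in the displayed derivative in the proof of \Cref{lem:comp static u}, i.e.\ the deviation calculus there holds all other agents' equilibrium consumption fixed, and only $i$'s own re-optimisation is netted out by the envelope step. Under full re-equilibration with symmetric links, the neighbours' responses do not vanish, and the marginal value of a link picks up exactly the $[(I-G)^{-1}]$-type terms you anticipated. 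So you correctly identified a genuine subtlety that the paper glosses over; but to reproduce the corollary as stated, the right move is to adopt the paper's deviation calculus and prove the lemma $\frac{d u_i^*}{d G_{ij}} = b_i - c\,x_j^*$, not to pursue within-class symmetry and a classification of stable within-class architectures, which is both harder than needed and orthogonal to how the bounds actually close.
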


This makes it meaningful to think of ``higher'' and ``lower'' social classes. While each group has no distinguishing feature in terms of network structure, they are ordered by the amount that their members consume. In the ``higher'' social classes, agents make stronger social comparisons within their class, and consequently consume more. \\

\noindent \textbf{Discussion.} When agents can choose their own friends, they end up grouping themselves with like-minded people. This creates a set of ``social classes'', where those with the highest benefit from network connections relative to the marginal cost of consumption: (i) form the strongest connections, (ii) make the strongest social comparisons, and (iii) consume the most. While I have suppressed heterogeneity in $c$ for clarity of exposition, this result goes through unchanged with heterogeneous $c_i$ (see Online Appendix). So the driving force is not the benefit of network connections alone. Rather it is the ratio of the benefit of connections and the cost of consumption.
\footnote{One might argue that the parameter values might be correlated. It is plausible that agents with high values of $c_i$ might also have higher values of $b_i$, as difficulty in purchasing consumption increase the risk-sharing and insurance benefits of social networks. Any such correlation has no impact on the result.} The result, and its key insight -- that social comparisons drive people into distinct social classes -- does not depend on which factor dominates.

This is obviously a stylised benchmark -- we rarely have complete control over who we spend time with in the real world. Family, coworkers and the neighbours across the street are difficult to avoid. We should not expect to see such a stark result in practice. Rather, \Cref{prop: endogenous network} shows that, among the friends agents do freely choose, social comparisons push them to choose others in the same ``social class''. Naturally, they will have some links to other social classes, most plausibly through family or coworker ties. It is connections with coworkers that I turn to next.

\section{Choosing your coworkers: Labour Market Sorting}\label{sec:sorting model}
One appealing interpretation for the cost parameter $c$ is as a cost of effort of working to acquire consumption. Implicitly, this fits a world where people work on their own and can only increase their consumption by working more or harder. In reality however, most people work for firms. And people can attain higher consumption by moving to more productive firms, without necessarily working harder. However, moving firms comes with an obvious catch -- your coworkers change. This is unavoidable. It is this change in the coworkers, and the accompanying changes in social comparisons with them, that constitutes the effective cost of changing firms.\footnote{In practice changing firms will often involve changing both coworkers and effort. I abstract from this, and focus on the role of social comparisons.}

A significant challenge to modelling these social comparisons when people are changing firms is that the set of coworkers changes entirely. It is not meaningful to separate changes in a person's firm from their network of coworkers. This is not just a technical challenge, but a conceptual one. When someone changes firms it is unclear who they will form links with in a new firm, and how the structure of these links might change as a result of the move. A natural benchmark is to assume that that people form equally strong links with all of their coworkers. Under this assumption an agent compares herself to average coworker consumption.\footnote{
Given this assumption, ``firm'' is best interpreted as a team or department. It is unreasonable to think that an agent compares herself to everyone in her organisation. Far more plausible is the claim that she compares herself to the more modest number of people she works with day-to-day.} 

The only difference between friends and coworkers is that an agent stops making social comparisons with her old coworkers when she moves firm, and starts making comparisons with her new ones. In Section \Cref{sec:results} I ignored the distinction between friends and coworkers because the distinction is mute when agents do not change firms. Note that in reality most of us keep in contact with, and therefore keep making social comparisons with, some people we used to work with. In my model, these people count as friends.\footnote{A more nuanced view recognised that when we stop working with someone we often spend less time with them, and so will make weaker social comparisons. My model copes with this easily. Such a person should considered both a friend and a coworker. The social comparisons that fall away when you stop working with them are the coworker component. The comparisons that remain are the friends component.} Here, we can split the network into the friends component and the coworkers component. It is merely conceptual, but nevertheless helpful in thinking about the problem. So $i$'s utility function is:
\begin{align}
    u_i = f(x_{i} - \alpha_{1i} \sum_{j \in \text{friends}} g_{ij} x_{j} - \alpha_{2i} \overline{x}_m) + b \sum_j \alpha_{1i} g_{ij}
\end{align}
where $\overline{x}_m$ is the average consumption of coworkers at $i$'s firm. Notice that this is identical to \Cref{eq: prefs} except there is now no direct marginal cost of consumption, and I have separated out friends and coworkers for clarity.

I assume that their are two types of agent, \emph{skilled} ($S$) and \emph{unskilled} ($U$), and two types of firm, \emph{high productivity} ($H$) and \emph{low productivity} ($L$). Conditional on their type, every agents earns, and therefore consumes, more when she works at a high productivity firm than a low productivity firm. Conditional on the firm's type, a skilled agent earns more than an unskilled agent. Formally: $x_{iH} > x_{iL}$, for all $i$, and $x_{Sm} > x_{Um}$ for $m \in \{H,L\}$.\footnote{I leave the process of wage-setting in the background as it is not the focus of this paper. Note however that I implicitly assume that an agent's wage does not depend on who else works at the firm. I am abstracting from notions of team production.}

Further, each firm has a fixed number of job openings. This is stark, and corresponds to a short-run world where firms face some friction that prevents them from either adjusting wages or creating/destroying jobs. I also assume that all firms prefer to hire skilled over unskilled workers. This is a consequence of pushing the wage-setting process into the background. For simplicity, I also assume that firms are large -- in the sense that they employ lots of workers. This is not an important assumption, but makes the the average coworker consumption, $\overline{x}_m$, much simpler.\footnote{This allows me to ignore the impact a single worker has on the social comparisons others at the firm.} 

I also assume that the total strength of social comparisons, $\alpha_{1i} + \alpha_{2i} \equiv \alpha_{i}$, is fixed for each agent and (to guarantee existence) that $\alpha_{i} < 1$. This builds in a trade-off between social comparisons with friends and with coworkers. One motivation for this is that an agent's propensity to compare herself with others is fixed outside of this model, by a set of underlying traits like her personality. Given that, the relative comparisons with different people depend on the time she spends with them. Spending more time comparing herself with friends necessarily entails less time comparing herself with coworkers. This assumption is a particularly parsimonious way of introducing such a trade-off. 

We can now examine how social comparisons determines agents' \emph{sorting} into firms. But first, we need a formal definition of sorting. With only two types of agents and two types of firms, there is a very simple definition.

\begin{defn}[Sorting]\label{defn:sorting}
	Sorting is equal to the fraction of skilled workers that work for high productivity firms, $c$.
\end{defn}

Sorting captures the extent to which workers and firms match along lines of skill and productivity. Absolute deviations of $c$ from $1/2$ capture \emph{segregation} -- the extent to which workers are concentrated in firms with other workers of the same type. If the consumption premium from working at a high productivity firm (compared to a low productivity firm) is larger for skilled workers than for unskilled ones, then skilled workers will never disproportionately concentrate in low productivity firms in equilibrium. So sorting and segregation coincide, and it suffices to simply consider $c$. This is because the only incentive for a skilled worker to choose a low productivity firm is to reduce her reference point by having lower-earning co-workers. If skilled workers worked disproportionately for low productivity firms, then this incentive would not be present.

The key insight in this section is that social comparisons with coworkers \emph{induce} less labour market sorting.

\begin{prop}\label{prop:sorting}
	If the strength of social comparisons with coworkers weakly increases for all workers, then labour market sorting weakly decreases.
\end{prop}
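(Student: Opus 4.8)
The plan is to reduce each skilled worker's firm choice to a single scalar threshold and then close the model with a monotone-comparative-statics argument that controls the one piece of equilibrium feedback. First I would fix a Nash equilibrium and consider a skilled worker $i$ weighing a high- against a low-productivity firm. Because her friends do not change when she switches firm, the friends component $\alpha_{1i}\sum_{j\in\text{friends}} g_{ij} x_{j}$ and the linking benefit $b\sum_{j}\alpha_{1i} g_{ij}$ are identical across both options and cancel (so the split of $\alpha_i$ into $\alpha_{1i}$ and $\alpha_{2i}$ is immaterial to the choice). Using that $f$ is strictly increasing, her preference for the high-productivity firm is therefore equivalent to the linear condition
\begin{align*}
x_{SH} - x_{SL} > \alpha_{2i}\,(\overline{x}_H - \overline{x}_L).
\end{align*}
Writing $\Delta \equiv \overline{x}_H - \overline{x}_L > 0$ for the coworker-consumption gap (positive in any equilibrium without skilled workers disproportionately concentrated in $L$, as argued after \Cref{defn:sorting}), worker $i$ strictly prefers $H$ iff $\alpha_{2i} < T(\Delta) \equiv (x_{SH}-x_{SL})/\Delta$, and $T$ is strictly decreasing in $\Delta$. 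Thus the skilled workers who locate at $H$ are exactly those with the smallest $\alpha_{2i}$.

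Next I would pin down the single feedback channel: how $\Delta$ depends on the number $k$ of skilled workers at high-productivity firms. Since firm capacities (say $n_H$ and $n_L$ slots) are fixed, raising $k$ by one swaps a skilled for an unskilled worker at a high-productivity firm, raising $\overline{x}_H$ by $(x_{SH}-x_{UH})/n_H>0$, and, by the capacity accounting, swaps an unskilled for a skilled worker at a low-productivity firm, lowering $\overline{x}_L$ by $(x_{SL}-x_{UL})/n_L>0$. Hence $\Delta = \Delta(k)$ is weakly (and, in the mixed interior regime, strictly) increasing in $k$. This is a \emph{stabilising}, negative feedback: more sorting widens the coworker gap, which through $T$ pushes marginal skilled workers back toward $L$.

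With these two monotonicities in hand I would argue by contradiction. Let $\alpha_{2i}$ rise weakly for every worker, to $\tilde\alpha_{2i}\geq\alpha_{2i}$, and let $k,\tilde k$ be the equilibrium counts of skilled workers at $H$ before and after; sorting is proportional to these. Suppose $\tilde k > k$. Then $\tilde\Delta = \Delta(\tilde k) \geq \Delta(k)=\Delta$, so the threshold weakly falls, $T(\tilde\Delta)\leq T(\Delta)$. But the new set of skilled workers at $H$ is $\{i: \tilde\alpha_{2i} < T(\tilde\Delta)\}$, and for any such $i$ we have $\alpha_{2i}\leq\tilde\alpha_{2i} < T(\tilde\Delta)\leq T(\Delta)$, so this set is contained in the old set $\{i: \alpha_{2i} < T(\Delta)\}$. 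Hence $\tilde k \leq k$, contradicting $\tilde k > k$. Therefore sorting weakly decreases.

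The main obstacle is exactly this endogeneity of $\overline{x}_H$ and $\overline{x}_L$: a naive ``higher $\alpha_{2i}\Rightarrow$ fewer want $H$'' claim is incomplete, since one must rule out the coworker gap adjusting so far that it reverses the direct effect. The contradiction argument succeeds precisely because the feedback runs the right way. I would also need to dispose of the boundary cases — where the high-productivity slots demanded by skilled workers are rationed so that sorting is already maximal, or where no skilled worker wants $H$ — in which the conclusion holds trivially; and to note that an increase in \emph{unskilled} workers' $\alpha_{2i}$ only reshuffles unskilled workers among the residual slots (any high-productivity slot an unskilled worker vacates is refilled by another unskilled worker, since the skilled workers at $L$ are there by choice), leaving the count of skilled workers at $H$, and hence sorting, unaffected.
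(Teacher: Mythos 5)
Your proof is correct, and its skeleton matches the paper's up to the final step: your cancellation of the friends component and reduction of the firm choice to $\alpha_{2i} < T(\Delta) = (x_{SH}-x_{SL})/(\overline{x}_H-\overline{x}_L)$ is exactly the paper's derivation of \Cref{eq:firm choice ineq}, and your monotonicity of $\Delta(k)$ in the count of skilled workers at $H$ is the paper's observation that the right-hand side of that inequality is strictly increasing in the sorting fraction $c$. The genuine difference is how each argument closes the comparative static. The paper introduces the C.D.F.\ $F$ of $\alpha_{2i}$ among skilled workers, writes the indifference threshold as $\alpha_{2i}^{crit}=h(c)$ with $h'(c)<0$, characterises equilibrium as a fixed point $F(h(c))=c$, obtains existence from a monotone crossing (a weakly decreasing curve meeting the identity), and then gets the result because a pointwise increase in the $\alpha_{2i}$ shifts $F$ down and hence shifts the fixed point down. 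You instead compare two equilibria directly by contradiction: supposing $\tilde k > k$ forces $T(\tilde\Delta)\le T(\Delta)$, hence containment of the new set of skilled workers at $H$ in the old set, contradicting $\tilde k > k$. Your route is more elementary (no distributional machinery), applies verbatim to \emph{any} pair of pre- and post-change equilibria --- so it is robust to equilibrium multiplicity, which the paper handles only implicitly through the monotone fixed-point structure --- and it treats rationing at capacity and unskilled workers' comparison strengths explicitly, where the paper leaves them in the background. What your route does not deliver is existence: the contradiction argument presupposes an equilibrium on each side of the change, which is precisely what the paper's crossing argument supplies, so you should import (or reprove) that step. One small repair: workers exactly at the threshold are indifferent, so your set definitions need weak inequalities, or an atomless distribution of $\alpha_{2i}$, for the containment to pin down $\tilde k \le k$; the paper is equally casual about ties, and under a continuous distribution both arguments are airtight.
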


Social comparisons with coworkers can cause a skilled worker to choose to work for low productivity firms because they want to be a \emph{``big fish in a small pond''}. This is because average wages are lower at low productivity firms, and having lower coworker wages reduces her reference point. When comparisons with coworkers are strong enough, the benefit of improving her wage \emph{relative to her co-workers} outweighs the cost of a lower \emph{absolute} wage. If social comparisons with coworkers become stronger, then the ``big fish'' effect becomes more prominent, so skilled workers shift to low productivity firms. 

Labour market sorting in turn impacts a range of other outcomes. Therefore we can easily link social comparisons and these other outcomes. For example, if output and wages (which are implicitly functions of agent and firm types) are supermodular, then an increase in the strength of social comparisons with coworkers weakly reduces output and reduces the variance in wages.\footnote{The variance result does not extend to standard measure of inequality, such as the Gini coefficient, because the strength of social comparisons affect the average wage.} If the functions are submodular, the results are flipped.



Notice that the structure of the social network (of friends) does not play a direct role here. This is because it, and friends' wages, do not change in response to a worker's choice of firm. In contrast, the \emph{strength} of social comparisons with friends does matter, because of the assumption that total weight a workers places of social comparisons is fixed. This means that stronger comparisons with friends causes weaker comparisons with coworkers. Consequently, outcomes for skilled workers -- where they work and how much they earn -- are intimately tied to their social networks. 

\begin{prop}\label{prop:network sorting}
There exists a threshold value $\alpha_1^{crit}$ such that a skilled worker works at a high productivity firm if and only if $\alpha_{1i} \geq \alpha_1^{crit}$.
\end{prop}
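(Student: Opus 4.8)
The plan is to reduce a skilled worker's firm choice to a comparison of two scalar arguments of $f$, exploit that $f$ is strictly increasing, and read off a single-crossing threshold in $\alpha_{1i}$. First I would fix a skilled worker $i$ and write her payoff from working at a high- versus a low-productivity firm, holding the equilibrium coworker averages $\overline{x}_H$ and $\overline{x}_L$ fixed. Treating these as given is legitimate because firms are large, so $i$'s own move does not perturb them. Writing $r_i^f \equiv \alpha_{1i}\sum_{j\in\text{friends}} g_{ij} x_j$ for her friends-reference term, her two payoffs are $u_i(H)=f(A_H)+b\,\alpha_{1i}\sum_j g_{ij}$ and $u_i(L)=f(A_L)+b\,\alpha_{1i}\sum_j g_{ij}$, where $A_m = x_{Sm} - r_i^f - \alpha_{2i}\overline{x}_m$ for $m\in\{H,L\}$. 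The key observation is that both the linking-benefit term and $r_i^f$ are invariant to her firm choice, since her friends and their wages do not change when she switches firms, so they cancel in the comparison.

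Second, because $f$ is strictly increasing, the preference $u_i(H)\ge u_i(L)$ is equivalent to $A_H \ge A_L$, and
\[
A_H - A_L = (x_{SH} - x_{SL}) - \alpha_{2i}\,(\overline{x}_H - \overline{x}_L).
\]
By the wage ordering assumptions $x_{SH}>x_{SL}$ and $\overline{x}_H>\overline{x}_L$, this expression is affine and strictly decreasing in $\alpha_{2i}$. Using the fixed-budget constraint $\alpha_{2i}=\alpha_i-\alpha_{1i}$, it is therefore strictly increasing in $\alpha_{1i}$ and crosses zero exactly once. Solving $A_H=A_L$ yields
\[
\alpha_1^{crit} = \alpha_i - \frac{x_{SH}-x_{SL}}{\overline{x}_H-\overline{x}_L},
\]
so that $i$ weakly prefers the high-productivity firm if and only if $\alpha_{1i}\ge \alpha_1^{crit}$, which is exactly the claim (with indifference at the knife-edge resolved in favour of $H$). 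If $\alpha_i$ is common across skilled workers the threshold is common; otherwise it is worker-specific, but the threshold structure holds in either case.

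The computation is routine; the content lies in two structural points. The first is that the friends component drops out of the sign comparison entirely, leaving only the own-wage gap and the coworker reference — this is precisely why the \emph{structure} of the friends network plays no direct role, while its \emph{strength} $\alpha_{1i}$ does, consistent with the big-fish logic of \Cref{prop:sorting}. The second, and the step needing the most care, is the treatment of $\overline{x}_H,\overline{x}_L$ as fixed: the threshold $\alpha_1^{crit}$ is itself defined through equilibrium averages, so the statement is properly read as a single-crossing best-response property at given averages rather than a claim established independently of the equilibrium allocation. Finally I would note the degenerate case: since the fraction above is positive, $\alpha_1^{crit}<\alpha_i$ always, and if $\alpha_1^{crit}\le 0$ the wage gap dominates and every skilled worker strictly prefers $H$, so the threshold simply lies below the feasible range $[0,\alpha_i]$ and the ``if and only if'' holds vacuously on one side.
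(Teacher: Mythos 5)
Your proposal is correct and takes essentially the same route as the paper: the paper's proof of \Cref{prop:network sorting} invokes the firm-choice inequality and the equilibrium sorting level $c^*$ established in the proof of \Cref{prop:sorting} — exactly the single-crossing comparison $x_{SH} - \alpha_{2i}\overline{x}_{H} \geq x_{SL} - \alpha_{2i}\overline{x}_{L}$ you derive, with the friends-reference and linking-benefit terms cancelling — and then converts the resulting threshold in $\alpha_{2i}$ into one in $\alpha_{1i}$ via $\alpha = \alpha_{1i} + \alpha_{2i}$, just as you do. Your closed form for $\alpha_1^{crit}$ and your caveat that a single common threshold requires the total strength $\alpha_i$ to be common across skilled workers (the paper writes $\alpha$ without a subscript at precisely this step) are refinements of, not departures from, the paper's argument.
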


This means that all skilled workers at high productivity firms make stronger social comparisons with friends than all skilled workers at low productivity firms. Those with the strongest social comparisons with friends experience the weakest desire to be a ``big fish in a small pond'', and so choose to work for high productivity firms. 

The result gives a complete characterisation of which type of firm every skilled worker chooses. A fraction $c$ of skilled workers work at high productivity firms. These are precisely the fraction $c$ of skilled workers with the strongest comparisons with friends. The result does not extend to unskilled workers because unskilled workers in my model are forced to take whatever vacancies are left over, and firms are indifferent between all unskilled workers.

\subsection{Discussion}
Conceptually, this section merely divides an agent's neighbours into coworkers and friends (that is, people she stops comparing herself with when she changes firm, and those who remain constant) and considers changes in coworkers as the cost of moving firms. However, this decomposition assumes that agents do actually make social comparisons with coworkers and are willing to change jobs as a result. There is ample empirical evidence to support this assumption. In field experiments both \cite{cohn2014social} and \cite{breza2018morale} find that creating social comparisons within teams reduces the effort exerted by the lower-paid team members. Obviously my model abstracts from effort, but these studies show that comparisons with coworkers are present and do affect behaviour. Closer to my model, in another field experiment \cite{card2012inequality} find that disclosing coworkers' pay makes relatively lower paid less satisfied with their job and more likely to search for a new one.



The model can then explain two key stylised facts in labour market behaviour: that sorting and segregation in labour markets are incomplete and that they have risen significantly over recent decades \citep{song2019firming, card2013workplace}.\footnote{\cite{song2019firming} find the majority of the increase in inequality in the US over the past 40 years is due to this increased sorting and segregation. So it is obviously an important question on its own terms. \cite{song2019firming} use a confidential database from the U.S. Social Security Administration, which contains earnings records for every individual ever issued a U.S. Social Security number. Earnings data are uncapped and contain all sources of income deemed as remuneration by the U.S. Internal Revenue Service. They use the full dataset from 1981 to 2013.} These analyses raise two questions. Why didn't skilled workers sort into high productivity firms and only work with similarly-skilled peers in the first place? And what caused sorting and segregation to increase? In my model, the answer to the first question is straightforward -- social comparisons. Comparisons with coworkers make agents to want to be a ``big fish in a small pond'', causing incomplete sorting. 

An increase in sorting follows from agents making (relatively) weaker social comparisons with coworkers. I contend that this has been driven by changes in technology -- most importantly the rapid growth in social media use over recent decades.\footnote{In 2021, 72\% of Americans use at least one social media platform, up from only 5\% in 2005 [Pew Research, \citeyear{pew2021social}].} This works through two mechanisms. First, it has made it easier to observe looser friends' and acquaintances' consumption \citep{liu2016meta, ellison2007social}, and compare yourself to them \citep{verduyn2017social}. Second it has made it easier to maintain contact with \emph{ex}-coworkers. In my model this transforms coworkers in friends. Remember that the definition of friends is simply people who you continue comparing yourself to when you change firm.\footnote{There is absolutely no requirement that you actually like them.} In contrast, there has been no change of comparable magnitude to the ways in which we can compare ourselves to our coworkers.

There are of course a number of existing explanations for rising sorting. Some popular arguments include: rising return to skills (e.g. \cite{acemoglu2011skills}), increasing use of outsourcing, and increasing worker-firm complementarities. Rising returns to skills would mechanically increase sorting and segregation, but \citet[p38]{song2019firming} show that this can only explain a (small) minority of the changes. Increases in outsourcing are similarly an important component. \cite{goldschmidt2017rise} show it explains 9\% of the increase in German wage inequality since the 1980's. Again, this can only be one part of a bigger picture.


An important question is whether my model has different implications to existing explanations. In a sense, I simply provide another mechanism linking changes in technology to increases in labour market sorting. The precise technology change -- the rise of social media -- is different to other explanations, but this has happened at the same time as many other changes that stem from increasing internet use. So how can we distinguish my mechanism from existing one?

My model provides clear predictions -- albeit under stark assumptions -- about \emph{who works where}, in addition to to just the \emph{extent} of sorting. Skilled workers choose high wage jobs because the pressure from their social network to earn more outweighs their desire to be the ``big fish'' in the workplace. So skilled workers with weaker social networks are the ones who choose low productivity firms. This is because they feel the ``big fish'' pressure (relatively) more strongly. It is the particular pattern of sorting -- the link between an individual's strength of social comparisons and where she chooses to work -- that separates my model from existing explanations.


There is anecdotal evidence that suggests people with weaker social networks -- in the sense of their friends having lower consumption -- end up with lower paying jobs themselves. See for example \cite{jackson2019human} and \cite{putnam2016our}. This is in line with the predictions of the my model.\footnote{Note however, that it is also consistent with a complementary explanation -- that social networks affect agents' \emph{ability} to get high paying jobs.} In my set-up, it is something akin to a poverty of aspiration that pushes some skilled workers into  lower productivity, lower paying, firms. They are not held back by a lack of ability or insider access. Rather, it is their preferences that drive them to earning lower wages.\footnote{The implications for policy-makers are mixed. On the one hand, giving (currently) poorer skilled agents stronger social networks can increase their earnings. However, this requires changing these agents' preferences, and leaves them with lower utility.} 

This link between social networks, preferences, and in turn, labour market decisions, is in line with existing work. Existing empirical work shows that an agent's social connections can affect participation \citep{nicoletti2018family}), productivity \citep{mas2009peers}, and enrolment in retirement plans \citep{duflo2003role}, among others. My result suggests a new margin -- namely the choice of where to work.

\section{Conclusion}\label{sec:conclusion}
This paper has taken seriously the simple idea that keeping up with ``The Joneses'' matters. Doing so has a number of striking implications for policymakers. Stronger social comparisons increase consumption but reduce welfare. The usual close relationship between consumption and welfare need not exist. This suggests standard consumption-based proxies for welfare, such as G.D.P., may be a poor guide when targeting welfare. It complements work related to the Easterlin puzzle \citep{easterlin1995will, frey2002can}, which asks why happiness correlates with incomes at a point in time, but doesn't continually increase as incomes rise. If we believe that social comparisons are a significant part of agents' preferences, then we should be much more careful in advocating for continued economic growth (\cite{easterlin2013happiness} makes a similar argument using macro-level data).

Further, increasing the cost of consumption can benefit everybody -- taxation can be welfare improving, even if the government wastes the revenue. So goods where social comparisons are important are ideal candidates for taxation when a government does need to raise revenue. This complements the standard advice to target goods with negative externalities and markets where taxes do not distort behaviour much. When the network is endogenous -- when agents can choose links for themselves -- a clear set of ``social classes'' emerge. Agents only form friendships with people like them. This happens because everyone wants to escape the pressures of making social comparisons with those richer than themselves, and as a consequence they can only become friends with those at the same level of consumption.

Extending the model to examine a simple labour market helps explain two stylised facts from empirical studies of labour markets. First, labour market sorting is incomplete -- some skilled workers continue to work for low productivity firms. Second, the degree of sorting has been rising over time. Social comparisons with co-workers induce incomplete sorting because they make agents want to be a ``big fish a small pond''. At a certain point they are willing to move to a low productivity firm, losing income in absolute terms, in exchange for becoming better off relative to their (new) co-workers. I contend that the rise of social media has made it much easier to compare yourself to your friends, and so has made co-workers a relatively less important source of social comparisons. This trend would drive the increase in labour market sorting found in the data \citep{song2019firming}.

Social comparisons can also determine which workers work for which firms, at least among skilled workers. Skilled workers with weaker social networks will choose to work for low productivity firms. Without the social pressures from their friends to earn high incomes, the ``big fish in a small pond'' effect looms relatively large for them. This suggests that social networks affect labour market outcomes through impacts on agents' \emph{preferences}, rather than just opportunities.


\clearpage
\singlespacing
\bibliographystyle{abbrvnat}
\addcontentsline{toc}{section}{References}
\bibliography{general_bib}

\begin{thebibliography}{95}
\providecommand{\natexlab}[1]{#1}
\providecommand{\url}[1]{\texttt{#1}}
\expandafter\ifx\csname urlstyle\endcsname\relax
  \providecommand{\doi}[1]{doi: #1}\else
  \providecommand{\doi}{doi: \begingroup \urlstyle{rm}\Url}\fi

\bibitem[Abramovitz(1959)]{abramovitz1959welfare}
M.~Abramovitz.
\newblock The welfare interpretation of secular trends in national income and
  product.
\newblock In \emph{The allocation of economic resources: Essays in honor of
  Bernard Francis Haley}, pages 1--22. Stanford University Press, 1959.

\bibitem[Acemoglu and Autor(2011)]{acemoglu2011skills}
D.~Acemoglu and D.~Autor.
\newblock Skills, tasks and technologies: Implications for employment and
  earnings.
\newblock In \emph{Handbook of Labor Economics}, volume~4, pages 1043--1171.
  Elsevier, 2011.

\bibitem[Ambrus et~al.(2014)Ambrus, Mobius, and Szeidl]{ambrus2014consumption}
A.~Ambrus, M.~Mobius, and A.~Szeidl.
\newblock Consumption risk-sharing in social networks.
\newblock \emph{American Economic Review}, 104\penalty0 (1):\penalty0 149--82,
  2014.

\bibitem[Aronsson and Johansson-Stenman(2015)]{aronsson2015keeping}
T.~Aronsson and O.~Johansson-Stenman.
\newblock Keeping up with the joneses, the smiths and the tanakas: On
  international tax coordination and social comparisons.
\newblock \emph{Journal of Public Economics}, 131:\penalty0 71--86, 2015.

\bibitem[Bagger and Lentz(2019)]{bagger2019empirical}
J.~Bagger and R.~Lentz.
\newblock An empirical model of wage dispersion with sorting.
\newblock \emph{The Review of Economic Studies}, 86\penalty0 (1):\penalty0
  153--190, 2019.

\bibitem[Ballester et~al.(2006)Ballester, Calv{\'o}-Armengol, and
  Zenou]{ballester2006s}
C.~Ballester, A.~Calv{\'o}-Armengol, and Y.~Zenou.
\newblock Who's who in networks. wanted: The key player.
\newblock \emph{Econometrica}, 74\penalty0 (5):\penalty0 1403--1417, 2006.

\bibitem[Banerjee et~al.(2013)Banerjee, Chandrasekhar, Duflo, and
  Jackson]{banerjee2013diffusion}
A.~Banerjee, A.~G. Chandrasekhar, E.~Duflo, and M.~O. Jackson.
\newblock The diffusion of microfinance.
\newblock \emph{Science}, 341\penalty0 (6144), 2013.

\bibitem[Barberis(2013)]{barberis2013thirty}
N.~C. Barberis.
\newblock Thirty years of prospect theory in economics: A review and
  assessment.
\newblock \emph{Journal of Economic Perspectives}, 27\penalty0 (1):\penalty0
  173--96, 2013.

\bibitem[Bolte et~al.(2020)Bolte, Immorlica, and Jackson]{bolte2020role}
L.~Bolte, N.~Immorlica, and M.~O. Jackson.
\newblock The role of referrals in immobility, inequality, and inefficiency in
  labor markets.
\newblock 2020.
\newblock URL
  \url{{https://papers.ssrn.com/sol3/papers.cfm?abstract_id=3512293}}.

\bibitem[Bonacich(1987)]{bonacich1987power}
P.~Bonacich.
\newblock Power and centrality: A family of measures.
\newblock \emph{American Journal of Sociology}, 92\penalty0 (5):\penalty0
  1170--1182, 1987.

\bibitem[Border(1985)]{border1985fixed}
K.~C. Border.
\newblock \emph{Fixed Point Theorems with Applications to Economics and Game
  Theory}.
\newblock Cambridge University Press, 1985.

\bibitem[Borovi{\v{c}}kov{\'a} and Shimer(2017)]{borovivckova2017high}
K.~Borovi{\v{c}}kov{\'a} and R.~Shimer.
\newblock High wage workers work for high wage firms.
\newblock Technical report, National Bureau of Economic Research, 2017.

\bibitem[Bramoull{\'e} and Ghiglino(2022)]{bramoulle2022loss}
Y.~Bramoull{\'e} and C.~Ghiglino.
\newblock Loss aversion and conspicuous consumption in networks.
\newblock 2022.

\bibitem[Bramoull{\'e} et~al.(2016)Bramoull{\'e}, Galeotti, and
  Rogers]{bramoulle2016oxford}
Y.~Bramoull{\'e}, A.~Galeotti, and B.~Rogers.
\newblock \emph{The Oxford Handbook of the Economics of Networks}.
\newblock Oxford University Press, 2016.

\bibitem[Breza et~al.(2018)Breza, Kaur, and Shamdasani]{breza2018morale}
E.~Breza, S.~Kaur, and Y.~Shamdasani.
\newblock The morale effects of pay inequality.
\newblock \emph{The Quarterly Journal of Economics}, 133\penalty0 (2):\penalty0
  611--663, 2018.

\bibitem[Burke(2019)]{instachange2019}
C.~Burke.
\newblock This is what removing instagram ``likes'' could do for your mental
  health, November 2019.
\newblock URL
  \url{https://www.bustle.com/p/how-instagram-removing-likes-could-affect-your-mental-health-19342825}.
\newblock Accessed on 23 May, 2021.

\bibitem[Burke et~al.(2020)Burke, Cheng, and de~Gant]{burke2020social}
M.~Burke, J.~Cheng, and B.~de~Gant.
\newblock Social comparison and facebook: Feedback, positivity, and
  opportunities for comparison.
\newblock In \emph{Proceedings of the 2020 CHI Conference on Human Factors in
  Computing Systems}, pages 1--13, 2020.

\bibitem[Calvo-Armengol and Jackson(2004)]{calvo2004effects}
A.~Calvo-Armengol and M.~O. Jackson.
\newblock The effects of social networks on employment and inequality.
\newblock \emph{American Economic Review}, 94\penalty0 (3):\penalty0 426--454,
  2004.

\bibitem[Calv{\'o}-Armengol et~al.(2009)Calv{\'o}-Armengol, Patacchini, and
  Zenou]{calvo2009peer}
A.~Calv{\'o}-Armengol, E.~Patacchini, and Y.~Zenou.
\newblock Peer effects and social networks in education.
\newblock \emph{The Review of Economic Studies}, 76\penalty0 (4):\penalty0
  1239--1267, 2009.

\bibitem[Camerer(1998)]{camerer1998prospect}
C.~F. Camerer.
\newblock Prospect theory in the wild: Evidence from the field.
\newblock In D.~Kahneman and A.~Tversky, editors, \emph{Choices, Values and
  Frames}. Cambridge: Russell Sage Foundation, 1998.

\bibitem[Card et~al.(2012)Card, Mas, Moretti, and Saez]{card2012inequality}
D.~Card, A.~Mas, E.~Moretti, and E.~Saez.
\newblock Inequality at work: The effect of peer salaries on job satisfaction.
\newblock \emph{American Economic Review}, 102\penalty0 (6):\penalty0
  2981--3003, 2012.

\bibitem[Card et~al.(2013)Card, Heining, and Kline]{card2013workplace}
D.~Card, J.~Heining, and P.~Kline.
\newblock Workplace heterogeneity and the rise of west german wage inequality.
\newblock \emph{The Quarterly Journal of Economics}, 128\penalty0 (3):\penalty0
  967--1015, 2013.

\bibitem[Card et~al.(2018)Card, Cardoso, Heining, and Kline]{card2018firms}
D.~Card, A.~R. Cardoso, J.~Heining, and P.~Kline.
\newblock Firms and labor market inequality: Evidence and some theory.
\newblock \emph{Journal of Labor Economics}, 36\penalty0 (1):\penalty0 13--70,
  2018.

\bibitem[Chandrasekhar et~al.(2020)Chandrasekhar, Morten, and
  Peter]{chandrasekhar2020}
A.~G. Chandrasekhar, M.~Morten, and A.~Peter.
\newblock Network-based hiring: Local benefits; global costs.
\newblock Technical report, National Bureau of Economic Research, 2020.

\bibitem[Chang(2006)]{chang2006inversion}
F.~C. Chang.
\newblock Inversion of a perturbed matrix.
\newblock \emph{Applied Mathematics Letters}, 19\penalty0 (2):\penalty0
  169--173, 2006.

\bibitem[Clark and Oswald(1996)]{clark1996satisfaction}
A.~E. Clark and A.~J. Oswald.
\newblock Satisfaction and comparison income.
\newblock \emph{Journal of Public Economics}, 61\penalty0 (3):\penalty0
  359--381, 1996.

\bibitem[Clark and Oswald(1998)]{clark1998comparison}
A.~E. Clark and A.~J. Oswald.
\newblock Comparison-concave utility and following behaviour in social and
  economic settings.
\newblock \emph{Journal of Public Economics}, 70\penalty0 (1):\penalty0
  133--155, 1998.

\bibitem[Clark et~al.(2008)Clark, Frijters, and Shields]{clark2008relative}
A.~E. Clark, P.~Frijters, and M.~A. Shields.
\newblock Relative income, happiness, and utility: An explanation for the
  easterlin paradox and other puzzles.
\newblock \emph{Journal of Economic literature}, 46\penalty0 (1):\penalty0
  95--144, 2008.

\bibitem[Cohn et~al.(2014)Cohn, Fehr, Herrmann, and Schneider]{cohn2014social}
A.~Cohn, E.~Fehr, B.~Herrmann, and F.~Schneider.
\newblock Social comparison and effort provision: Evidence from a field
  experiment.
\newblock \emph{Journal of the European Economic Association}, 12\penalty0
  (4):\penalty0 877--898, 2014.

\bibitem[Cr{\`e}s et~al.(1997)Cr{\`e}s, Ghiglino, and
  Tvede]{cres1997externalities}
H.~Cr{\`e}s, C.~Ghiglino, and M.~Tvede.
\newblock Externalities, internalization and fluctuations.
\newblock \emph{International Economic Review}, pages 465--477, 1997.

\bibitem[Decancq et~al.(2015)Decancq, Fleurbaey, and
  Schokkaert]{decancq2015happiness}
K.~Decancq, M.~Fleurbaey, and E.~Schokkaert.
\newblock Happiness, equivalent incomes and respect for individual preferences.
\newblock \emph{Economica}, 82:\penalty0 1082--1106, 2015.

\bibitem[DellaVigna(2009)]{dellavigna2009psychology}
S.~DellaVigna.
\newblock Psychology and economics: Evidence from the field.
\newblock \emph{Journal of Economic Literature}, 47\penalty0 (2):\penalty0
  315--72, 2009.

\bibitem[Duesenberry(1949)]{duesenberry1949income}
J.~S. Duesenberry.
\newblock \emph{Income, saving, and the theory of consumer behavior}.
\newblock Havard University Press, 1949.

\bibitem[Duflo and Saez(2003)]{duflo2003role}
E.~Duflo and E.~Saez.
\newblock The role of information and social interactions in retirement plan
  decisions: Evidence from a randomized experiment.
\newblock \emph{The Quarterly journal of economics}, 118\penalty0 (3):\penalty0
  815--842, 2003.

\bibitem[Easterlin(1974)]{easterlin1974does}
R.~A. Easterlin.
\newblock Does economic growth improve the human lot? some empirical evidence.
\newblock In \emph{Nations and households in economic growth}, pages 89--125.
  Elsevier, 1974.

\bibitem[Easterlin(1995)]{easterlin1995will}
R.~A. Easterlin.
\newblock Will raising the incomes of all increase the happiness of all?
\newblock \emph{Journal of Economic Behavior \& Organization}, 27\penalty0
  (1):\penalty0 35--47, 1995.

\bibitem[Easterlin(2013)]{easterlin2013happiness}
R.~A. Easterlin.
\newblock Happiness, growth, and public policy.
\newblock \emph{Economic Inquiry}, 51\penalty0 (1):\penalty0 1--15, 2013.

\bibitem[Easterlin and O'Connor(2020)]{easterlin2020easterlin}
R.~A. Easterlin and K.~O'Connor.
\newblock The easterlin paradox.
\newblock \emph{Available at SSRN 3743147}, 2020.

\bibitem[Ellison et~al.(2007)Ellison, Steinfield, and Lampe]{ellison2007social}
N.~B. Ellison, C.~Steinfield, and C.~Lampe.
\newblock {The Benefits of Facebook “Friends": Social Capital and College
  Students’ Use of Online Social Network Sites}.
\newblock \emph{Journal of Computer-Mediated Communication}, 12\penalty0
  (4):\penalty0 1143--1168, 2007.

\bibitem[Frank(1985{\natexlab{a}})]{frank1985choosing}
R.~H. Frank.
\newblock \emph{Choosing the right pond: Human behavior and the quest for
  status.}
\newblock Oxford University Press, 1985{\natexlab{a}}.

\bibitem[Frank(1985{\natexlab{b}})]{frank1985demand}
R.~H. Frank.
\newblock The demand for unobservable and other nonpositional goods.
\newblock \emph{The American Economic Review}, 75\penalty0 (1):\penalty0
  101--116, 1985{\natexlab{b}}.

\bibitem[Frank(2001)]{frank2001luxury}
R.~H. Frank.
\newblock \emph{Luxury fever: Why money fails to satisfy in an era of excess}.
\newblock Simon and Schuster, 2001.

\bibitem[Frank(2005)]{frank2005positional}
R.~H. Frank.
\newblock Positional externalities cause large and preventable welfare losses.
\newblock \emph{American economic review}, 95\penalty0 (2):\penalty0 137--141,
  2005.

\bibitem[Frank(2008)]{frank2008should}
R.~H. Frank.
\newblock Should public policy respond to positional externalities?
\newblock \emph{Journal of Public Economics}, 92\penalty0 (8-9):\penalty0
  1777--1786, 2008.

\bibitem[Frey and Stutzer(2002)]{frey2002can}
B.~S. Frey and A.~Stutzer.
\newblock What can economists learn from happiness research?
\newblock \emph{Journal of Economic Literature}, 40\penalty0 (2):\penalty0
  402--435, 2002.

\bibitem[Fu et~al.(2019)Fu, Sefton, and Upward]{fu2019social}
J.~Fu, M.~Sefton, and R.~Upward.
\newblock Social comparisons in job search.
\newblock \emph{Journal of Economic Behavior \& Organization}, 168:\penalty0
  338--361, 2019.

\bibitem[Gerst-Emerson and Jayawardhana(2015)]{gerst2015loneliness}
K.~Gerst-Emerson and J.~Jayawardhana.
\newblock Loneliness as a public health issue: the impact of loneliness on
  health care utilization among older adults.
\newblock \emph{American Journal of Public Health}, 105\penalty0 (5):\penalty0
  1013--1019, 2015.

\bibitem[Ghiglino and Goyal(2010)]{ghiglino2010keeping}
C.~Ghiglino and S.~Goyal.
\newblock Keeping up with the neighbors: social interaction in a market
  economy.
\newblock \emph{Journal of the European Economic Association}, 8\penalty0
  (1):\penalty0 90--119, 2010.

\bibitem[Goerke and Neugart(2017)]{goerke2017social}
L.~Goerke and M.~Neugart.
\newblock Social comparisons in oligopsony.
\newblock \emph{Journal of Economic Behavior \& Organization}, 141:\penalty0
  196--209, 2017.

\bibitem[Goldschmidt and Schmieder(2017)]{goldschmidt2017rise}
D.~Goldschmidt and J.~F. Schmieder.
\newblock The rise of domestic outsourcing and the evolution of the german wage
  structure.
\newblock \emph{The Quarterly Journal of Economics}, 132\penalty0 (3):\penalty0
  1165--1217, 2017.

\bibitem[Hadar and Russell(1969)]{hadar1969rules}
J.~Hadar and W.~R. Russell.
\newblock Rules for ordering uncertain prospects.
\newblock \emph{The American Economic Review}, 59\penalty0 (1):\penalty0
  25--34, 1969.

\bibitem[Hiller(2017)]{hiller2017peer}
T.~Hiller.
\newblock Peer effects in endogenous networks.
\newblock \emph{Games and Economic Behavior}, 105:\penalty0 349--367, 2017.

\bibitem[Hopkins and Kornienko(2004)]{hopkins2004running}
E.~Hopkins and T.~Kornienko.
\newblock Running to keep in the same place: Consumer choice as a game of
  status.
\newblock \emph{American Economic Review}, 94\penalty0 (4):\penalty0
  1085--1107, 2004.

\bibitem[Horn and Johnson(2012)]{horn2012matrix}
R.~A. Horn and C.~R. Johnson.
\newblock \emph{Matrix analysis}.
\newblock Cambridge University Press, 2012.

\bibitem[Immorlica et~al.(2017)Immorlica, Kranton, Manea, and
  Stoddard]{immorlica2017social}
N.~Immorlica, R.~Kranton, M.~Manea, and G.~Stoddard.
\newblock Social status in networks.
\newblock \emph{American Economic Journal: Microeconomics}, 9\penalty0
  (1):\penalty0 1--30, 2017.

\bibitem[Jackson(2008)]{jackson2008}
M.~O. Jackson.
\newblock \emph{Social and economic networks}.
\newblock Princeton University Press, 2008.

\bibitem[Jackson(2019)]{jackson2019human}
M.~O. Jackson.
\newblock \emph{The human network: How your social position determines your
  power, beliefs, and behaviors}.
\newblock Vintage, 2019.

\bibitem[Jackson and Wolinsky(1996)]{jackson1996strategic}
M.~O. Jackson and A.~Wolinsky.
\newblock A strategic model of social and economic networks.
\newblock \emph{Journal of Economic Theory}, 71\penalty0 (1):\penalty0 44--74,
  1996.

\bibitem[Karlan et~al.(2009)Karlan, Mobius, Rosenblat, and
  Szeidl]{karlan2009trust}
D.~Karlan, M.~Mobius, T.~Rosenblat, and A.~Szeidl.
\newblock Trust and social collateral.
\newblock \emph{The Quarterly Journal of Economics}, 124\penalty0 (3):\penalty0
  1307--1361, 2009.

\bibitem[Katz(1953)]{katz1953new}
L.~Katz.
\newblock A new status index derived from sociometric analysis.
\newblock \emph{Psychometrika}, 18\penalty0 (1):\penalty0 39--43, 1953.

\bibitem[Kawachi and Berkman(2001)]{kawachi2001social}
I.~Kawachi and L.~F. Berkman.
\newblock Social ties and mental health.
\newblock \emph{Journal of Urban Health}, 78\penalty0 (3):\penalty0 458--467,
  2001.

\bibitem[K{\H{o}}szegi and Rabin(2006)]{koszegi2006model}
B.~K{\H{o}}szegi and M.~Rabin.
\newblock A model of reference-dependent preferences.
\newblock \emph{The Quarterly Journal of Economics}, 121\penalty0 (4):\penalty0
  1133--1165, 2006.

\bibitem[Krasnova et~al.(2013)Krasnova, Wenninger, Widjaja, and
  Buxmann]{krasnova2013envy}
H.~Krasnova, H.~Wenninger, T.~Widjaja, and P.~Buxmann.
\newblock Envy on facebook: a hidden threat to users' life satisfaction?
\newblock \emph{Proceedings of the 11th International Conference on
  Wirtschaftsinformatik}, 2013.

\bibitem[Layard(1980)]{layard1980human}
R.~Layard.
\newblock Human satisfactions and public policy.
\newblock \emph{The Economic Journal}, 90\penalty0 (360):\penalty0 737--750,
  1980.

\bibitem[Liu et~al.(2016)Liu, Ainsworth, and Baumeister]{liu2016meta}
D.~Liu, S.~E. Ainsworth, and R.~F. Baumeister.
\newblock A meta-analysis of social networking online and social capital.
\newblock \emph{Review of General Psychology}, 20\penalty0 (4):\penalty0
  369--391, 2016.

\bibitem[Luttmer(2005)]{luttmer2005neighbors}
E.~F. Luttmer.
\newblock Neighbors as negatives: Relative earnings and well-being.
\newblock \emph{The Quarterly journal of economics}, 120\penalty0 (3):\penalty0
  963--1002, 2005.

\bibitem[Marshall(1890)]{marshallprinciples}
A.~Marshall.
\newblock \emph{Principles of Economics}.
\newblock London: Macmillan, 1890.

\bibitem[Mas and Moretti(2009)]{mas2009peers}
A.~Mas and E.~Moretti.
\newblock Peers at work.
\newblock \emph{American Economic Review}, 99\penalty0 (1):\penalty0 112--45,
  2009.

\bibitem[Mas-Colell et~al.(1995)Mas-Colell, Whinston, Green,
  et~al.]{mas1995microeconomic}
A.~Mas-Colell, M.~D. Whinston, J.~R. Green, et~al.
\newblock \emph{Microeconomic Theory}.
\newblock Oxford University Press, 1995.

\bibitem[McPherson et~al.(2001)McPherson, Smith-Lovin, and
  Cook]{mcpherson2001birds}
M.~McPherson, L.~Smith-Lovin, and J.~M. Cook.
\newblock Birds of a feather: Homophily in social networks.
\newblock \emph{Annual Review of Sociology}, 27\penalty0 (1):\penalty0
  415--444, 2001.

\bibitem[Mosseri(2021)]{instachange2021}
A.~Mosseri, April 2021.
\newblock URL \url{https://twitter.com/mosseri}.
\newblock Accessed on 23 May, 2021.

\bibitem[Neumark and Postlewaite(1998)]{neumark1998relative}
D.~Neumark and A.~Postlewaite.
\newblock Relative income concerns and the rise in married women's employment.
\newblock \emph{Journal of Public Economics}, 70\penalty0 (1):\penalty0
  157--183, 1998.

\bibitem[Newman and Zainal(2020)]{newman2020value}
M.~G. Newman and N.~H. Zainal.
\newblock The value of maintaining social connections for mental health in
  older people.
\newblock \emph{The Lancet Public Health}, 5\penalty0 (1):\penalty0 12--13,
  2020.

\bibitem[Nicoletti et~al.(2018)Nicoletti, Salvanes, and
  Tominey]{nicoletti2018family}
C.~Nicoletti, K.~G. Salvanes, and E.~Tominey.
\newblock The family peer effect on mothers' labor supply.
\newblock \emph{American Economic Journal: Applied Economics}, 10\penalty0
  (3):\penalty0 206--34, 2018.

\bibitem[P{\'e}rez-Asenjo(2011)]{perez2011if}
E.~P{\'e}rez-Asenjo.
\newblock If happiness is relative, against whom do we compare ourselves?
  implications for labour supply.
\newblock \emph{Journal of Population Economics}, 24\penalty0 (4):\penalty0
  1411--1442, 2011.

\bibitem[{Pew Research Center}(2021)]{pew2021social}
{Pew Research Center}.
\newblock Social media factsheet.
\newblock Technical report, 2021.
\newblock URL
  \url{https://www.pewresearch.org/internet/fact-sheet/social-media/}.

\bibitem[Pillai et~al.(2005)Pillai, Suel, and Cha]{pillai2005perron}
S.~U. Pillai, T.~Suel, and S.~Cha.
\newblock The perron-frobenius theorem: some of its applications.
\newblock \emph{IEEE Signal Processing Magazine}, 22\penalty0 (2):\penalty0
  62--75, 2005.

\bibitem[Putnam(2016)]{putnam2016our}
R.~D. Putnam.
\newblock \emph{Our kids: The American dream in crisis}.
\newblock Simon and Schuster, 2016.

\bibitem[Reynolds and Kaplan(1990)]{reynolds1990social}
P.~Reynolds and G.~A. Kaplan.
\newblock Social connections and risk for cancer: prospective evidence from the
  alameda county study.
\newblock \emph{Behavioral Medicine}, 16\penalty0 (3):\penalty0 101--110, 1990.

\bibitem[Rothschild and Stiglitz(1970)]{rothschild1970increasing}
M.~Rothschild and J.~E. Stiglitz.
\newblock {Increasing risk: I. A definition}.
\newblock \emph{Journal of Economic Theory}, 2\penalty0 (3):\penalty0 225--243,
  1970.

\bibitem[Sadler and Golub(2021)]{sadler2021games}
E.~Sadler and B.~Golub.
\newblock Games on endogenous networks.
\newblock \emph{arXiv preprint arXiv:2102.01587}, 2021.

\bibitem[Schelling(2006)]{schelling2006micromotives}
T.~C. Schelling.
\newblock \emph{Micromotives and macrobehavior}.
\newblock WW Norton \& Company, 2006.

\bibitem[Seneta(2006)]{seneta2006non}
E.~Seneta.
\newblock \emph{Non-negative Matrices and Markov Chains}.
\newblock Springer Science \& Business Media, 2006.

\bibitem[Song et~al.(2019)Song, Price, Guvenen, Bloom, and
  Von~Wachter]{song2019firming}
J.~Song, D.~J. Price, F.~Guvenen, N.~Bloom, and T.~Von~Wachter.
\newblock Firming up inequality.
\newblock \emph{The Quarterly Journal of Economics}, 134\penalty0 (1):\penalty0
  1--50, 2019.

\bibitem[Staab(2019)]{staab2019formation}
M.~Staab.
\newblock The formation of social groups under status concern.
\newblock \emph{Working Paper}, 2019.

\bibitem[Thaler(1980)]{thaler1980toward}
R.~Thaler.
\newblock Toward a positive theory of consumer choice.
\newblock \emph{Journal of Economic Behavior \& Organization}, 1\penalty0
  (1):\penalty0 39--60, 1980.

\bibitem[Torres et~al.(2018)Torres, Portugal, Addison, and
  Guimaraes]{torres2018sources}
S.~Torres, P.~Portugal, J.~T. Addison, and P.~Guimaraes.
\newblock The sources of wage variation and the direction of assortative
  matching: Evidence from a three-way high-dimensional fixed effects regression
  model.
\newblock \emph{Labour Economics}, 54:\penalty0 47--60, 2018.

\bibitem[Tversky and Kahneman(1979)]{tversky1979prospect}
A.~Tversky and D.~Kahneman.
\newblock Prospect theory: An analysis of decision under risk.
\newblock \emph{Econometrica}, 47\penalty0 (2):\penalty0 263--291, 1979.

\bibitem[Tversky and Kahneman(1991)]{tversky1991loss}
A.~Tversky and D.~Kahneman.
\newblock Loss aversion in riskless choice: A reference-dependent model.
\newblock \emph{The Quarterly Journal of Economics}, 106\penalty0 (4):\penalty0
  1039--1061, 1991.

\bibitem[Ushchev and Zenou(2020)]{ushchev2020social}
P.~Ushchev and Y.~Zenou.
\newblock Social norms in networks.
\newblock \emph{Journal of Economic Theory}, 185:\penalty0 104969, 2020.

\bibitem[V{\'a}squez and Weretka(2021)]{vasquez2021}
J.~V{\'a}squez and M.~Weretka.
\newblock Co-worker altruism and unemployment.
\newblock \emph{Games and Economic Behavior}, 130:\penalty0 224--239, 2021.

\bibitem[Veblen(1899)]{veblen1899}
T.~Veblen.
\newblock \emph{The Theory of the Leisure Class: An Economic Study of
  Institutions}.
\newblock New York: Macmillan, 1899.

\bibitem[Verduyn et~al.(2015)Verduyn, Lee, Park, Shablack, Orvell, Bayer,
  Ybarra, Jonides, and Kross]{verduyn2015passive}
P.~Verduyn, D.~S. Lee, J.~Park, H.~Shablack, A.~Orvell, J.~Bayer, O.~Ybarra,
  J.~Jonides, and E.~Kross.
\newblock Passive facebook usage undermines affective well-being: Experimental
  and longitudinal evidence.
\newblock \emph{Journal of Experimental Psychology: General}, 144\penalty0
  (2):\penalty0 480, 2015.

\bibitem[Verduyn et~al.(2017)Verduyn, Ybarra, R{\'e}sibois, Jonides, and
  Kross]{verduyn2017social}
P.~Verduyn, O.~Ybarra, M.~R{\'e}sibois, J.~Jonides, and E.~Kross.
\newblock Do social network sites enhance or undermine subjective well-being? a
  critical review.
\newblock \emph{Social Issues and Policy Review}, 11\penalty0 (1):\penalty0
  274--302, 2017.

\bibitem[Von~Siemens(2012)]{von2012social}
F.~A. Von~Siemens.
\newblock Social preferences, sorting, and competition.
\newblock \emph{The Scandinavian Journal of Economics}, 114\penalty0
  (3):\penalty0 780--807, 2012.

\end{thebibliography}
\cleardoublepage
\onehalfspacing
\appendix
\numberwithin{equation}{section}
\numberwithin{thm}{section}
\numberwithin{prop}{section}
\numberwithin{defn}{section}
\numberwithin{rem}{section}
\numberwithin{lem}{section}
\numberwithin{cor}{section}

\newpage
\section*{Proofs}\label{proofs}
Before presenting the proofs, I make two simplifications to notation. (1) Let $B = (I - G)^{-1}$, and $B_{ij} \equiv (I - G)^{-1}_{ij}$, (2) $F(\cdot) \equiv f^{'-1}(\cdot)$. We only use these where they aid exposition. 

\begin{proof}[\emph{\textbf{Proof of Remark \ref{prop:equilibrium}}}]
In matrix notation, the First Order Condition is: $x^* (I - G) = f^{'-1}(c)$. $\lambda_1 < 1$ implies that $(I - G)^{-1}$ exists, is unique, and can be represented by the Neumann series $(I - G)^{-1} = \sum_{k=0}^{\infty} G^k$. 
\end{proof}

\noindent At this point, it is helpful to state a relationship that holds in equilibrium. It is obvious (so a proof is omitted), but I will use it in various proofs.

\begin{lem}\label{lem:foc in eq}
In equilibrium, $f'(x_i^* - \alpha_i \sum_i g_{ij} x_j^*) = c$ and $x_i^* - \alpha_i \sum_i g_{ij} x_j^* = f^{'-1}(c)$
\end{lem}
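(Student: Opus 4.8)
The plan is to derive both equalities directly from each agent's first-order condition and then to check that the non-negativity constraint $x_i \geq 0$ never binds, so that the FOC holds with equality at the optimum. (Note the mild typo in the statement: the summation index should read $\sum_j$, not $\sum_i$; I write it correctly below.)

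First I would fix agent $i$ and treat the profile $x_{-i}$ of the other agents as given, as required for a Nash equilibrium. Differentiating the utility in \Cref{eq: prefs} with respect to $x_i$, I would use two observations: (a) since $G_{ii}=0$ we have $g_{ii}=0$, so the reference-point term $\alpha_i \sum_j g_{ij} x_j$ does not contain $x_i$ and the argument of $f$ has derivative exactly $1$ in $x_i$; and (b) the linking-benefit term $b_i \alpha_i \sum_j g_{ij}$ is independent of $x_i$ when the network is fixed. Hence
\[
\frac{\partial u_i}{\partial x_i} = f'\!\left( x_i - \alpha_i \sum_j g_{ij} x_j \right) - c ,
\]
and setting this equal to zero gives the first claimed identity.

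Next I would confirm interiority. Because $f$ is strictly concave, $f'$ is strictly decreasing, and the maintained assumption $f'(0) > c > f'(+\infty)$ guarantees that $f^{'-1}(c)$ exists and is strictly positive. The best response therefore satisfies $BR(x_{-i}) = f^{'-1}(c) + \alpha_i \sum_j g_{ij} x_j \geq f^{'-1}(c) > 0$, so the constraint $x_i \geq 0$ is slack and the FOC holds with equality; concavity of $f$ makes this stationary point the global maximiser. Evaluating at the equilibrium profile $x^*$ gives $f'\!\left( x_i^* - \alpha_i \sum_j g_{ij} x_j^* \right) = c$, and applying the well-defined inverse $f^{'-1}$ to both sides yields the second identity $x_i^* - \alpha_i \sum_j g_{ij} x_j^* = f^{'-1}(c)$.

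There is no substantive obstacle here — the result is immediate once the problem is recognised as unconstrained at the optimum. The only points requiring any care are recording that $g_{ii}=0$ removes the apparent self-dependence in the reference point, and verifying that the non-negativity constraint is slack, which follows directly from $f^{'-1}(c) > 0$.
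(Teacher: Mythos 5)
Your proof is correct and matches the argument the paper has in mind: the paper omits a proof of this lemma as ``obvious,'' and what it treats as obvious is exactly your computation --- the first-order condition from \Cref{eq: prefs}, with $g_{ii}=0$ and the linking term dropping out, plus interiority from $f^{'-1}(c) > 0$ so the constraint $x_i \geq 0$ is slack. Your added care about slackness of the non-negativity constraint and invertibility of $f'$ (which the paper implicitly assumes whenever it writes $f^{'-1}$) is a faithful filling-in rather than a different route.
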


\begin{proof}[\emph{\textbf{Proof of Proposition \ref{prop:ref strength}}}]
\textbf{(i)} $(I - \text{diag}(\alpha) g)^{-1} = \sum_{k=0}^{\infty} (\text{diag}(\alpha) g)^{k}$. This is the a standard Neumann Series (see for example \cite{horn2012matrix}). If a given $\alpha_{j}$ increases, then all elements of $(\text{diag}(\alpha) g)^{k}$ weakly increase for any $k$. Since $x_{i}^{*} \propto \sum_{j} (I - \text{diag}(\alpha) g)^{-1}_{ij}$, then $x^{*}_{i}$ is weakly increasing for all $i$. This relationship must be strict when $i=j$ since $( \text{diag}(\alpha) g)_{ii}$ is strictly increasing in $\alpha_{i}$ by construction.

\noindent \textbf{(ii)} Take equilibrium utility, $u_i^* = f(x_i^* - \alpha_i \sum_i g_{ij} x_j^*) - c x_i^*$, and differentiate with respect to $\alpha_k$.
\begin{align*}
    \frac{d u_i^*}{d \alpha_k} &= \left( \frac{d x_i^*}{d \alpha_k} - \alpha_i \sum_i g_{ij}  \frac{d x_i^*}{d \alpha_k} \right) f'(x_i^* - \alpha_i \sum_i g_{ij} x_j^*) - c \frac{d x_i^*}{d \alpha_k}
\end{align*}
Applying \Cref{lem:foc in eq} an simplifying yields: $\frac{d u_i^*}{d \alpha_k} = - \alpha_i c \sum_i g_{ij}  \frac{d x_i^*}{d \alpha_k}$
Finally, noticing that $x_i^*$ is weakly decreasing in $\alpha_k$ for all $i,k$ (and that $c>0, \alpha \geq 0, g_{ij} \geq 0$) yields the result.
\end{proof}

\begin{proof}[\emph{\textbf{Proof of Proposition \ref{prop:cost}}}]
\textbf{(i)} Follows from the equilibrium found in Proposition \ref{prop:equilibrium} and the fact that $f^{'-1}(\cdot) \equiv F(c)$ is strictly decreasing and convex.
\textbf{(ii)} First, recall that $x_i^* = C_i^b F(c)$, and that $C_i^b$ depends only on the network, and not on $c$. Therefore $\frac{d x_i^*}{d c} = F'(c) C_i^b$ Now take equilibrium utility, $u_i^* = f(x_i^* - \alpha_i \sum_i g_{ij} x_j^*) - c x_i^*$, and differentiate with respect to $c$.
\begin{align*}
    \frac{d u_i^*}{d c} = F'(c) (C_i^b - \alpha_i \sum_i g_{ij} C_j^b) f'(x_i^* - \alpha_i \sum_i g_{ij} x_j^*) - c F'(c) C_i^b - F(c) C_i^b
\end{align*}
Notice that $F(c) (C_i^b - \alpha_i \sum_i g_{ij} C_j^b) = x_i^* - \alpha_i \sum_i g_{ij} x_j^*$, which is equal to $F(c)$ by \Cref{lem:foc in eq}. Therefore $(C_i^b - \alpha_i \sum_i g_{ij} C_j^b) = 1$. Applying \Cref{lem:foc in eq} again and simplifying yields:
\begin{align*}
    \frac{d u_i^*}{d c} = F'(c) c - c F'(c) C_i^b - F(c) C_i^b
\end{align*}
By rearranging, we can see that $\frac{d u_i^*}{d c} > 0$ if and only if: $ \frac{C_i^b - 1}{C_i^b} > \frac{F(c)}{- F'(c) \cdot c}$ (during this rearranging the inequality sign will flip when we divide through by $(1 - C_i^b)$ because this term is negative).
\end{proof}

\noindent The following Lemma restates the results of \cite{chang2006inversion}. It underpins all results relating to comparison shifts of the network. To my knowledge, this result is not well known in economics, so I provide a proof in \Cref{extension:chang2006}.

\begin{lem}[A perturbation theorem. \citet{chang2006inversion}]\label{perturbation theorem}
If $D$ is a comparison shift, then $(I - [G + D])^{-1} - (I - G)^{-1} = H$. Where:
\begin{align*}
H = \frac{\phi}{1 - \phi (B_{ur} - B_{dr}) }
	\begin{pmatrix}
	B_{1r} (B_{u1} - B_{d1}) & \cdots & B_{1r} (B_{un} - B_{dn}) \\
	\vdots &  \ddots & \vdots \\
	B_{nr} (B_{u1} - B_{d1}) & \cdots & B_{nr} (B_{un} - B_{dn})
	\end{pmatrix}
\end{align*}
\end{lem}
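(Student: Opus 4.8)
The plan is to recognise the comparison shift $D$ as a rank-one matrix and then apply the Sherman--Morrison formula (the rank-one special case of the Woodbury identity). Writing $e_k$ for the $k$-th standard basis column vector, observe from \Cref{defn: elementary perturbation} that $D$ has nonzero entries only in row $r$, namely $D_{ru} = \phi$ and $D_{rd} = -\phi$. Hence $D = \phi\, e_r (e_u - e_d)^{T}$, which is rank one. Setting $A = I - G$ so that $B = A^{-1}$, the object to compute is $(A - D)^{-1} - A^{-1}$, and its rank-one structure is exactly what makes Sherman--Morrison applicable.

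Next I would apply the formula with the update written as $A - D = A + (-\phi e_r)(e_u - e_d)^{T}$, giving $(A-D)^{-1} = A^{-1} - \dfrac{A^{-1}(-\phi e_r)(e_u-e_d)^{T} A^{-1}}{1 + (e_u-e_d)^{T} A^{-1}(-\phi e_r)}$. The three pieces are then read off from $B$ directly: $A^{-1} e_r = B_{\cdot r}$ is the $r$-th column of $B$, $(e_u - e_d)^{T} A^{-1} = B_{u \cdot} - B_{d \cdot}$ is the difference of the $u$-th and $d$-th rows, and the scalar $(e_u - e_d)^{T} B\, e_r = B_{ur} - B_{dr}$. Substituting, the denominator collapses to $1 - \phi(B_{ur} - B_{dr})$ and the numerator to $\phi\, B_{\cdot r}(B_{u \cdot} - B_{d \cdot})$, an outer product whose $(i,j)$ entry is $B_{ir}(B_{uj} - B_{dj})$. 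Collecting the scalar prefactor with this outer product reproduces the matrix $H$ in the statement entrywise, which completes the argument.

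There is little genuine difficulty here; the real work is bookkeeping the signs (one is subtracting $D$ rather than adding it) and fixing the orientation of the outer product so that the column index comes from $B_{\cdot r}$ and the row index from $B_{u \cdot} - B_{d \cdot}$. The one substantive point to verify is that $(I - G - D)^{-1}$ actually exists, so that Sherman--Morrison is legitimate. This is secured by the standing regime $\alpha_i < 1$ together with feasibility $\phi \in [0, G_{rd}]$: the perturbed matrix $G + D$ remains row-substochastic, so its spectral radius stays below one and $(I - G - D)$ is invertible with a convergent Neumann series. Equivalently, invertibility is reflected in the denominator $1 - \phi(B_{ur} - B_{dr})$ being nonzero, which the feasibility restriction guarantees. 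Given this, no appeal to the full generality of \citet{chang2006inversion} is needed for the rank-one case at hand, and the Sherman--Morrison computation delivers the result directly.
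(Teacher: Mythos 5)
Your proof is correct and is essentially the paper's own argument: the paper likewise applies the Woodbury identity (of which Sherman--Morrison is the rank-one special case) to the perturbation $D$, merely expressing it through Chang's partitioned $1\times 2$ block $\overline{\underline{D}} = [\phi,\,-\phi]$ rather than your explicit outer product $\phi\, e_r(e_u - e_d)^{T}$, and both routes produce the identical scalar denominator $1 - \phi(B_{ur} - B_{dr})$ and outer-product numerator $B_{\cdot r}(B_{u\cdot} - B_{d\cdot})$. Your inline check that $(I - G - D)$ stays invertible (via row-substochasticity of $G+D$ under $\alpha_i < 1$ and feasibility $\phi \in [0, G_{rd}]$) is a sound addition that the paper handles separately in a later remark rather than inside this proof.
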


\begin{lem}\label{lem:perturb effect}
	The change in agent $i$'s consumption due a comparison shift $D$, $\Delta x_i^*$, is equal to:
	$$\frac{ \phi B_{ir} (C^{b}_{u} - C^{b}_{d}) }{1 - \phi (B_{ur} - B_{dr}) } \cdot f'^{-1}(c)$$
\end{lem}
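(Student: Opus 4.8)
The plan is to combine the equilibrium characterisation from \Cref{prop:equilibrium} with the explicit perturbation formula from \Cref{perturbation theorem}, reducing everything to a single row-sum computation. First I would note that since we work in the regime $\alpha_i < 1$ and restrict attention to \emph{feasible} comparison shifts, the perturbed network $G + D$ again satisfies $\lambda_1 < 1$, so \Cref{prop:equilibrium} applies both before and after the shift. Writing $x_i^*$ for pre-shift consumption and $x_i^{**}$ for post-shift consumption, and letting $\tilde C_i^b = \sum_j (I - [G+D])^{-1}_{ij}$ denote the Bonacich centrality in the perturbed network, \Cref{prop:equilibrium} gives $x_i^* = C_i^b F(c)$ and $x_i^{**} = \tilde C_i^b F(c)$. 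Hence
\begin{align*}
\Delta x_i^* = x_i^{**} - x_i^* = \left( \tilde C_i^b - C_i^b \right) F(c),
\end{align*}
and the task is simply to evaluate $\tilde C_i^b - C_i^b$.

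Next I would express this difference as a row sum of the perturbation matrix $H$. By definition of Bonacich centrality as the row sums of the relevant resolvent,
\begin{align*}
\tilde C_i^b - C_i^b = \sum_j \left[ (I - [G+D])^{-1} - (I - G)^{-1} \right]_{ij} = \sum_j H_{ij}.
\end{align*}
Substituting the entries of $H$ from \Cref{perturbation theorem}, namely $H_{ij} = \frac{\phi}{1 - \phi(B_{ur} - B_{dr})} B_{ir} (B_{uj} - B_{dj})$, and pulling the factor $B_{ir}$ (which does not depend on the summation index $j$) outside the sum, I obtain
\begin{align*}
\sum_j H_{ij} = \frac{\phi \, B_{ir}}{1 - \phi(B_{ur} - B_{dr})} \sum_j \left( B_{uj} - B_{dj} \right).
\end{align*}

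The key observation that closes the argument is that the remaining sum telescopes into a difference of Bonacich centralities: since $\sum_j B_{uj} = C_u^b$ and $\sum_j B_{dj} = C_d^b$ (again by the row-sum definition applied to rows $u$ and $d$ of $B = (I-G)^{-1}$), we have $\sum_j (B_{uj} - B_{dj}) = C_u^b - C_d^b$. Combining the displays then gives $\tilde C_i^b - C_i^b = \frac{\phi B_{ir}(C_u^b - C_d^b)}{1 - \phi(B_{ur} - B_{dr})}$, and multiplying by $F(c) = f'^{-1}(c)$ yields the claimed expression. There is no genuinely hard step here — the result is essentially a corollary of \Cref{perturbation theorem} — so the only points requiring care are confirming that the equilibrium formula remains valid for $G+D$ (handled by feasibility) and correctly recognising that summing the $j$-index of $H$ collapses the bracketed terms into $C_u^b - C_d^b$ while leaving the $B_{ir}$ prefactor and the scalar denominator untouched.
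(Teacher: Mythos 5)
Your proposal is correct and follows essentially the same route as the paper's proof: apply \Cref{prop:equilibrium} before and after the shift, substitute the entries of $H$ from \Cref{perturbation theorem}, pull the $B_{ir}$ factor and scalar denominator out of the sum over $j$, and recognise $\sum_j (B_{uj} - B_{dj}) = C_u^b - C_d^b$ via the row-sum definition of Bonacich centrality. Your extra remark that feasibility keeps $\lambda_1 < 1$ after the shift is a point the paper defers to its online appendix, but it is the same argument.
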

\begin{proof}[\emph{\textbf{Proof of Lemma \ref{lem:perturb effect}.}}]
From Proposition~\ref{prop:equilibrium}: $x_{i}^{*} 
= f'^{-1}(c) \sum_{j} B_{ij}$ (by the definition of $B$). From Lemma~\ref{perturbation theorem}, the optimal action after a comparison shift $D$ is: $x_{i}^{'*} = f'^{-1}(c) \sum_{j} (B_{ij} + H_{ij})$. Denote the change in optimal action $x_{i}^{'*} - x_{i}^{*} \equiv \Delta x_{i}^{*}$. Then using the definition of $H$: $\Delta x_{i}^{*} = f'^{-1}(c) \sum_{j} \frac{ \phi B_{ir} (B_{uj} - B_{dj}) }{1 - \phi (B_{ur} - B_{dr}) }$. Finally, using the definition of Bonacich centrality (\Cref{defn:bonacich centrality}) yields the result.
\end{proof}

\begin{proof}[\emph{\textbf{Proof of Proposition \ref{prop:perturbation}}}]
\textbf{(i)} Arbitrarily small changes in $\phi$ are possible, so $\Delta x_{i}^{*}$ is continuously differentiable in $\phi$. For ease of exposition, let: $ B_{ir} (C^{b}_{u} - C^{b}_{d}) f'^{-1}(c) \equiv y$ and $(B_{ur} - B_{dr}) \equiv z$. Let $\hat{x}_i^* \equiv x_i^* + \Delta x_i^*$ be the new equilibrium value of consumption.
Then $\hat{x}_i^* = x_i^* + y \phi (1 - z \phi)^{-1}$. By standard rules of differentiation (and some rearranging): $\frac{d \hat{x}_i^* }{d \phi} = y \cdot (1 - z \phi)^{-2}$. $y > 0$ as $B_{ir} > 0$, $f'^{-1}(c) > 0$ (by construction), and $C^{b}_{u} - C^{b}_{d} > 0$ (by assumption). Therefore $\frac{d \hat{x}_i^* }{d \phi} > 0$. 
%
\textbf{(ii)} taking the second expression from \Cref{lem:foc in eq}, and substituting it into equilibrium utility, we have that $\hat{u}_i^* = f(F(c)) - c \hat{x}_i^*$. Since $\hat{x}_i^*$ is increasing in $\phi$, then $\hat{u}_i^*$ is clearly decreasing in $\phi$.
\end{proof}

\noindent It is now helpful to state an implication of \Cref{lem:foc in eq}.

\begin{lem}\label{lem:comp static u}
    For a fixed network $G$, $\frac{d u_i^*}{d G_{ij}} = - x_j^* c_i + b_i$
\end{lem}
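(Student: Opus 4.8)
The plan is to exploit the envelope theorem together with \Cref{lem:foc in eq}, which pins down the gain-loss term in equilibrium. First I would rewrite agent $i$'s payoff in the form used throughout, noting that $\alpha_i \sum_j g_{ij} x_j = \sum_k G_{ik} x_k$ and $b_i \alpha_i \sum_j g_{ij} = b_i \sum_k G_{ik}$, so that $u_i = f\big( x_i - \sum_k G_{ik} x_k \big) - c_i x_i + b_i \sum_k G_{ik}$. The object of interest is $i$'s marginal valuation of the weight she places on the link to $j$, holding the equilibrium consumption of every other agent fixed and letting $i$ re-optimise her own $x_i$ — this is the quantity relevant to $i$'s incentive to add or drop a link under pairwise stability.

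Next I would apply the envelope theorem. Because $x_i$ is chosen optimally, the first-order condition $f'(x_i^* - \sum_k G_{ik} x_k^*) = c_i$ holds (this is exactly \Cref{lem:foc in eq}, with $F(c_i) \equiv f^{'-1}(c_i)$), so the indirect channel running through $i$'s own best response drops out and $\frac{d u_i^*}{d G_{ij}}$ equals the explicit partial derivative of $u_i$ in $G_{ij}$. That explicit derivative has just two pieces: the reference-point term contributes $f'(\cdot) \cdot (-x_j^*) = -c_i x_j^*$ (using the first-order condition to replace $f'(\cdot)$ by $c_i$), and the linking-benefit term contributes $b_i$. Summing gives $\frac{d u_i^*}{d G_{ij}} = - x_j^* c_i + b_i$, as claimed. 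Equivalently, \Cref{lem:foc in eq} lets me write equilibrium utility as $u_i^* = f(F(c_i)) - c_i x_i^* + b_i \sum_k G_{ik}$, whose only network dependence (the first term being constant) is through $-c_i x_i^*$ and $b_i \sum_k G_{ik}$; since $i$'s best response is $x_i^* = F(c_i) + \sum_k G_{ik} x_k^*$, holding others fixed gives $\frac{d x_i^*}{d G_{ij}} = x_j^*$, and the two routes agree.

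The step to be careful about — and the only real subtlety — is the convention on what is held fixed. If instead one re-solves the entire Nash equilibrium after perturbing $G_{ij}$, then neighbours' consumption also moves; using $x^* = B F$ with $B = (I-G)^{-1}$ and $\partial B_{kl} / \partial G_{ij} = B_{ki} B_{jl}$ one finds $\frac{d x_i^*}{d G_{ij}} = B_{ii} x_j^*$ together with a feedback term through $i$'s own reference point, yielding $-c_i B_{ii} x_j^* + b_i$ rather than $-x_j^* c_i + b_i$ (the two coincide only when $B_{ii} = 1$, i.e. $i$ lies on no cycle). So the lemma should be read as $i$'s \emph{private} marginal value of the link given others' equilibrium play, which is the correct object for the link-formation arguments that follow; I would state this fixing convention explicitly to avoid conflating it with the full general-equilibrium comparative statics used in \Cref{prop:perturbation}.
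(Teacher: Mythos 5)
Your proof is correct and takes essentially the same route as the paper's: differentiate equilibrium utility with respect to $G_{ij}$, invoke \Cref{lem:foc in eq} to replace $f'(\cdot)$ by $c_i$ so that the terms in $\frac{d x_i^*}{d G_{ij}}$ cancel (the envelope step), leaving $-x_j^* c_i + b_i$. Your reading of the fixing convention --- neighbours' equilibrium consumption held fixed while $i$ re-optimises, rather than re-solving the whole equilibrium --- is exactly what the paper's computation implicitly does, so there is no gap.
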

\begin{proof}[\emph{\textbf{Proof of Lemma \ref{lem:comp static u}.}}]
Take \cref{eq: prefs}, substitute in equilibrium values of consumption, $x_i^*$, and utility, $u_i^*$, and then differentiate with respect to $G_{ij}$. This yields:
\begin{align*}
    \frac{d u_i^*}{d G_{ij}} &= \left( \frac{d x_i^*}{d G_{ij}} - x_j^* \right) f' \left( x_i^* - \sum_j G_{ij} x_j^* \right) - c_i \frac{d x_i^*}{d G_{ij}} + b_i
\end{align*}
From \Cref{lem:foc in eq} we know that $f'(x_i^* - \sum_j G_{ij} x_j^*) = c_i$. Substituting this in and simplifying yields the result.
\end{proof}
\noindent It is also clearer to prove \Cref{cor:endogenous network} and \Cref{prop: endogenous network} together.

\begin{proof}[\emph{\textbf{Proof of Proposition \ref{prop: endogenous network} and Corollary \ref{cor:endogenous network}}}]
\textbf{(i)} First, notice that when $i$ has no links, $x_i^* = f^{'-1}(c)$. Therefore, $b_i < c f^{'-1}(c)$ then $i$ will never want to form a link (as no agent can have consumption below $f^{'-1}(c)$).
\textbf{(ii)} Now consider cases where $b_i \geq c f^{'-1}(c)$. Consider any pair of agents $i,j$ such that $b_i=b_j$. Recall that in \cref{sec:model} we \emph{assumed} that for any $i$, there exists some $j$ such that $b_i=b_j$. Since $\frac{d u_i^*}{d G_{ij}} = - x_j^* c_i + b_i$ for all agents, then $i$ and $j$ want to form (or strengthen) a link if $x_i^* < \frac{b_i}{c}$ and $x_j^* < \frac{b_i}{c}$. Therefore, $x_i^* \geq \frac{b_i}{c}$ for all $i$.
\textbf{(iii)} Now notice that $i$ does not want to form a link with any agent $k$ where $b_k > b_i$ (or wants to cut a link if one exists). This is because $x_k^* \geq \frac{b_k}{c}$ which implies that $-x_k^* c + b_i < 0$. Since links require mutual consent, there can be no links between two agents $i,j$ where $b_i \neq b_j$. This is because either $b_i > b_j$ or $b_j > b_i$, so one of the agents would want to cut the link.
\textbf{(iv)} Since agents can only form links with others who share a value of $b$, then $x_i^* \leq \frac{b_i}{c}$. Otherwise, $-x_i^* c + b_j < 0$ (since $b_i = b_j$) and so $j$ would cut the link. Therefore $x_i = \frac{b_i}{c}$ for all $i$ where $b_i < c f^{'-1}(c)$.
\textbf{(v)} We know from \Cref{prop:equilibrium} that for any network $G$, $C_i^b = \frac{x_i^*}{f^{'-1}(c)}$. The Bonacich centrality result in \cref{cor:endogenous network} follows trivially from substituting the value $x_i^*$ found above into this equation.
\end{proof}
\begin{proof}[\emph{\textbf{Proof of \Cref{prop:sorting}.}}]
	\textbf{Firm choice.} Firms always choose skilled workers over unskilled workers (by assumption), so it suffices to consider where the skilled workers choose to work. Unskilled workers fill up all remaining vacancies (since we have assumed that there are as many vacancies as workers). A skilled worker $i$ will apply for a job at a high productivity firm only if and only if:
	\begin{align}
		f(x_{SH} - \alpha_{1i} \sum_{l} g_{ij} \overline{x}_{j} - \alpha_{2i} \overline{x}_{H}) &\geq f(x_{SL} - \alpha_{1i} \sum_{l} g_{ij} x_{j} - \alpha_{2i} \overline{x}_{L})
	\end{align}
	Simplifying and rearranging yields $x_{SH} - \alpha_{2i} \overline{x}_{H} \geq x_{SL} - \alpha_{2i} \overline{x}_{L}$. After further rearranging: $1/\alpha_{2i} \geq (\overline{x}_{H} - \overline{x}_{L}) / (x_{SH} - x_{SL})$. Let $c$ be the fraction of skilled workers working for at high productivity firms. Then $\overline{x}_{H} = c x_{SH} + (1 - c) x_{UH}$. Using this and the corresponding expression for $\overline{x}_{L}$ yields: 
	\begin{align}\label{eq:firm choice ineq}
	    \frac{1}{\alpha_{2i}}  \geq \frac{[(x_{SH} - x_{UH}) + (x_{SL} - x_{UL})] \cdot c + (x_{UH} - x_{UL}) }{ (x_{SH} - x_{SL}) } .
	\end{align}
	$(x_{SH} - x_{UH})$ [resp. $(x_{SL} - x_{UL})$]  is the skill premium at the high [resp. low] productivity firm. $(x_{SH} - x_{SL})$ [resp. $(x_{UH} - x_{UL})$] is the ``firm premium'' for a skilled [resp. unskilled] worker. By assumption, all these terms are strictly positive, so RHS of \Cref{eq:firm choice ineq} is strictly increasing in $c$. Clearly the LHS of \Cref{eq:firm choice ineq} is strictly decreasing in $\alpha_{2i}$.
	For any $c$ there exists a critical value of $\alpha_{2i}$ such that equation \ref{eq:firm choice ineq} holds with equality.\footnote{i.e. skilled worker $i$ is indifferent between working at a high productivity and low productivity firm} Denote this value $\alpha_{2i}^{crit}$. So $\alpha_{2i}^{crit}$ is a strictly decreasing function of $c$. Denote this function $\alpha_{2i}^{crit} = h(c)$, with $h'(c) < 0$.

	\textbf{Existence.} Let $F(\cdot)$ denote the C.D.F. of $\alpha_{2i}$ (i.e. the fraction of skilled workers with an $\alpha_{2i}$ value of less than or equal to $\alpha_2$). Therefore $F(h(c))$ is the fraction of skilled workers who want to work at high productivity firms, given that a fraction $c$ of skilled workers actually work at high productivity firms. An equilibrium is then a fixed point of this equation:  $F(h(c)) = c$. RHS of this is strictly increasing in $c$, and LHS is weakly decreasing in $c$, so there exists a fixed point, $c^*$.

	\textbf{Monotonicity.} Suppose that $\alpha_{2i}$ weakly increases for all $i$. Then $F(a)$ weakly decreases for all $a$ (this follows from well-known properties of C.D.F.s). This in turn reduces $F(h(c))$ for any $h(c)$, which weakly reduces $c^*$. Notice that $F(h(c)) = 1$ for all values of $c$ where $h(c) \geq \max_{i} \alpha_{2i}$.
\end{proof}

\begin{proof}[\emph{\textbf{Proof of \Cref{prop:network sorting}.}}]
	From \Cref{prop:sorting}, for any $c$, there exists an agent-specific critical value $\alpha_{2i}^{crit}$ such that: \textbf{(a)} when $\alpha_{2i} = \alpha_{2i}^{crit}$, $i$ is \emph{indifferent} between high productivity and low productivity firms, \textbf{(b)} when $\alpha_{2i} < \alpha_{2i}^{crit}$, $i$ strictly prefers a \emph{high} productivity firm, \textbf{(c)} when $\alpha_{2i} > \alpha_{2i}^{crit}$, $i$ strictly prefers a \emph{low} productivity firm. \Cref{prop:sorting} also proved that there exists some equilibrium level of labour market sorting, $c^*$. 
	
	This yields an equilibrium critical value for $\alpha_{2}$. Denote it $\alpha_{2,eq}^{crit}$. Therefore, all skilled workers with $\alpha_{2i} \leq \alpha_{2,eq}^{crit}$ work at high productivity firms. All skilled workers with  $\alpha_{2i} \geq \alpha_{2,eq}^{crit}$ work at low productivity firms. 
	Finally, notice that $\alpha = \alpha_{1i} + \alpha_{2i}$ (by assumption), so a threshold in $\alpha_2$ is easily converted to a threshold in $\alpha_1$. 
\end{proof}

\newpage
\vspace*{\fill}
\hfill
\begin{center}
End of main text. \\
This page is intentionally left blank.
\end{center}
\vspace{\fill}

\newpage
\section*{For Online Publication}\label{extensions}

\section{A Comment for Policy-makers}
Consider a special case where agents' reference strengths are uncorrelated with the structure of the social network. Its key feature is that the structure of the social network will not have any impact on outcomes. 

\begin{ass}[Uncorrelated case]\label{ass: mean field}
The reference structure, $g$, is uncorrelated with the reference strength: $corr(g_{ij} , \alpha_{j}) = 0$ for all agents $i,j$.
\end{ass}

While \Cref{ass: mean field} is appears rather abstract, it covers the natural case where all agents have the same reference strength ($\alpha_i = \alpha$ for all $i$), so it is not vacuous. Under this assumption, the network effects drop out.

\begin{cor}\label{cor:mean field no network}
Under Assumption \ref{ass: mean field}: $x_i^*$ and $u_i^*$ do not depend on the reference structure, $g$.
\end{cor}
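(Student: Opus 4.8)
The plan is to show that Bonacich centrality $C_i^b$—and hence both $x_i^*$ and $u_i^*$—depends on the network only through the strengths $\alpha$, never through the structure $g$. The starting point is \Cref{prop:equilibrium}, which gives $x_i^* = C_i^b F(c)$ with $C_i^b = \sum_j \sum_{k=0}^{\infty} (G^k)_{ij} = \sum_{k=0}^{\infty} s_i^{(k)}$, where $s_i^{(k)} := \sum_j (G^k)_{ij}$ is the $i$-th row sum of $G^k$. Everything reduces to computing these row sums under \Cref{ass: mean field}.

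First I would translate the uncorrelatedness condition into a usable linear identity. Reading the correlation row by row (across agents $j$ for a fixed $i$) and using that $g$ is row stochastic, so $\tfrac1n\sum_j g_{ij} = \tfrac1n$, the covariance is $\operatorname{Cov}_j(g_{ij},\alpha_j) = \tfrac1n\sum_j g_{ij}\alpha_j - \tfrac1n \bar\alpha$, where $\bar\alpha := \tfrac1n\sum_j \alpha_j$. Setting the correlation (equivalently the covariance) to zero yields $\sum_j g_{ij}\alpha_j = \bar\alpha$ for every $i$; in matrix form $g\alpha = \bar\alpha\mathbf 1$, so that $G\alpha = \operatorname{diag}(\alpha) g\alpha = \bar\alpha\alpha$. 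Thus $\alpha$ is a right eigenvector of $G$, and when the strengths are positive $\bar\alpha$ is exactly the Perron root $\lambda_1$. Note this identity holds trivially in the degenerate case $\alpha_i \equiv \alpha$, so the assumption is not vacuous.

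Next I would compute the row sums by induction. Since $G\mathbf 1 = \alpha$, the identity above gives $G^k \mathbf 1 = \bar\alpha^{\,k-1}\alpha$ for all $k\ge 1$ (the inductive step is $G^{k+1}\mathbf 1 = G(\bar\alpha^{\,k-1}\alpha) = \bar\alpha^{\,k-1} G\alpha = \bar\alpha^{\,k}\alpha$), i.e. $s_i^{(k)} = \alpha_i \bar\alpha^{\,k-1}$, with $s_i^{(0)} = 1$. Summing the geometric series, which converges because $\lambda_1 < 1$ forces $\bar\alpha < 1$, gives
\begin{align*}
C_i^b = 1 + \alpha_i \sum_{k=1}^{\infty} \bar\alpha^{\,k-1} = 1 + \frac{\alpha_i}{1 - \bar\alpha}.
\end{align*}
This involves only $\alpha_i$ and $\bar\alpha$, so $x_i^* = C_i^b F(c)$ is independent of $g$. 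For welfare, substitute the equilibrium relation $x_i^* - \alpha_i \sum_j g_{ij} x_j^* = F(c)$ from \Cref{lem:foc in eq} and use row stochasticity ($\alpha_i \sum_j g_{ij} = \alpha_i$) to obtain $u_i^* = f(F(c)) - c\,x_i^* + b_i \alpha_i$, each term of which is independent of $g$.

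The main obstacle is the first step: correctly reading the correlation condition as a within-row statement and converting it into the identity $\sum_j g_{ij}\alpha_j = \bar\alpha$; once this is in hand, the induction and geometric summation are routine. A secondary point to handle cleanly is convergence—one must invoke the maintained assumption $\lambda_1 < 1$ to justify both the Neumann series for $C_i^b$ and, via $\bar\alpha \le \lambda_1$, the geometric sum $\sum_k \bar\alpha^{\,k-1}$.
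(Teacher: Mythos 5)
Your proposal is correct and follows essentially the same route as the paper's own proof: both convert the uncorrelatedness assumption into the identity $\sum_j g_{ij}\alpha_j = \bar\alpha$, establish the row sums $\sum_j (G^k)_{ij} = \alpha_i \bar\alpha^{k-1}$ by induction, and sum the geometric series to obtain $C_i^b = 1 + \alpha_i/(1-\bar\alpha)$. Your eigenvector framing ($G\alpha = \bar\alpha\alpha$) and the explicit welfare substitution via the equilibrium first-order condition are cosmetic refinements of the same argument, with the latter slightly more careful than the paper's appeal to its equilibrium characterisation.
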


\Cref{ass: mean field} requires that $i$'s neighbours are, on average, ``Mr and Mrs Average'' in terms of how strongly they make social comparisons. In turn, their neighbours are also, on average, ``Mr and Mrs Average'', and so on throughout the whole social network. There is no way for a global network effect to build -- it will always get averaged out. The clearest case where this assumption holds is where all agents have the same reference strength. This shows how homogeneity of the local social comparisons shuts down the effect of heterogeneity in global network structure.

This simplification also allows us to make stronger claims about the impact of reference strengths. First, an equilibrium exists if and only if the average reference strength is less than one, and Bonacich centralities collapse to a simple function of the reference strengths (in particular, $C_i^b = 1 + \frac{ \alpha_{i} }{1 - \overline{\alpha} }$ for all $i$). 

The tight link between an agent's reference strength and their optimal consumption makes it straightforward to see how changes in the distribution of these reference strengths feeds through into the distribution of consumption. First [resp. second] order stochastically dominant shifts in the distribution of the the references strengths causes a first [resp. second] order stochastically dominant shift in the distribution of equilibrium consumption.\footnote{First Order Stochastic Dominance and Second Order Stochastic Dominance formalise the notions of (unambiguously) ``bigger'' and ``more spread out'' respectively. They are first due to \cite{hadar1969rules} and \cite{rothschild1970increasing}. More modern coverage can be found in \cite[Ch 6.D]{mas1995microeconomic}. I prove the claim regarding equilibrium outcomes below. The additional claims follow straightforwardly from this.} 
Inequality in social comparisons drives consumption inequality, even in a setting where agents are otherwise identical. \\

\noindent \textbf{Implications for policy-making.} Measuring networks in practice -- especially the weighted networks present in my model -- is typically difficult and resource intensive. In a perfect world with unlimited resources and attention, we should of course examine whether this assumption holds in a network, and proceed accordingly. In reality, however, \cref{cor:mean field no network} presents a pragmatic approach for policy-makers and governments short on time and money, so long as the assumption is not violated too egregiously. \\

\subsection*{Proofs}
\begin{proof}[\emph{\textbf{Proof of Corollary \ref{cor:mean field no network}}}]
First, let $a \equiv \text{diag}(\alpha)$ for convenience. Using a Neumann Series representation, we can express $(I - ag)^{-1} = \sum_{k=1}^{\infty} (ag)^{k}$. Therefore, $C_i^b = \sum_j (I - ag)^{-1}_{j} = \sum_j \sum_{k=1}^{\infty} [(ag)^{k}]_{ij}$. We deal with the first two terms individually, and then all further terms by induction. Clearly $\sum_j \mathbf{1} \{ i = j \} = 1$, and $\sum_{j} (ag)_{ij} = \alpha_{i} \sum_{j} g_{ij} = \alpha_{i}$ by construction. Induction: for $k=2$;
\begin{align*}
\sum_{j} (ag)^{2}_{ij} &= \sum_{j} \sum_{s} (ag)_{is} (ag)_{sj}  \
		= \sum_{j} \sum_{s} \alpha_{i} g_{is} \alpha_{s} g_{sj}  \
		= \alpha_{i} \sum_{s} g_{is} \alpha_{s} \sum_{j} g_{sj}
\end{align*}
By Assumption \ref{ass: mean field}: $\sum_{s} g_{is} \alpha_{s} = \overline{\alpha} \sum_{s} g_{is}$. Then notice that $\sum_{j} g_{sj} = 1$ for any $s$. This yields: $\sum_{j} (ag)^{2}_{ij}	= \alpha_{i} \cdot \overline{\alpha}$.
Now assume that for $k=t > 2$, $\sum_{j} (ag)^{t}_{ij} = \alpha_{i} \cdot \overline{\alpha}^{ t-1}$. Then show for $k=t+1$. By definition $\sum_{j} G^{t+1}_{ij} = \sum_{j} [(ag) \cdot (ag)^{t}]_{ij}$. For clarity of exposition, we let $G^{t}_{ij} = \mathcal{A}_{ij}$.
\begin{align*}
\sum_{j} (ag)^{t+1}_{ij} = \sum_{j} \sum_{s} \alpha_{i} g_{is} \mathcal{A}_{sj}  \
		= \alpha_{i} \sum_{s} g_{is} \sum_{j} \mathcal{A}_{sj}
\end{align*}
Using the assumption for $k=t$, $\sum_{j} \mathcal{A}_{sj} = \alpha_{s} \overline{\alpha}^{t-1}$. Therefore, $\sum_{j} (ag)^{t+1}_{ij} = \alpha_{i} \cdot \overline{\alpha}^{t-1} \sum_{s} g_{is} \alpha_{s}$. Using Assumption \ref{ass: mean field} and the fact that rows of $g$ sum to $1$ as before:
$\sum_{j} (ag)^{t+1}_{ij} = \alpha_{i} \cdot \overline{\alpha}^{t}$. 
Therefore, $C_i^b = \sum_{j} \sum_{k=0}^{\infty} [(ag)^{k}]_{ij} = 1 + \alpha_{i} ( 1 + \overline{\alpha} + \overline{\alpha}^{2} + \overline{\alpha}^{3} + ...)$. Clearly this does not depend on the network. Finally, it follows from \Cref{prop:equilibrium} that $x_i^*$ and $u_i^*$ do not depend on the network structure, $g$.
\end{proof}

\begin{rem}\label{rem:mean field existence}
Under Assumption \ref{ass: mean field}: (i) an equilibrium exists if and only if $\overline{\alpha} < 1$, and (ii) when $\overline{\alpha} < 1$: $C_i^b = 1 + \frac{ \alpha_{i} }{1 - \overline{\alpha} }$ for all $i$, where $\overline{\alpha} = \frac{1}{n} \sum_i \alpha_i$
\end{rem}
\begin{proof}[\emph{\textbf{Proof of Remark \ref{rem:mean field existence}}}]
From the proof to Corollary \ref{cor:mean field no network}, we have $C_i^b = 1 + \alpha_i / (1 - \overline{\alpha})$. An equilibrium exists if and only if Bonacich centrality, $C_i^b$ is well defined. Clearly this happens if and only if $\overline{\alpha} < 1$.
\end{proof}

\newpage
\section{Non-linear sub-utility}\label{extensions:nonlinear utility}
The model in \Cref{sec:model} assumes that the sub-utility is linear. This was a simplification compared to the \cite{koszegi2006model} benchmark. However, this was only for clarity of exposition. It is easier to understand how the model works when more things are linear. It also allows the proofs to use simpler machinery that readers, especially those somewhat familiar wit networks, ought to be more comfortable with.

Here, I introduce non-linear sub-utility. The model is exactly the same as in \Cref{sec:model}, except that $i$'s utility function is now:
\begin{align*}
u_{i} = f \left( m(x_{i}) - \alpha_{i} \sum_{j} g_{ij} m(x_{j}) \right) - c x_{i} + b_i \sum_j \alpha_i g_{ij}  \tag{1*}
\end{align*}
where $m(\cdot)$ is twice continuously differentiable, strictly increasing, strictly concave, and WLOG $m(0) = 0$. I will not provide intuition alongside the results because while the maths becomes more complex with this additional non-linear function, there is no new economic insight.

One disadvantage of introducing this non-linear sub-utility is that a closed form characterisation of equilibrium behaviour is no longer possible. This makes the network effects far harder to see. Nevertheless, existence and uniqueness are unaffected.

\begin{rem}\label{rem:nonlinear equilibrium}
Suppose $\alpha_i < 1$ for all $i$. There is a unique Nash Equilibrium.
\end{rem}

\Cref{prop:ref strength} then goes through unaffected. 

\begin{prop}\label{prop:nonlinear ref strength}
If $\alpha_i < 1$ for all $i$: (i) $x_i^*$ is weakly increasing, and (ii) $u_i^*$ is weakly decreasing, in $\alpha_j$ for all $i,j$, and strictly so if $i=j$.
\end{prop}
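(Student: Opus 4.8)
The plan is to recognise that, although the Neumann-series argument behind \Cref{prop:ref strength} is unavailable once $m(\cdot)$ is non-linear (there is no closed form for $x^*$ to differentiate), the game remains one of strategic complements, so the monotone comparative statics machinery for supermodular games delivers part (i) directly. First I would record the interior first-order condition for agent $i$,
\[
f'\!\left(m(x_i) - \alpha_i \textstyle\sum_j g_{ij} m(x_j)\right) m'(x_i) = c,
\]
which defines equilibrium consumption (with $x_i^*>0$ under the boundary conditions on $f$ and $m$, exactly as in the linear case). Differentiating $u_i$ from the non-linear utility twice then yields the two sign facts I need, and both hold globally rather than only at the optimum: for $j\neq i$, $\partial^2 u_i/\partial x_i\partial x_j = f''(\cdot)\,(-\alpha_i g_{ij} m'(x_j))\,m'(x_i)\ge 0$ because $f''\le 0$, $m'>0$ and $g_{ij}\ge 0$; and $\partial^2 u_i/\partial x_i\partial\alpha_i = f''(\cdot)\,(-\sum_j g_{ij} m(x_j))\,m'(x_i)\ge 0$ by the same signs together with $m(x_j)\ge m(0)=0$ (the additively separable benefit $b_i\alpha_i$ drops out of both cross-partials).

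The first inequality says each $u_i$ has increasing differences in $(x_i,x_{-i})$, so—actions being scalars—this is a supermodular game, which I would set on a compact box $[0,\bar x]^n$ (all equilibria are bounded because $\alpha_i<1$ caps every reference point strictly below $\max_j m(x_j)$, and the boundary conditions rule out unbounded best responses). The second inequality, together with the fact that $u_i$ does not involve $\alpha_k$ for $k\neq i$, says that raising $\alpha_k$ weakly increases the marginal return to every player's action, strictly only for player $k$. Topkis's theorem and the Milgrom--Roberts comparative statics for supermodular games then imply that the largest and smallest equilibria are weakly increasing in $\alpha_k$, component-wise; since \Cref{rem:nonlinear equilibrium} guarantees a unique equilibrium, $x_i^*$ is weakly increasing in $\alpha_k$ for every $i$. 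For strictness at $i=k$ I would argue directly: implicit differentiation of the first-order condition shows the best response to the reference point $R_i:=\alpha_i\sum_j g_{ij}m(x_j)$ is strictly increasing (the ratio $f''m'/(f''(m')^2+f'm'')>0$, both terms being negative), and raising $\alpha_i$ while all neighbours' actions weakly rise strictly increases $R_i$ (using $\sum_j g_{ij}m(x_j^*)>0$, valid since some neighbour consumes a positive amount), so $x_i^*$ strictly increases.

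Part (ii) then mirrors the linear proof. Writing $z_i^*:=m(x_i^*)-R_i$ and differentiating $u_i^*=f(z_i^*)-c x_i^*$ with respect to $\alpha_k$, the own-action terms cancel by the first-order condition (the envelope step), leaving $\mathrm{d}u_i^*/\mathrm{d}\alpha_k = -f'(z_i^*)\,\mathrm{d}R_i/\mathrm{d}\alpha_k$. Since
\[
\frac{\mathrm{d}R_i}{\mathrm{d}\alpha_k}=\mathbf{1}\{i=k\}\textstyle\sum_j g_{ij}m(x_j^*)+\alpha_i\sum_j g_{ij}m'(x_j^*)\,\frac{\mathrm{d}x_j^*}{\mathrm{d}\alpha_k}\ge 0
\]
by part (i) and $f'>0$, welfare is weakly decreasing; when $i=k$ the first term is strictly positive, giving the strict claim.

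The main obstacle is the structural set-up of the first two paragraphs rather than any single computation: because there is no closed form, I cannot read off monotonicity from a series expansion as in \Cref{prop:ref strength}, so the argument must be routed through supermodular-game comparative statics, and care is needed both to verify that the increasing differences hold globally and to confine attention to a compact action space so that the monotonicity theorem and the uniqueness from \Cref{rem:nonlinear equilibrium} can be combined. Everything after the weak monotonicity in part (i)—the strictness argument and all of part (ii)—is then routine.
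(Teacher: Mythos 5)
Your proof is correct, but it routes both parts somewhat differently from the paper. For part (i), the paper runs a hands-on monotone best-response iteration: after $\alpha_j$ rises, $\hat{x}_j$ rises conditional on others' actions, this raises the optimal action of every $k$ with $g_{kj}>0$, and so on outward; since \Cref{rem:nonlinear equilibrium} gives a unique equilibrium, the monotone process converges to it and all equilibrium actions weakly rise, strictly for $j$. Your Topkis/Milgrom--Roberts argument is the formalised version of exactly this complementarity logic, and it is in fact tighter on two points the paper leaves informal: you verify increasing differences globally ($\partial^2 u_i/\partial x_i \partial x_j \geq 0$ and $\partial^2 u_i / \partial x_i \partial \alpha_i \geq 0$, the latter using $m(0)=0$) and you confine attention to a compact box $[0,\bar{x}]^n$ before invoking monotone comparative statics, whereas the paper simply asserts that the iteration ``must eventually converge''. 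For part (ii), the paper avoids your envelope computation altogether: it substitutes the first-order condition into utility to obtain $u_i^* = f\left(F\left[c/m'(x_i^*)\right]\right) - c x_i^*$, an expression strictly decreasing in $x_i^*$ alone, so (ii) follows from (i) with no differentiation of the equilibrium map; your derivation $\mathrm{d}u_i^*/\mathrm{d}\alpha_k = -f'(z_i^*)\,\mathrm{d}R_i/\mathrm{d}\alpha_k$ (which mirrors the paper's proof of the linear-case \Cref{prop:ref strength}) reaches the same conclusion but additionally needs $x_j^*$ to vary differentiably in $\alpha_k$, which you would have to justify via the implicit function theorem -- the paper's substitution trick buys freedom from that smoothness requirement. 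Two small cautions, neither fatal: your sign claim $f''m'/\left(f''(m')^2 + f'm''\right) > 0$ requires $f'' < 0$, while the model only states $f$ is concave; since the paper itself treats $F \equiv f'^{-1}$ as a strictly decreasing function throughout the non-linear appendix, strict concavity is implicitly maintained and you are entitled to it. And like the paper, you silently drop the benefit term $b_i \alpha_i$ from equilibrium utility in part (ii) -- strictly that term is increasing in $\alpha_i$ -- but this matches the paper's own convention, so it is not a discrepancy of your making.
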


\Cref{prop:cost} is subject to some minor modification because the threshold in part (ii) is difficult to characterise with non-linear $m(\cdot)$

\begin{prop}\label{prop:nonlinear cost}
If $\alpha_i < 1$ for all $i$: (i) $x_i^*$ is strictly decreasing in $c$, (ii) $u_i^*$ is strictly increasing in $c$ if $x_i^*$ is sufficiently sensitive to $c$ (specifically, if $\frac{d x_i^*}{d c} < \min \{ \frac{-x_i^*}{c} , \frac{m'(x_i^*)}{c \cdot m''(x_i^*)} \}$).
\end{prop}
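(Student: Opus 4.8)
The plan is to mirror the strategy of \Cref{prop:cost}, but since the closed-form Bonacich expressions are unavailable with non-linear $m(\cdot)$, I would replace them by an implicit-function argument run on the system of first-order conditions. Throughout write $z_i \equiv m(x_i) - \alpha_i \sum_j g_{ij} m(x_j)$, so that agent $i$'s first-order condition reads $f'(z_i^*) m'(x_i^*) = c$. By \Cref{rem:nonlinear equilibrium} the equilibrium is unique, and I would obtain its comparative statics by differentiating this system.

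For part (i), view the $n$ equations $\Gamma_i(x;c) \equiv f'(z_i) m'(x_i) - c = 0$ as defining $x^*(c)$ implicitly. The Jacobian has diagonal entries $a_i = f''(z_i^*)(m'(x_i^*))^2 + f'(z_i^*) m''(x_i^*) < 0$ (this is exactly the second-order condition) and off-diagonal entries $b_{ij} = -f''(z_i^*) \alpha_i g_{ij} m'(x_j^*) m'(x_i^*) \geq 0$. The implicit function theorem then yields $J \tfrac{dx^*}{dc} = \mathbf{1}$, where $J = \mathrm{diag}(a) + B$. Factoring $J = \mathrm{diag}(a)(I - P)$ with $P_{ij} = b_{ij}/|a_i| \geq 0$ (and $P_{ii}=0$ since $g_{ii}=0$), the sign follows from a Neumann-series argument: if the spectral radius of $P$ is below one, then $(I-P)^{-1} = \sum_k P^k \geq 0$ has diagonal entries at least one, so $\tfrac{dx^*}{dc} = (I-P)^{-1}\mathrm{diag}(a)^{-1}\mathbf{1}$ is a nonnegative matrix applied to the strictly negative vector $(1/a_1,\dots,1/a_n)^{T}$, hence strictly negative in every coordinate.

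The crux, and the step I expect to be the main obstacle, is bounding the spectral radius of $P$. A naive row-sum bound fails because $\sum_j g_{ij} m'(x_j^*)$ need not be comparable to $m'(x_i^*)$. I would instead apply the diagonal similarity transform $D = \mathrm{diag}(1/m'(x_i^*))$: a short computation shows the $i$-th row sum of $D^{-1} P D$ equals $\alpha_i \cdot \frac{-f''(z_i^*)(m'(x_i^*))^2}{|a_i|}$, and since $|a_i| = -f''(z_i^*)(m'(x_i^*))^2 - f'(z_i^*) m''(x_i^*)$ strictly exceeds $-f''(z_i^*)(m'(x_i^*))^2$, each row sum lies strictly below $\alpha_i < 1$. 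Thus $\|D^{-1}PD\|_\infty < 1$, so $\rho(P) = \rho(D^{-1}PD) < 1$; this simultaneously confirms $J$ is nonsingular, so the implicit function theorem genuinely applies and $x^*(c)$ is continuously differentiable.

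For part (ii), differentiating $u_i^* = f(z_i^*) - c x_i^* + b_i \alpha_i \sum_j g_{ij}$ (the last term constant in $c$) and substituting $f'(z_i^*) m'(x_i^*) = c$ produces the clean decomposition
\begin{align*}
\frac{d u_i^*}{dc} = \left( -x_i^* - c\,\frac{dx_i^*}{dc} \right) + f'(z_i^*)\,\frac{dz_i^*}{dc},
\end{align*}
where I deliberately keep $\tfrac{dz_i^*}{dc}$ unexpanded rather than resolving the network cross-terms. The first condition in the stated minimum, $\tfrac{dx_i^*}{dc} < -x_i^*/c$, is precisely what makes the first bracket positive. For the second bracket, I would differentiate the first-order condition to get $f''(z_i^*) m'(x_i^*)\tfrac{dz_i^*}{dc} = 1 - f'(z_i^*) m''(x_i^*)\tfrac{dx_i^*}{dc}$; the left coefficient is negative, so $\tfrac{dz_i^*}{dc} > 0$ holds iff the right-hand side is negative, i.e. (after substituting $f'(z_i^*) = c/m'(x_i^*)$ and multiplying by $m'(x_i^*)>0$) iff $c\,m''(x_i^*)\tfrac{dx_i^*}{dc} > m'(x_i^*)$, which is exactly the second condition $\tfrac{dx_i^*}{dc} < m'(x_i^*)/(c\,m''(x_i^*))$ rearranged (the inequality flips because $c\,m''(x_i^*)<0$). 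With $f'(z_i^*) > 0$ the second bracket is then positive, and summing the two positive brackets gives $\tfrac{d u_i^*}{dc} > 0$.
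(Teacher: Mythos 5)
Your proposal is correct, and while part (ii) matches the paper in substance, your part (i) takes a genuinely different route. The paper proves (i) with a monotone best-reply argument: an increase in $c$ lowers each agent's best response given others' actions, strategic complementarity propagates the decrease, and uniqueness (\Cref{rem:nonlinear equilibrium}) guarantees the adjustment process converges to the new equilibrium with all actions strictly lower. You instead run the implicit function theorem on the system of first-order conditions, factor the Jacobian as $\mathrm{diag}(a)(I-P)$ with $P \geq 0$, and control $\rho(P)$ via the diagonal similarity $D = \mathrm{diag}(1/m'(x_i^*))$; your computation that the transformed row sums equal $\alpha_i \cdot \bigl(-f''(z_i^*)(m'(x_i^*))^2\bigr)/|a_i| < \alpha_i < 1$ checks out, and you are right that the naive row-sum bound fails because $\sum_j g_{ij} m'(x_j^*)$ need not be comparable to $m'(x_i^*)$ -- the reweighting is exactly the right fix. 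Your route is heavier machinery, but it buys something the paper's does not: it establishes that $x^*(c)$ is continuously differentiable, which the paper's own part (ii) uses without comment when it differentiates $c\,x_i^*$ and $c/m'(x_i^*)$ with respect to $c$; your part (i) thus closes a small implicit gap, while the paper's t\^atonnement argument is shorter and needs no smoothness or Jacobian invertibility. For part (ii) the two arguments coincide: the paper substitutes the equilibrium identity $z_i^* = F(c/m'(x_i^*))$ into utility and requires that $c\,x_i^*$ and $c/m'(x_i^*)$ both strictly decrease in $c$, which are precisely your two thresholds; your decomposition $\frac{du_i^*}{dc} = \bigl(-x_i^* - c\,\frac{dx_i^*}{dc}\bigr) + f'(z_i^*)\frac{dz_i^*}{dc}$ together with implicit differentiation of the first-order condition recovers them directly, and in fact your derivation matches the proposition's stated threshold $m'(x_i^*)/(c\,m''(x_i^*))$, whereas the paper's proof text carries a spurious minus sign at that step. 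One point worth flagging: your assertion that the coefficient $f''(z_i^*)\,m'(x_i^*)$ is strictly negative requires $f'' < 0$ at the equilibrium point, slightly stronger than the stated concavity of $f$; this is harmless, however, since the paper's pervasive use of $F \equiv f^{'-1}$ already presupposes that $f'$ is strictly decreasing on the relevant range.
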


Without a closed form solution for $x_i^*$, it is not possible to show the tight link between changes to the network structure and changes in equilibrium consumption (i.e. an analogue to \Cref{prop:perturbation}). Nevertheless, simply looking at the First Order Condition is suggestive.

\begin{align}\label{eq:nonlinear FOC}
m(x_i) - \alpha_i \sum_j g_{ij} m(x_j) = F \left( \frac{c}{m'(x_i)} \right).
\end{align}

We can see that agents' actions are complements. If one of $i$'s neighbours consumes more, this pushes up $i$'s reference point, which in turn pushes her to consume more in an effort to keep up. Intuitively, if $i$ shifts from comparing herself with low consumption friends to high consumption neighbours, then her reference point increases, and so she chooses to consume more. The more ``The Joneses'' consume, the more you need to do to keep up with them.

\subsection*{Proofs}

\begin{proof}[\emph{\textbf{Proof of Remark \ref{rem:nonlinear equilibrium}.}}]
Consider the First Order Condition [FOC] (with some minimal rearranging) -- \Cref{eq:nonlinear FOC}.
LHS is strictly increasing in $x_i$. $m'(x_i)$ is strictly decreasing in $x_i$, and $F(\cdot)$ is strictly decreasing in its argument. So RHS is strictly decreasing in $x_i$. Hence there is a unique value of $x_i$ that solves the First Order Condition. I use $x_i^*$ to denote \emph{equilibrium} consumption (i.e. when \emph{all} FOCs hold simultaneously). For clarity, let $\hat{x}_i$ denote the value that solves just agent $i$'s FOC, for given values of $x_j$, $j \neq i$. \\

\noindent \textbf{Existence.} Note that $\hat{x}_i$ is increasing in $x_j$. So if no agent wants to choose more than $A$ when all other agents are choosing $A$, then there cannot exist an equilibrium where any agent chooses $A$ or more.
Suppose all other agents choose $x_j = A$. Then $i$ wants to choose less than $A$ if and only if: $m(A)[ 1 - \alpha_i] > F(c/m'(A))$.\footnote{Note that after we pulled $m(A)$ out of the summation, $\sum_j g_{ij} = 1$, and so disappears.} 
Since LHS is increasing in $A$, and RHS is decreasing in $A$, there must exist some finite value of $A$ such that this inequality is true for all $i$. Denote this $\hat{A}$. Given this, we only need to consider the \emph{compact} space $[0,\hat{A}]^n$.
The functions are continuous by assumption. Therefore Brouwer's Fixed Point Theorem (e.g. \cite{border1985fixed}) guarantees existence. \\

\noindent \textbf{Uniqueness.} Proof by contradiction. Denote the equilibrium with the smallest action $x^*$. Suppose there is another equilibrium, $x^{* \prime} \equiv x^* + D$ (it is convenient and WLOG to write the second equilibrium as the initial one, plus some change). Consider the agent $i$ whose action increases the most when moving from $x^*$ to $x^{* \prime}$. That is, the agent $i$ such that $D_i \geq D_j$ for all $j \neq i$. Solve the first order condition for this agent $i$, assuming that all other agents are playing the equilibrium $x^{* \prime}$. 
This is, we solve:
\begin{align*}
    x_i - \alpha_i \sum_j g_{ij} [x_j^* + D_j]) &= F \left( \frac{c}{m'(x_i)} \right)
\end{align*}
WLOG let the solution takes the form $x_i^* + z_i$.
\begin{align*}
    x_i^* + z_i - \alpha_i \sum_j g_{ij} x_j^* - \alpha_i \sum_j g_{ij} D_j  &= F \left( \frac{c}{m'(x_i^* + z_i)} \right)
\end{align*}
Substituting in the solution to the FOC from the initial equilibrium $x^*$ (and rearranging):
\begin{align*}
   z_i - \alpha_i \sum_j g_{ij} D_j &= F \left( \frac{c}{m'(x_i^* + z_i)} \right) - F \left( \frac{c}{m'(x_i^*)} \right)
\end{align*}
Since $F(\cdot)$ is decreasing in its argument. 
\begin{align*}
    z_i - \alpha_i \sum_j g_{ij} D_j < 0
\end{align*}
This actually assumes that $z_i > 0$. But if $z_i \leq 0$ then we have already reached a contradiction, since we require the increase in $i$'s action to be larger than the increase in all other agents' actions (when moving from $x^*$ to $x^{* \prime}$).
Then applying the requirement that $D_i \geq D_j$ for all $j \neq i$:
\begin{align*}
    z_i < \alpha_i \sum_j g_{ij} D_j \leq \alpha_i \sum_j g_{ij} D_i 
    = \alpha_i D_i < D_i
\end{align*}
Contradiction.
\end{proof}

\begin{proof}[\emph{\textbf{Proof of Proposition \ref{prop:nonlinear ref strength}}}]
For some fixed parameters $g$, $\alpha$, $c$, $b$ and functions $f(\cdot)$, $m(\cdot)$ there is a unique equilibrium $x^*$. 
\textbf{(i)} Suppose that $\alpha_j \uparrow$ for some $j$. Then $\hat{x}_j \uparrow$ (i.e. $j$'s optimal action rises \emph{conditional on} all other agents' actions). In turn $\hat{x}_j \uparrow \implies \hat{x}_k \uparrow$ for all $k$ s.t. $g_{kj} > 0$. In turn this increases $x_i$ for all $i$ (but only weakly so, since there is no guarantee that there exists a directly path from $j$ to $i$).

\Cref{rem:nonlinear equilibrium} guarantees a unique equilibrium. So this process must eventually converge to an equilibrium. But clearly all equilibrium actions weakly rises, and strictly so for the ``initial'' agent (who experienced the increase in $\alpha_j$).
\textbf{(ii)} In equilibrium we have:
\begin{align*}
    m(x_i^*) - \alpha_i \sum_j g_{ij} m(x_j^*) = F \left( \frac{c}{m'(x_i^*)} \right)
\end{align*}
We can substitute this back into the utility function to find equilibrium utility.
\begin{align*}
    u_i^* = f \left( F \left[ \frac{c}{m'(x_i^*)} \right]  \right) - c x_i^*
\end{align*}
We know from (i) that $\alpha_j \uparrow$ weakly increases $x_i^*$. Finally, notice that $u_i^*$ is strictly decreasing in $x_i^*$. To see this clearly, note that $x_i^* \uparrow \implies m'(x_i^*) \downarrow \implies F(c/m'(x_i^*)) \downarrow \implies f(F(c/m'(x_i^*))) \downarrow$. 
\end{proof}

\begin{proof}[\emph{\textbf{Proof of Proposition \ref{prop:nonlinear cost}.}}]
For some fixed parameters $g$, $\alpha$, $c$, $b$ and functions $f(\cdot)$, $m(\cdot)$ there is a unique equilibrium $x^*$. 
\textbf{(i)} Suppose that $c \uparrow$. This decreases RHS of the FOC (\Cref{eq:nonlinear FOC}). To restore equality, it must be that $\hat{x}_i \downarrow$ (as this decreases LHS and increases RHS). In turn $\hat{x}_i \downarrow \implies \hat{x}_j \downarrow$. \Cref{rem:nonlinear equilibrium} guarantees a unique equilibrium. So this process must eventually converge to an equilibrium. But all equilibrium actions strictly decrease.
\textbf{(ii)} Equilibrium utility is:
\begin{align*}
    u_i^* = f \left( F \left[ \frac{c}{m'(x_i^*)} \right]  \right) - c x_i^*.
\end{align*}
This was derived in the proof to \Cref{prop:nonlinear ref strength}. It is clear that a sufficient condition for $u_i^*$ to be strictly increasing in $c$ is that: (a) $c x_i^*$ and (b) $c/m'(x_i^*)$ are strictly decreasing in $c$ (recall that $f(F(\cdot))$ is decreasing in its argument)\footnote{So a decrease in $c/m'(x_i^*)$ increases $f(F(c/m'(x_i^*)))$.} 
Consider each condition separately. First,
\begin{align*}
    \frac{d (c x_i^*)}{d c} = c \cdot \frac{d x_i^*}{d c} + x_i^*,
\end{align*}
which is less that zero if and only if $\frac{d x_i^*}{d c} < - x_i^*/c$. Second, 
\begin{align*}
    \frac{d ( c/m'(x_i^*) )}{d c} = [m'(x_i^*)]^{-1} - c m''(x_i^*) [m'(x_i^*)]^{-2} \frac{d x_i^*}{d c}
\end{align*}
which is less than zero if and only if $\frac{d x_i^*}{d c} < - m'(x_i^*)/[c \cdot m''(x_i^*)]$.
\end{proof}
An equivalent sufficient condition for $u_i^*$ to be strictly decreasing in $c$ would be easy. The method for proving it would be similar to above. The condition would simply be  $\frac{d x_i^*}{d c} > \max \{ \frac{-x_i^*}{c} , \frac{m'(x_i^*)}{c \cdot m''(x_i^*)} \}$. 
Due to the inability to find closed form solutions for $x_i^*$ when $m(\cdot)$ is nonlinear, I am not able to provide a tight link between network centrality and how welfare is affected by cost changes. Nevertheless, the flavour of \Cref{prop:cost} goes through -- agent's whose consumption is highly sensitive to costs end up experiencing welfare gains when the marginal cost of consumption rises. This is again because the benefits (in terms of a lower reference point) of lower consumption by neighbours more than offsets the directly higher cost of consumption.

\newpage
\section{Loss Aversion}\label{extensions:loss aversion}
In \cref{sec:model} we assume that $u_i = f(x_i - \alpha_i \sum_j g_{ij} x_j)$, where $f(\cdot)$ is strictly increasing and concave. This captures references dependence and diminishing sensitivity, but \emph{not} loss aversion. To capture loss aversion, we now assume that $f(\cdot)$ is strictly concave [resp. convex] in the positive [resp. negative], and kinked at zero. This is in line with the Kahneman and Tversky's canonical setup \citep{tversky1979prospect, tversky1991loss}. Formally, we can write these properties as the following restrictions on the function $f(\cdot)$: (i) $f:a \to \mathbb{R}$ for $a \in \mathbb{R}$, (ii) $f'(a) > 0 \ \forall a$, (iii) $f''(a) < 0 \ \forall a > 0$, $f''(a) > 0 \ \forall a < 0$, (iv) $\lim_{a \to 0^{-} } | f'(a) | > \lim_{a \to 0^{+} } | f'(a) |$, (v) $f''(a) \in \mathbb{R} \ \forall a \neq 0$, $f(0) = 0$.

Adding the functional form requirements for loss aversion does not affect \Cref{prop:equilibrium}. This is because agents are always able to, and always choose to, consume above their reference point in this model. Therefore the convexity in the negative domain that is the core addition of loss averse preferences (over and above reference dependent preferences) has no bite.

\section{Multiple Goods}\label{extensions:K goods}
In \cref{sec:model} we assume that there is only 1 good. Here, consider the model with $K$ goods (i.e. a K-dimensional consumption bundle), as in \cite{koszegi2006model}. This generalisation has no effect.

There are $K$ goods, $x_1,...,x_K$. Each agent $i$ simultaneously chooses a consumption bundle $(x_{i1}, x_{i2}, ... , x_{iK}) \in \mathbb{R}^{n}_{+}$. Following \cite{koszegi2006model}, I assume that preferences are additively separable over goods. All other elements of the model are the same as in \cref{sec:model}. Therefore $i$'s utility function is:
\begin{align}
    u_{i} = \sum_{k=1}^{K} \left[ f \left( x_{ik} - \sum_{j} \alpha_{i} g_{ij} x_{jk} \right) - c_k x_{ik} \right] + b_i \sum_j \alpha_i g_{ij}
\end{align}
where $f(\cdot)$ has the same properties as in \cref{sec:model}. Now consider agents' First Order Conditions.
\begin{align}
    \frac{d u_i}{d x_{ik}} = f'\left( x_{ik} - \sum_{j} \alpha_{i} g_{ij} x_{jk} \right) - c_k = 0 \ \text{ for all } i \ \text{ and for all } k
\end{align}
These First Order Conditions are clearly identical to those in the 1-good case. Obviously there are now $K$ First Order Conditions for each agent, but $x_{ik}$ only appears in the First Order Conditions relating to good $k$, and never in any relating to $k' \neq k$. Therefore the solution for each good $k$ is the same as it would be if $k$ were the only good.

\newpage
\section{Network Structure}\label{extensions:network structure}
This section presents technical results regarding the effects of comparison shifts, and formalises claims made in section \ref{sec:results}. The natural starting point is to characterise the exact impact of a single comparison shift.

\begin{prop}\label{prop:perturb effect restated}
	The change in agent $i$'s consumption due a comparison shift $D$, $\Delta x_i^*$, is equal to:
	$$\frac{ \phi B_{ir} (C^{b}_{u} - C^{b}_{d}) }{1 - \phi (B_{ur} - B_{dr}) } \cdot f'^{-1}(c)$$
\end{prop}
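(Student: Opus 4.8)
The plan is to leverage the closed-form equilibrium from \Cref{prop:equilibrium} together with the explicit rank-two perturbation formula supplied by \Cref{perturbation theorem} (the Chang 2006 identity). Since this statement restates \Cref{lem:perturb effect}, the route is direct: express both the original and the post-shift equilibrium consumptions as scaled row sums of a resolvent matrix, and then show that the difference of those row sums collapses to the claimed expression.

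First I would recall from \Cref{prop:equilibrium} that, writing $B = (I-G)^{-1}$, the equilibrium satisfies $x_i^* = f'^{-1}(c) \sum_j B_{ij} = f'^{-1}(c)\, C_i^b$, where the second equality uses that the Neumann series representation $B = \sum_{k=0}^\infty G^k$ makes $\sum_j B_{ij}$ precisely the Bonacich centrality $C_i^b$ of \Cref{defn:bonacich centrality}. A comparison shift $D$ replaces $G$ by $G+D$; since the shift is feasible, the perturbed network still has spectral radius below one, so the new equilibrium exists and is given by the same formula with $B$ replaced by $(I-[G+D])^{-1}$, i.e. $x_i^{'*} = f'^{-1}(c) \sum_j (I-[G+D])^{-1}_{ij}$.

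Next I would substitute the perturbation identity $(I-[G+D])^{-1} = B + H$ from \Cref{perturbation theorem}, so that $\Delta x_i^* = x_i^{'*} - x_i^* = f'^{-1}(c) \sum_j H_{ij}$. Reading off the $(i,j)$ entry of $H$, namely $H_{ij} = \frac{\phi B_{ir}(B_{uj}-B_{dj})}{1 - \phi(B_{ur}-B_{dr})}$, the factor $\frac{\phi B_{ir}}{1-\phi(B_{ur}-B_{dr})}$ does not depend on the summation index $j$ and pulls out of the sum. The remaining sum $\sum_j (B_{uj}-B_{dj})$ is exactly $C_u^b - C_d^b$ by the same row-sum identification of centrality used above, applied to rows $u$ and $d$. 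Assembling these pieces yields the stated formula.

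The algebra is routine once \Cref{perturbation theorem} is in hand, so the only genuine content is that external matrix identity; the main obstacle is really its justification (which is why it is proved separately in the appendix rather than here). The one point that still needs a line of care is confirming that the denominator $1-\phi(B_{ur}-B_{dr})$ is nonzero — guaranteed by feasibility of the shift, which keeps the perturbed resolvent well defined — so that the division producing $H$, and hence the final ratio, is legitimate.
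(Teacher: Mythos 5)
Your proof is correct and follows essentially the same route as the paper's own proof of \Cref{lem:perturb effect}: write $x_i^* = f'^{-1}(c)\sum_j B_{ij}$, apply \Cref{perturbation theorem} to get $\Delta x_i^* = f'^{-1}(c)\sum_j H_{ij}$, pull the $j$-independent factor out of the sum, and identify $\sum_j(B_{uj}-B_{dj})$ with $C_u^b - C_d^b$. Your added remarks on feasibility guaranteeing existence of the perturbed equilibrium and a nonzero denominator match the paper's treatment (cf.\ \Cref{perturb corr 5}) and are a welcome touch of extra care.
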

\begin{quote}
\begin{proof}[\emph{\textbf{Proof of Proposition \ref{prop:perturb effect restated}}}]
This is a restatement of \Cref{lem:perturb effect} (which was needed to prove other results from the main text).
\end{proof}
\end{quote}

This exact characterisation is not very user-friendly and is largely technical. However, it is useful because it forms the basis for a number of further results, most importantly \Cref{prop:perturbation}. Its implications for the effects of a comparison shift on different agents are immediate.

\begin{cor}\label{cor:perturb cor1}
	Given a comparison shift $D$, the change in optimal actions is: \\
	(i) in the same direction for all agents, \\
	(ii) proportional to the amount that agent $i$ compares herself to $r$ (the subject of the comparison shift) prior to the shift.
\end{cor}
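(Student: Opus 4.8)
The plan is to read both claims directly off the closed-form expression for $\Delta x_i^*$ established in Proposition \ref{prop:perturb effect restated} (equivalently Lemma \ref{lem:perturb effect}), after isolating the only factor that depends on the agent $i$. I would rewrite that expression as
\[
\Delta x_i^* = B_{ir} \cdot \underbrace{\frac{\phi\,(C^b_u - C^b_d)\, f'^{-1}(c)}{1 - \phi\,(B_{ur} - B_{dr})}}_{=:\,K},
\]
and observe that the scalar $K$ depends only on the data of the comparison shift — the source $r$, the endpoints $u,d$, the magnitude $\phi$, and the unperturbed entries $B_{ur},B_{dr}$ — so it is common to every agent. All dependence on $i$ is carried by the single term $B_{ir}$.

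Part (ii) is then immediate: $\Delta x_i^* = K\, B_{ir}$ is proportional to $B_{ir}$ with proportionality constant $K$ independent of $i$. To match the wording of the statement I would note that $B_{ir} = (I-G)^{-1}_{ir} = \left[\sum_{k=0}^{\infty} G^k\right]_{ir}$ counts the total weight of all walks, of every length, from $i$ to $r$; this is exactly the strength — direct and indirect — with which $i$ compares herself to $r$ before the shift (the same walk-counting quantity underlying Bonacich centrality in Definition \ref{defn:bonacich centrality}, but targeted at $r$ rather than summed over all targets).

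For part (i) I would invoke nonnegativity of $B$. Since $G$ has nonnegative entries and $\lambda_1 < 1$, Proposition \ref{prop:equilibrium} supplies the Neumann representation $B = \sum_{k=0}^{\infty} G^k$, a sum of entrywise-nonnegative matrices, so $B_{ir} \geq 0$ for every $i$. Hence each $\Delta x_i^* = K\, B_{ir}$ is a nonnegative multiple of the common scalar $K$, giving $\operatorname{sign}(\Delta x_i^*) = \operatorname{sign}(K)$ whenever $B_{ir}>0$, and $\Delta x_i^* = 0$ exactly when $i$ is not connected to $r$. In particular no two agents can move in strictly opposite directions, which is the claim.

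I do not expect a genuine obstacle, as the corollary is a structural reading of an already-derived formula rather than a fresh argument. The only points needing a little care are the interpretive clause in (ii) — justifying that $B_{ir}$ really is ``the amount $i$ compares herself to $r$,'' which I handle via the walk-counting reading above — and the observation that the sign conclusion in (i) holds regardless of the sign of the denominator $1 - \phi(B_{ur}-B_{dr})$, since that denominator lives inside the common factor $K$ and therefore affects all agents identically.
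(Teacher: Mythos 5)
Your proposal is correct and matches the paper's own proof essentially verbatim: both isolate $B_{ir}$ as the only $i$-dependent factor in the expression from Lemma \ref{lem:perturb effect}, derive (ii) from the constant-of-proportionality reading, and derive (i) from the entrywise nonnegativity of $B = (I-G)^{-1}$. Your added touches -- the explicit Neumann-series justification that $B_{ir} \geq 0$, the walk-counting interpretation of $B_{ir}$, and the remark that the denominator's sign sits inside the common factor $K$ -- are sound refinements of the same argument rather than a different route.
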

\begin{quote}
\begin{proof}[\emph{\textbf{Proof of Corollary \ref{cor:perturb cor1}.}}]
Proposition \ref{prop:perturb effect restated} characterises the change in agent $i$'s optimal action, $\Delta x_{i}^{*}$. The only term in the expression for $\Delta x_i^*$ that depends on $i$ is $B_{ir}$. \textbf{(i)} Since $B_{ij} \geq 0$ for all $i,j$, $\Delta x_{i}^{*}$ has the same sign for all $i$. \textbf{(ii)} $\Delta x_{i}^{*}$ is equal to $B_{ir}$ multiplied by some terms that do not depend on $i$. Therefore $\Delta x_{i}^{*}$ is proportional to $B_{ir}$ (i.e. the amount that $i$ compares herself to $r$).
\end{proof}
\end{quote}

It also allows us to examine the nature of the returns to the magnitude of a comparison shift. When an agent $r$ moves from direct comparison with a lower centrality agent to direct comparison with a higher centrality agent (i.e. $C^{b}_{u} > C^{b}_{d}$), then there are increasing returns to the magnitude of a comparison shift if and only if agent $r$ influences agent $u$ more than she influences agent $d$, but the gap is not too large.

\begin{cor}\label{cor:perturb cor2}
	Given a comparison shift $D$, if $C^{b}_{u} > C^{b}_{d}$, then for all $i$, $x_i^*$ is: convex in $\phi$ if and only if $B_{ur} - B_{dr} \in (0, \frac{1}{\phi})$ , and concave in $\phi$ otherwise.
\end{cor}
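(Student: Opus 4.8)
The plan is to work directly from the closed-form expression for $\Delta x_i^*$ in \Cref{prop:perturb effect restated}. Since the post-shift consumption is $\hat{x}_i^* = x_i^* + \Delta x_i^*$ and $x_i^*$ does not depend on $\phi$, the curvature of $\hat{x}_i^*$ in $\phi$ coincides with that of $\Delta x_i^*$. So it suffices to sign the second derivative of $\Delta x_i^*$ with respect to $\phi$.

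I would reuse the shorthand from the proof of \Cref{prop:perturbation}, setting $y \equiv B_{ir}(C^b_u - C^b_d) f'^{-1}(c)$ and $z \equiv B_{ur} - B_{dr}$, so that $\Delta x_i^* = y \phi (1 - z\phi)^{-1}$. That proof already records $\frac{d \Delta x_i^*}{d \phi} = y(1 - z\phi)^{-2}$; one further differentiation gives
\begin{align*}
\frac{d^2 \Delta x_i^*}{d \phi^2} = 2 y z (1 - z\phi)^{-3}.
\end{align*}
Because $C^b_u > C^b_d$ by hypothesis, $B_{ir} > 0$, and $f'^{-1}(c) > 0$, the factor $y$ is strictly positive, so the sign of the second derivative equals the sign of $z(1 - z\phi)^{-3}$, that is, the product of the signs of $z$ and of $1 - z\phi$. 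Note that $z$ and $1/\phi$ do not depend on $i$, so the convexity verdict is common to all agents, which is what lets the statement quantify over all $i$.

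The final step is the sign analysis. Convexity requires the product above to be positive, which arises either when $z > 0$ and $1 - z\phi > 0$, or when $z < 0$ and $1 - z\phi < 0$. For a feasible shift with $\phi > 0$ the second case cannot occur: $z < 0$ forces $z\phi < 0$, hence $1 - z\phi > 1 > 0$. The first case is exactly $0 < z < 1/\phi$, i.e. $B_{ur} - B_{dr} \in (0, 1/\phi)$. Hence $\hat{x}_i^*$ is convex in $\phi$ precisely on this interval and concave otherwise, as claimed.

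The computation itself is routine once the explicit formula is in hand. The only delicate point — and the step I expect to be the main obstacle — is the sign bookkeeping: one must recognise that the second route to a positive second derivative (both $z$ and $1 - z\phi$ negative) is vacuous for positive $\phi$, so that the convexity region collapses to the single interval $(0, 1/\phi)$ rather than a union of two regions.
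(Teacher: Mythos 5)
Your proposal is correct and follows essentially the same route as the paper's own proof: the same shorthand $y \equiv B_{ir}(C^b_u - C^b_d)f'^{-1}(c)$ and $z \equiv B_{ur} - B_{dr}$, the same second derivative $2yz(1-z\phi)^{-3}$, and the same sign analysis yielding convexity exactly on $z \in (0, 1/\phi)$. Your explicit observation that the case $z<0$ with $1-z\phi<0$ is vacuous for $\phi>0$ is a slightly cleaner way of organising the case split the paper does by partitioning $z$ into three regions, but it is the same argument.
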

\begin{quote}
\begin{proof}[\emph{\textbf{Proof of Corollary \ref{cor:perturb cor2}.}}]
For ease of exposition, let: $ B_{ir} (C^{b}_{u} - C^{b}_{d}) f'^{-1}(c) \equiv y$ and $(B_{ur} - B_{dr}) \equiv z$. Let $\hat{x}_i^* \equiv x_i^* + \Delta x_i^*$ be the new equilibrium value of consumption. So $\hat{x}_i^* = x_i^* + y \phi (1 - z \phi)^{-1}$.
Now consider the second derivative. $\frac{d^2 \hat{x}_i^* }{d \phi^2} = 2 yz (1 - z \phi)^{-3}$. We can partition values of $z$ into three cases: (i) $z < 0$, (ii) $0 \leq z \leq \frac{1}{\phi}$, (iii) $z > \frac{1}{\phi}$. In case (ii), $\frac{d^2 \hat{x}_i^*}{d \phi^2} \geq 0$. In cases (i) and (iii), $\frac{d^2 \hat{x}_i^*}{d \phi^2} < 0$. These observations follow straightforwardly from the fact that $y >0$ and $\phi > 0$. The second derivative determines convexity/concavity. Note that if $(C^{b}_{u} - C^{b}_{d}) < 0$ then $y < 0$ and so all results flip.
\end{proof}
\end{quote}

Now consider a \emph{composite comparison shift}, $\widehat{D} = D_1 + ... + D_Z$, which is constructed by summing up $n \geq 2$ comparison shifts. The impact of a composite comparison shift is not equal to the sum of the effects of each comparison shift that form part of it. This is because the impact of any one comparison shift depends on the \emph{whole} network immediately before it occurs.

\begin{cor}\label{cor:perturb cor3}
	Consider a composite comparison shift $\widehat{D}$. The impact of $\widehat{D} = D_1 + ... + D_Z$ is not the same as the sum of the impacts of $D_1 , ... , D_Z$.
\end{cor}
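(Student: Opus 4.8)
The statement is a generic non-equality, so it is enough to isolate a non-vanishing \emph{interaction} term and then exhibit one network on which it does not cancel. The conceptual source of the failure is already visible in \Cref{prop:perturb effect restated}: the effect of a single shift is governed by the resolvent $B = (I-G)^{-1}$ through both the numerator $B_{ir}$ and the denominator $1 - \phi(B_{ur}-B_{dr})$, and this denominator is a nonlinear function of the \emph{current} network. Applying one shift changes $B$, and hence changes the effect that any subsequent shift has. The plan is to make this precise with the Woodbury identity (the same machinery underlying \Cref{perturbation theorem}) and then verify non-degeneracy on a small example.

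First I would observe that each comparison shift is rank one: $D_k = \phi_k\, e_{r_k}(e_{u_k}-e_{d_k})^{\!\top}$, where $e_\ell$ is the $\ell$-th standard basis vector. A composite of $Z=2$ shifts is therefore a rank-two update, $\widehat{D} = D_1 + D_2 = U V^{\top}$ with $U = [\,\phi_1 e_{r_1},\ \phi_2 e_{r_2}\,]$ and $V = [\,e_{u_1}-e_{d_1},\ e_{u_2}-e_{d_2}\,]$. Applying the Woodbury identity to $I - G - UV^{\top}$ gives
\begin{align*}
(I - [G+\widehat{D}])^{-1} = B + B\,U\,(I_2 - V^{\top} B\, U)^{-1}\,V^{\top} B .
\end{align*}
By contrast, summing the two individual impacts from \Cref{perturbation theorem} yields $\sum_{k=1}^{2} B u_k (1 - v_k^{\top} B u_k)^{-1} v_k^{\top} B$, where $u_k,v_k$ are the columns of $U,V$. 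This sum is exactly what the full expression collapses to when the $2\times 2$ coupling matrix $V^{\top} B U$ is \emph{diagonal}; its diagonal entries $v_k^{\top} B u_k = \phi_k (B_{u_k r_k} - B_{d_k r_k})$ reproduce precisely the single-shift denominators of \Cref{perturbation theorem}.

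The two expressions therefore agree if and only if the off-diagonal entries of $V^{\top} B U$ vanish. These are the cross-influence terms $\phi_1 (B_{u_2 r_1} - B_{d_2 r_1})$ and $\phi_2 (B_{u_1 r_2} - B_{d_1 r_2})$ --- i.e. the extent to which the source $r_1$ of the first shift reaches the targets $u_2,d_2$ of the second through the network, and vice versa. Whenever at least one of these is nonzero the $2\times 2$ inverse is non-diagonal, so the composite impact differs from the additive prediction; passing to row sums and multiplying by $f'^{-1}(c)>0$ transfers the discrepancy to consumption, since $x_i^* = f'^{-1}(c)\sum_j B_{ij}$. To finish, I would pin down a concrete witness: a small directed weighted network (three or four agents suffice, with $\alpha_i<1$ so that $\lambda_1<1$) together with two feasible shifts ($\phi_k \le G_{r_k d_k}$) chosen so that $r_1$ lies upstream of the targets of the second shift, forcing $B_{u_1 r_2}-B_{d_1 r_2}\neq 0$; a direct computation then shows strict inequality. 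The main obstacle is exactly this non-degeneracy: if the two shifts act on parts of the network that cannot reach one another the coupling terms are zero and the impacts \emph{do} add, so the example must deliberately couple the shifts while respecting feasibility.
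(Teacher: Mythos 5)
Your proposal is correct, and it takes a genuinely different --- and sharper --- route than the paper. The paper's own proof stays at the level of sequential application: it writes the composite effect as $\sum_{i} H\bigl(D_i, (I-G-\sum_{j<i}D_j)^{-1}\bigr)$, notes that $(I-G-X)^{-1} \neq (I-G)^{-1}$ for $X \neq 0$, and concludes that each later shift acts on a perturbed resolvent, so the naive sum $\sum_i H\bigl(D_i,(I-G)^{-1}\bigr)$ is wrong. You instead treat the composite shift as a single rank-$Z$ update and apply the block Woodbury identity, which buys you something the paper does not have: an exact \emph{criterion} for additivity, namely that the impacts add if and only if the coupling matrix $V^{\top}BU$ is diagonal, i.e.\ the cross-influence terms $\phi_1(B_{u_2 r_1}-B_{d_2 r_1})$ and $\phi_2(B_{u_1 r_2}-B_{d_1 r_2})$ vanish. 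This also exposes a looseness in the paper's argument that you correctly flag: when the two shifts act on parts of the network that cannot reach one another, the impacts \emph{do} add exactly, so the corollary as stated holds only generically, and the paper's step from ``each $H(D_i,\cdot)$ changes'' to ``the sums differ'' is not airtight (individual discrepancies could in principle cancel; your diagonality criterion rules this out cleanly when coupling is present). The only incomplete piece on your side is that the concrete witness network is described rather than computed --- you should actually exhibit, say, a three-agent directed chain with $\alpha_i<1$ and two feasible shifts with $r_1$ upstream of $u_2,d_2$ and verify the off-diagonal term is nonzero --- but the construction you outline manifestly succeeds, so this is a small finishing step rather than a gap in the argument.
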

\begin{quote}
\begin{proof}[\emph{\textbf{Proof of Corollary \ref{cor:perturb cor3}.}}]
W.L.O.G any composite comparison shift $\hat{D}$ can be expressed as the sum of comparison shifts $D_1,..., D_Z$: $\hat{D} \equiv \sum_{i=1}^{Z} D_i$. It follows from Lemma \ref{perturbation theorem} that $H$ is a function of both the comparison shifts $D_i$ and $(I-G)^{-1}$.

Trivially $(I-G-X)^{-1} \neq (I-G)^{-1}$ for any $X \neq 0$. Therefore the effect of a comparison shift $D_i$ depends on the network immediately prior to the comparison shift: $H(D_i, (I - G - \sum_{j=1}^{i-1} D_i)^{-1}) \neq H(D_i , (I-G)^{-1})$. The effect of a composite comparison shift $\hat{D}$ is therefore not equal to the sum of the effects of individual comparison shifts: $\sum_{i=1}^{Z} H(D_i, (I - G - \sum_{j=1}^{i-1} D_i)^{-1}) \neq \sum_{i=1}^{Z} H(D_i , (I-G)^{-1})$.
\end{proof}
\end{quote}

For a given network, it would be straightforward to calculate the change in actions due to a more complex change in the reference structure using an application of the formula provided by \cite{chang2006inversion}. However, we cannot obtain analytic results due to the interactions between the effects of each comparison shift. Even with a few comparison shifts, the outcome would be too complicated to yield any insight. However, if the comparison shifts are all of an equal size, for example $\Delta$, then interactions between the are of the order $\Delta^{2}$. Therefore, when considering small changes to the network (i.e. small $ Z \cdot \Delta$) we can reasonably disregard the interactions -- a naive summation is a close approximation for the actual aggregate effect.

\begin{cor}\label{cor:perturb cor4}
	If the total change to the network, $Z \cdot \Delta$, is small,then the impact of a composite comparison shift $\widehat{D}$ approximately equal to the sum of the impacts of each comparison shift that makes up $\widehat{D}$.
\end{cor}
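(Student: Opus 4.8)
The plan is to write the exact composite effect as a telescoping sum of single-shift perturbations and then bound, term by term, the discrepancy between each term evaluated on the \emph{updated} network and on the \emph{original} network. Order the constituent shifts as $\widehat{D} = D_1 + \cdots + D_Z$, each of magnitude $\Delta$, and set $B_0 = (I-G)^{-1}$ and $B_j = (I - G - \sum_{k \le j} D_k)^{-1}$. Applying \Cref{perturbation theorem} once per shift expresses the exact effect as the telescoping sum $B_Z - B_0 = \sum_{j=1}^{Z} (B_j - B_{j-1}) = \sum_{j=1}^{Z} H(D_j, B_{j-1})$, whereas the naive sum of individual impacts (each computed on $G$, as in \Cref{prop:perturb effect restated}) is $\sum_{j=1}^{Z} H(D_j, B_0)$. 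The object to control is therefore the error $E = \sum_{j=1}^{Z} \big[ H(D_j, B_{j-1}) - H(D_j, B_0) \big]$.

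The key observation comes straight from the Chang formula: every entry of $H(D_j, \cdot)$ carries an explicit factor $\phi = \Delta$, so I can write $H(D_j, B) = \Delta\,\Psi_\Delta(B)$, where $\Psi_\Delta$ has numerator entries $B_{ir}(B_{uk} - B_{dk})$ (degree-two polynomials in the entries of $B$) over the denominator $1 - \Delta(B_{ur} - B_{dr})$. First I would show that, across all intermediate networks $G + \sum_{k \le j} D_k$, the entries of the inverses $B_j$ stay uniformly bounded and the denominators stay uniformly bounded away from zero; this follows from feasibility of each constituent shift together with the working assumption $\alpha_i < 1$ (preserved under feasible shifts), so that $\Psi_\Delta$ is uniformly bounded and Lipschitz in $B$ on the relevant region for all small $\Delta$. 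Next, since each single-shift term is itself $O(\Delta)$, the accumulated change obeys $\| B_{j-1} - B_0 \| = \big\| \sum_{k < j} H(D_k, B_{k-1}) \big\| = O\big( (j-1)\Delta \big)$.

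Combining these facts gives the per-term bound $H(D_j, B_{j-1}) - H(D_j, B_0) = \Delta\big[ \Psi_\Delta(B_{j-1}) - \Psi_\Delta(B_0) \big] = \Delta \cdot O\big( \| B_{j-1} - B_0 \| \big) = O\big( j\Delta^2 \big)$. Summing over $j$ yields $\| E \| = O\big( \Delta^2 \sum_{j=1}^{Z} j \big) = O\big( (Z\Delta)^2 \big)$, which is second order in the total change $Z\Delta$, while the naive sum itself is $O(Z\Delta)$. Hence the relative error is $O(Z\Delta)$ and vanishes as $Z\Delta \to 0$, which is the claimed approximation. Transferring this from the change in $B$ to the change in consumption is immediate: by \Cref{prop:equilibrium} consumption is $x_i^* = f'^{-1}(c) \sum_j B_{ij}$, a fixed linear functional of $B$, so the same $O\big( (Z\Delta)^2 \big)$ bound carries over to $\Delta x_i^*$.

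The main obstacle will be the uniform-control step: making precise that the intermediate inverses $B_j$ remain well-defined and bounded, and that the denominators $1 - \Delta(B_{ur} - B_{dr})$ never approach zero along the sequence of shifts, so that a single Lipschitz constant for $\Psi_\Delta$ can be used throughout. This is exactly where feasibility of each shift and the preservation of $\lambda_1 < 1$ do the work; everything else is bookkeeping with the Chang formula and the geometric-series expansion of its denominator in $\Delta$.
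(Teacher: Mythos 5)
Your proposal is correct and takes essentially the same route as the paper's own proof: both write the exact composite effect via repeated application of the Chang perturbation lemma, observe that each single-shift term carries an explicit factor $\Delta$ so that evaluating it at a drifted inverse rather than at $(I-G)^{-1}$ costs only $\mathcal{O}(\Delta^{2})$ per term, and conclude the interaction effects vanish when the total change $Z\cdot\Delta$ is small. If anything your bookkeeping is more careful than the paper's: you track the accumulated drift $\lVert B_{j-1}-B_{0}\rVert = \mathcal{O}(j\Delta)$ and obtain a total interaction error of $\mathcal{O}\bigl((Z\Delta)^{2}\bigr)$, and you make explicit the uniform-control step (bounded inverses, denominators bounded away from zero, preserved by feasibility and $\alpha_i<1$), whereas the paper states the per-term error as $\mathcal{O}(\Delta^{2})$ without the $j$-dependence and the collective error as $\mathcal{O}(Z\cdot\Delta^{2})$, leaving both points implicit.
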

\begin{quote}
\begin{proof}[\emph{\textbf{Proof of Remark \ref{cor:perturb cor4}.}}]
Let the composite comparison shift $\hat{D}$ be constructed from a series of comparison shifts $(D_1 , ... , D_Z)$, each of a size $\Delta$. $H(D_i , (I-G-X)^{-1})$ is the change in the optimal actions following a comparison shift $D_i$, when the starting network is $(G+X)$. From Lemma \ref{perturbation theorem}: $H(D_i, (I-G)^{-1}) = \Delta \cdot fn((I-G)^{-1}) = \mathcal{O}(\Delta)$. Then $H(D_i, (I-G-D_j)^{-1}) = \Delta \cdot fn((I-G-D_j)^{-1}) = \Delta \cdot fn((I-G)^{-1} + \mathcal{O} (\Delta) ) = H(D_i, (I-G)^{-1}) + \mathcal{O}(\Delta^2)$. By a simple induction argument we can see that $H(D_i, (I-G- \sum_{j=1}^{J} D_j)^{-1}) = H(D_i, (I-G)^{-1}) + \mathcal{O}(\Delta^2)$. Therefore the effect of the earlier comparison shifts $D_1,...,D_{i-1}$ on the change in optimal actions induced by $D_i$ is on the order $\Delta^2$. If $Z \cdot \Delta$ is sufficiently small, then we can ignore these interaction effects, which are collectively of the order $Z \cdot \Delta^2$.
\end{proof}
\end{quote}

This approach works because we are able to take a linear approximation (i.e. ignore any terms that are $\mathcal{O}(\delta^{2})$). The total change to agents' Bonacich centrality will be well approximated by simply adding up the effects of each comparison shift. However, we should be very wary of extrapolating from small changes to large ones. It is difficult to characterise the interactions and small changes may not be indicative of large ones. Past observations from small or localised changes may cease to be a useful guide in the face of large-scale social change happens.

Finally, we prove that assuming $\alpha_i < 1$ for all $i$ is sufficient to guarantee existence, even after a comparison shift (something we claimed, but did not prove, in section \ref{sec:results}).

\begin{rem}\label{perturb corr 5}
If $\alpha_{i} < 1 \ \forall i \in N$ then $(I - (ag+D))^{-1}$ exists for any network $ag$, and any comparison shift $D$.
\end{rem}
\begin{quote}
\begin{proof}[\emph{\textbf{Proof of Remark \ref{perturb corr 5}.}}]
This follows trivially from Proposition \ref{prop:equilibrium}, which proves that $\alpha_{i} < 1$ for all $i$ is a sufficient condition to ensure equilibrium existence for any network $G$. Since a comparison shift leaves $\alpha_{i}$ unchanged for all $i$ (by definition), then Proposition \ref{prop:equilibrium} continues to apply.
\end{proof}
\end{quote}

However, if $\alpha_{i} > 1$ for some $i$, then it is possible that a solution exists for some, \emph{but not all}, reference structures $g$. In this instance it is necessary to check that a solution exists both before and after the shift. That is: $\lambda_1(G) < 1$ and $\lambda_1(G + D) < 1$. While it is possible for only one of these conditions to be met, the results cannot apply unless both hold.

\newpage
\section{Chang (2006): A perturbation theorem}\label{extension:chang2006}
I restate \Cref{perturbation theorem} for clarity and then present a proof. This is a special case of \cite{chang2006inversion}, which simplifies the proof. I have also aligned the notation to match my paper.

\begin{lem*}[A perturbation theorem. \citet{chang2006inversion}] 
If $D$ is a comparison shift, then $(I - [G + D])^{-1} - (I - G)^{-1} = H$. Where:
\begin{align*}
H = \frac{\phi}{1 - \phi (B_{ur} - B_{dr}) }
	\begin{pmatrix}
	B_{1r} (B_{u1} - B_{d1}) & \cdots & B_{1r} (B_{un} - B_{dn}) \\
	\vdots &  \ddots & \vdots \\
	B_{nr} (B_{u1} - B_{d1}) & \cdots & B_{nr} (B_{un} - B_{dn})
	\end{pmatrix}
\end{align*}
\end{lem*}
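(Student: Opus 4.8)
The plan is to recognise that the comparison shift $D$ is a rank-one matrix and then apply the Sherman--Morrison formula, since $(I - [G+D])^{-1} = ((I-G) - D)^{-1}$ is a rank-one update of $(I-G)^{-1} = B$. First I would write $D = u v^{T}$, where $u = \phi\, e_r$ and $v = e_u - e_d$, with $e_k$ the $k$-th standard basis (column) vector. To verify this, note that $u v^{T}$ has $(i,j)$ entry $(e_r)_i (e_u - e_d)_j$, which vanishes unless $i = r$, and in row $r$ equals $\phi$ at column $u$ and $-\phi$ at column $d$ --- exactly the matrix $D$ of \Cref{defn: elementary perturbation}. This identification is the crux of the argument: the two-entry row perturbation is rank one, which is what reduces the general result of \citet{chang2006inversion} to a one-line update formula.

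Next I would invoke the Sherman--Morrison identity $(A - u v^{T})^{-1} = A^{-1} + \frac{A^{-1} u\, v^{T} A^{-1}}{1 - v^{T} A^{-1} u}$, valid whenever $A$ is invertible and the scalar denominator is nonzero. Here $A = I - G$, whose inverse $B$ exists because $\lambda_1 < 1$, exactly as in the proof of \Cref{prop:equilibrium}. I would then compute the three ingredients. The column vector $A^{-1} u = \phi\, B e_r$ is $\phi$ times the $r$-th column of $B$, so $(A^{-1}u)_i = \phi B_{ir}$. The row vector $v^{T} A^{-1} = (e_u - e_d)^{T} B$ is the difference of rows $u$ and $d$ of $B$, so $(v^{T} A^{-1})_j = B_{uj} - B_{dj}$. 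The scalar $v^{T} A^{-1} u = \phi (B_{ur} - B_{dr})$.

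Finally I would assemble $H = (I-[G+D])^{-1} - (I-G)^{-1} = \frac{A^{-1} u\, v^{T} A^{-1}}{1 - v^{T} A^{-1} u}$ and read off its entries. The outer product $A^{-1}u\, v^{T}A^{-1}$ has $(i,j)$ entry $\phi B_{ir}(B_{uj} - B_{dj})$, and dividing by $1 - \phi(B_{ur} - B_{dr})$ gives precisely $H_{ij} = \frac{\phi\, B_{ir}(B_{uj} - B_{dj})}{1 - \phi(B_{ur} - B_{dr})}$, which matches the stated matrix column by column.

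There is no genuine analytical difficulty here; the only points requiring care are the bookkeeping of signs and indices across the three products (in particular keeping the $-D$ sign straight, which flips the denominator from $1 + v^{T}A^{-1}u$ to $1 - v^{T}A^{-1}u$), and checking that the denominator $1 - \phi(B_{ur} - B_{dr})$ is nonzero. The latter is not a side condition but is exactly the requirement that $I - (G+D)$ remain invertible --- equivalently, that the post-shift equilibrium exist, which holds under feasibility of the shift together with $\alpha_i < 1$ for all $i$. The main ``obstacle'' is therefore conceptual rather than computational: spotting the rank-one structure of the comparison shift, after which the claim follows directly.
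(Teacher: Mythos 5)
Your proposal is correct and follows essentially the same route as the paper: both reduce the comparison shift to a low-rank update of $B = (I-G)^{-1}$ and apply the Sherman--Morrison/Woodbury identity (the paper works through Chang's partitioned-Woodbury presentation with $\overline{\underline{D}} = \begin{bmatrix} \phi & -\phi \end{bmatrix}$, which is just your outer product $D = \phi\, e_r (e_u - e_d)^{T}$ in block notation). Your handling of the denominator --- identifying $1 - \phi(B_{ur} - B_{dr}) \neq 0$ with invertibility of $I - (G+D)$, guaranteed under feasibility and $\alpha_i < 1$ --- matches the paper's treatment (cf.\ \Cref{perturb corr 5}).
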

\begin{proof}[\emph{\textbf{Proof of Lemma \ref{perturbation theorem}.}}]
This proof is a simplified version of \citet{chang2006inversion}. By the Woodbury Identity Matrix: if $A$ and $D$ are $n \times n$ matrices, and $A$ is non-singular, then $(A - D)^{-1} = E + E (I - DE)^{-1} DE$, where $E \equiv A^{-1}$ [Woodbury (1950) and Sherman and Morrison (1949)]. Now partition the matrices $D$ and $E$:
\begin{align*}
D =
	\begin{bmatrix}
	\overline{\underline{D}} & 0 \\
	0 & 0 \\
	\end{bmatrix}
\qquad
E =
	\begin{bmatrix}
	\overline{\underline{E}} & E_{2} \\
	E_{1} & E_{3} \\
	\end{bmatrix}
\qquad
\overline{E} =
	\begin{bmatrix}
	\overline{\underline{E}} \\
	E_{1}
	\end{bmatrix}
\qquad
\underline{E} =
	\begin{bmatrix}
	\overline{\underline{E}} & B_{E} \\
	\end{bmatrix}
\end{align*}
where $\overline{\underline{D}}$ is the smallest matrix that contains non-zero elements of $D$, and $\overline{\underline{E}}$ contains the transpose of the elements in $\overline{\underline{E}}$.\footnote{So if $\overline{\underline{D}}$ consists of elements $\overline{\underline{D}}_{ij}$ for $i \in I \ , \ j \in J$, then $\overline{\underline{D}}$ consists of elements $\overline{\underline{D}}_{ji}$ for $i \in I \ , \ j \in J$.}
Simple matrix algebra yields: $(A - D)^{-1} = E + \overline{E} (I - \overline{\underline{D}} \ \overline{\underline{E}})^{-1} \overline{\underline{D}} \ \underline{E}$.

\noindent Now recall that $D$ is a `comparison shift' (as per Definition \ref{defn: elementary perturbation}). Therefore: $\overline{\underline{D}} = \begin{bmatrix} D_{ru} & D_{rd} \end{bmatrix} = \begin{bmatrix} \phi & - \phi \end{bmatrix}$, and so:
\vspace{-8mm}
\begin{align*}
\overline{\underline{E}} &=
	\begin{bmatrix}
	E_{ur} \\
	E_{dr}
	\end{bmatrix}
\ , \  \overline{E} =
	\begin{bmatrix}
	E_{1r} \\
	\vdots \\
	E_{nr}
	\end{bmatrix}
\text{ and } \underline{E} =
	\begin{bmatrix}
	E_{u1} & \cdots & E_{un} \\
	E_{d1} & \cdots & E_{dn} \\
	\end{bmatrix}
	\end{align*}
\vspace{-8mm}
\noindent This yields;
\begin{align*}
(A - D)^{-1} &= E +
	\begin{bmatrix}
	E_{1r} \\
	\vdots \\
	E_{nr}
	\end{bmatrix}
\cdot
	\left( I -
	\begin{bmatrix} D_{ru} & D_{rd} \end{bmatrix}
	\begin{bmatrix}
	E_{ur} \\
	E_{dr}
	\end{bmatrix}  \right)^{-1}
\cdot
	\begin{bmatrix} D_{ru} & D_{rd} \end{bmatrix}
\cdot
	\begin{bmatrix}
	E_{u1} & \cdots & E_{un} \\
	E_{d1} & \cdots & E_{dn} \\
	\end{bmatrix}
\end{align*}
Now multiply out, substitute in $D_{ru} = \phi$ and $D_{rd} = - \phi$, noticing that the matrix inverse $(I - \overline{\underline{D}} \ \overline{\underline{E}})^{-1}$ is a scalar, and rearrange.
\vspace{-6mm}
\begin{align*}
(A-D)^{-1} - A^{-1} = \frac{\phi}{1 - \phi (B_{ur} - B_{dr}) }
	\begin{pmatrix}
	B_{1r} (B_{u1} - B_{d1}) & \cdots & B_{1r} (B_{un} - B_{dn}) \\
	\vdots & \ddots & \vdots \\
	B_{nr} (B_{u1} - B_{d1}) & \cdots & B_{nr} (B_{un} - B_{dn})
	\end{pmatrix}
\end{align*}
\noindent This result holds for any non-singular $n \times n$ matrix $A$. Letting $A = (I - G)$ yields the result.
\end{proof}

\newpage
\section{Heterogeneous Costs}\label{extensions:heterogeneous costs}
The cost parameter $c_{i}$ reflects agents' underlying propensity/ability to take the action, $x_{i}$. In this section, we examine the implications of introducing cost heterogeneity. The key takeaway is that allowing for heterogeneous costs has relatively little impact on the insights of the main model.

With this generalisation, equilibrium play depends on the individual entries in the full matrix $B$, rather than only on Bonacich centralities (the row sums of $B$). Nevertheless, the existence condition is unaffected, and the solution takes a similar form. 
At this point, it is helpful to interpret individual elements of the matrix $B$ and to introduce a notion of ``generalised Bonacich centrality''. 

\begin{defn}[Comparisons]
The comparison matrix is $B \equiv (I - G)^{-1}$, where $B_{ij}$ captures how much $i$ compares herself to $j$.
\end{defn}

Bonacich centrality captures an agent's connectedness to the network as a whole. The comparison matrix breaks this down to the individual agent level. An element $B_{ij}$ measures the total weight of walks from $i$ to $j$, and captures the extent to which $i$ compares herself to $j$. This is a dis-aggregation of Bonacich centrality.
We can then weight these individual level comparisons with a function of the cost parameters to obtain a generalised notion Bonacich centrality.\footnote{It is clear that when $c_j = c$ for all $j$ this collapses back to the original Bonacich centrality.}

\begin{defn}[Generalised Bonacich Centrality]\label{general bonacich centrality}
The vector of Bonacich centralities for a network $G \equiv a g$ is $C^{gen} = B \cdot f^{'-1}(c)$ The Bonacich centrality of agent $i$ is $C^{gen}_{i} = \sum_{j} B_{ij} f^{'-1}(c_{j})$
\end{defn}
With this definition we can restate \Cref{prop:equilibrium}, accounting for heterogeneous costs. The condition for existence depends only on the network and so is unaffected by cost heterogeneity. However, the optimal actions are now proportional to our new notion of generalised centrality, rather than the usual Bonacich centralities.

\begin{prop}[Existence and Solution]\label{rem:solution HC}
With heterogeneous costs, an equilibrium exists if and only if $\lambda_{1} < 1$. If this condition is met, then there is a unique Nash Equilibrium: $x_{i}^{*} = C^{gen}_{i}$
\end{prop}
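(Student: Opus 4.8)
The plan is to follow the proof of \Cref{prop:equilibrium} essentially verbatim for the solution and for the sufficiency of $\lambda_1 < 1$, and then to add a separate Perron--Frobenius argument for necessity, which is the only genuinely new content. First I would write down agent $i$'s first-order condition. Because $f$ is concave, $u_i$ is concave in $x_i$, and the maintained assumption $f'(0) > c_i > f'(+\infty)$ (for every $i$) forces an interior best response: the derivative of $u_i$ at $x_i = 0$ equals $f'\!\left( - \sum_j G_{ij} x_j \right) - c_i \geq f'(0) - c_i > 0$ for any non-negative profile, and it eventually turns negative. Hence the best response is pinned down by
\begin{align*}
x_i - \sum_j G_{ij} x_j = f'^{-1}(c_i),
\end{align*}
which in vector form reads $(I - G) x = F(c)$, where $F(c) \equiv (f'^{-1}(c_1), \dots, f'^{-1}(c_n))^{T}$. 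A Nash Equilibrium is exactly a non-negative solution of this linear system.

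For the sufficiency direction I would invoke the Neumann series exactly as in \Cref{prop:equilibrium}: $\lambda_1 < 1$ implies $B \equiv (I-G)^{-1} = \sum_{k=0}^{\infty} G^k$ exists, is unique, and is entrywise non-negative. The unique solution is therefore $x^* = B F(c)$, i.e. $x_i^* = \sum_j B_{ij} f'^{-1}(c_j) = C_i^{gen}$ by \Cref{general bonacich centrality}. Non-negativity of $x^*$ is immediate since $B \geq 0$ and each $f'^{-1}(c_j) > 0$, so $x^*$ is a genuine equilibrium, and uniqueness is inherited from the invertibility of $(I-G)$. This step is routine: heterogeneity only replaces the scalar $f'^{-1}(c)$ by the vector $F(c)$, and since invertibility of $(I-G)$ depends solely on $G$, the existence threshold is untouched.

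The hard part will be necessity, i.e. showing that no equilibrium exists when $\lambda_1 \geq 1$. Here I would use Perron--Frobenius on the non-negative matrix $G$: let $w \geq 0$, $w \neq 0$, be a left eigenvector associated with the spectral radius $\lambda_1$, so that $w^{T} G = \lambda_1 w^{T}$. Any equilibrium $x^* \geq 0$ satisfies $(I-G) x^* = F(c)$; left-multiplying by $w^{T}$ gives $(1 - \lambda_1)\, w^{T} x^* = w^{T} F(c)$. The right-hand side is strictly positive, since $w \geq 0$, $w \neq 0$, and $F(c) > 0$ entrywise, whereas for $\lambda_1 \geq 1$ the left-hand side is $\leq 0$ because $w^{T} x^* \geq 0$ --- a contradiction. (For $\lambda_1 = 1$ this is the Fredholm solvability obstruction from $w^{T}(I-G) = 0$; for $\lambda_1 > 1$ it rules out any non-negative fixed point of the best-reply map.)

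Two points require care. First, $w$ must be taken non-negative even when $G$ is reducible; Perron--Frobenius guarantees exactly this for the spectral radius, and the identification of $\lambda_1$ (the largest eigenvalue modulus) with the Perron root uses non-negativity of $G$. Second, the strict positivity $F(c) > 0$ rests on the assumption $f'(0) > c_i$ holding for \emph{every} $i$, which I would state explicitly as the heterogeneous analogue of the benchmark assumption. With these in hand the ``if and only if'' and the closed-form $x_i^* = C_i^{gen}$ follow.
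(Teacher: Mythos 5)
Your proof is correct, and for the parts the paper actually proves it takes the same route: interiority of best responses from $f'(0) > c_i > f'(+\infty)$, the linear system $(I-G)x = F(c)$, and the Neumann series $(I-G)^{-1} = \sum_{k=0}^{\infty} G^k \geq 0$ delivering the unique non-negative solution $x^* = B F(c)$, i.e.\ $x_i^* = C_i^{gen}$. This is exactly the paper's argument, whose proof of \Cref{rem:solution HC} is a one-line citation of the proof of \Cref{prop:equilibrium}, noting only that $f'^{-1}(c_j)$ now depends on $j$ and cannot be pulled through the summation. Where you genuinely diverge is the ``only if'' half: the paper never proves it. The proof of \Cref{prop:equilibrium} establishes sufficiency only, and the surrounding text treats $\lambda_1 \geq 1$ informally (Bonacich centrality ``is not well defined'' and the solution ``ceases to have a meaningful economic interpretation''). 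Your Perron--Frobenius argument fills this gap cleanly: because marginal utility at $x_i = 0$ is strictly positive for any non-negative profile, every Nash equilibrium is interior and hence a non-negative solution of $(I-G)x^* = F(c)$; pairing with a non-negative left eigenvector $w$ of $G$ associated with the spectral radius gives $(1-\lambda_1)\, w^{T} x^* = w^{T} F(c) > 0$, which is impossible when $\lambda_1 \geq 1$. This disposes of $\lambda_1 = 1$ (where $I-G$ is singular and $w$ supplies the solvability obstruction) and $\lambda_1 > 1$ (where $I-G$ may well be invertible, but the solution cannot be entrywise non-negative) in one stroke, and your two flagged care points --- that Perron--Frobenius yields $w \geq 0$ even for reducible $G$, and that $F(c) > 0$ entrywise requires $f'(0) > c_i$ for \emph{every} $i$ --- are precisely the right ones. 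In short, your write-up matches the paper's mechanics for existence, uniqueness, and the closed form, and is strictly more complete in rigorously establishing the necessity claim that the proposition asserts but the paper leaves unargued.
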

\begin{quote}
\begin{proof}
This follows from the proofs to Proposition \ref{prop:equilibrium}, with the only change that $f^{'-1}(c_{j})$ now depends on $j$, and so cannot be pulled out through the summation sign. 
\end{proof}
\end{quote}

The other results from \Cref{sec:results} also extend to this heterogeneous cost setting. The proofs here only provide the required extension from their homogeneous cost analogues.

\begin{prop}[Reference strength]
If $\lambda_1 < 1$: (i) $x_{i}^{*}$ is weakly increasing, and (ii) $u_i^*$ is weakly decreasing, in $\alpha_{j}$ for all $i,j$, and strictly so if $i=j$.
\end{prop}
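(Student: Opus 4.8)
The plan is to mirror the homogeneous-cost proof of \Cref{prop:ref strength} verbatim, inserting only the two changes forced by heterogeneity. First, by \Cref{rem:solution HC} the equilibrium action is now the generalised centrality $x_i^* = C_i^{gen} = \sum_\ell B_{i\ell} F(c_\ell)$ (with $B = (I-G)^{-1}$ and $F = f^{'-1}$) rather than $F(c)\sum_\ell B_{i\ell}$. Second, each agent now faces the first-order condition $f'(x_i^* - \alpha_i \sum_\ell g_{i\ell} x_\ell^*) = c_i$, so the equilibrium identity of \Cref{lem:foc in eq} reads $x_i^* - \alpha_i \sum_\ell g_{i\ell} x_\ell^* = F(c_i)$. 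Welfare is taken to be $u_i^* = f(x_i^* - \alpha_i \sum_\ell g_{i\ell} x_\ell^*) - c_i x_i^*$, exactly as in the proof of \Cref{prop:ref strength} but with the agent-specific cost $c_i$.

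For part (i), I would write $(I - \mathrm{diag}(\alpha) g)^{-1} = \sum_{k=0}^{\infty} (\mathrm{diag}(\alpha) g)^{k}$ and note, exactly as in the homogeneous case, that raising any single $\alpha_j$ (holding the structure $g$ fixed) weakly increases every entry of every power $(\mathrm{diag}(\alpha) g)^{k}$, hence weakly increases every $B_{i\ell}$. Since $x_i^* = \sum_\ell B_{i\ell} F(c_\ell)$ is a combination of these entries with strictly positive weights $F(c_\ell) > 0$, it is weakly increasing in each $\alpha_j$. The only genuinely new point is strictness at $i=j$: I would isolate the length-one contribution to $x_i^*$, namely $\sum_\ell G_{i\ell} F(c_\ell) = \alpha_i \sum_\ell g_{i\ell} F(c_\ell)$, whose slope $\sum_\ell g_{i\ell} F(c_\ell)$ is strictly positive because $g$ is row-stochastic ($\sum_\ell g_{i\ell} = 1$) and $F(\cdot) > 0$. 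This piece rises linearly in $\alpha_i$ while all higher-order terms rise weakly, giving a strict increase.

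For part (ii), I would differentiate $u_i^*$ with respect to $\alpha_j$ and substitute $f'(\cdot) = c_i$. Writing $R_i \equiv \alpha_i \sum_\ell g_{i\ell} x_\ell^*$ for the reference point, the $x_i^*$ terms cancel just as in the homogeneous proof, leaving $\frac{d u_i^*}{d \alpha_j} = -c_i \frac{d R_i}{d \alpha_j}$, where $\frac{d R_i}{d \alpha_j} = \mathbf{1}\{j=i\}\sum_\ell g_{i\ell} x_\ell^* + \alpha_i \sum_\ell g_{i\ell} \frac{d x_\ell^*}{d \alpha_j}$. Since $c_i, \alpha_i, g_{i\ell} \ge 0$ and $\frac{d x_\ell^*}{d \alpha_j} \ge 0$ by part (i), the derivative is $\le 0$, i.e.\ weakly decreasing. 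For $i=j$ the extra term contributes $-c_i \sum_\ell g_{i\ell} x_\ell^*$, which is strictly negative because every $x_\ell^* = \sum_m B_{\ell m} F(c_m) \ge B_{\ell\ell} F(c_\ell) \ge F(c_\ell) > 0$, so $\sum_\ell g_{i\ell} x_\ell^* > 0$; this delivers the strict decrease.

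The only step requiring care is the strictness claim in part (i). In the homogeneous case strictness came free, since a common factor $F(c)$ multiplied the strictly increasing row sum $C_i^b$; here strictness must instead be sourced from a heterogeneously weighted subsum, so I must confirm that the strictly $\alpha_i$-linear piece carries strictly positive weight. This is exactly where row-stochasticity of $g$ and the assumption $f'(0) > c$ (which guarantees $F(c_\ell) > 0$ for every $\ell$) do the work. Everything else is a transcription of the \Cref{prop:ref strength} argument with $c$ replaced throughout by the agent-specific $c_i$.
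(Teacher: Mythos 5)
Your proof is correct and takes essentially the same route as the paper, whose own proof of this heterogeneous-cost proposition simply transcribes the homogeneous argument: Neumann-series monotonicity of $B=(I-G)^{-1}$ in $\alpha_j$ for part (i), and substitution of the first-order condition $f'(\cdot)=c_i$ to reduce $du_i^*/d\alpha_j$ to $-c_i\,dR_i/d\alpha_j$ for part (ii). If anything you are more explicit than the paper at the two strictness steps --- sourcing strictness in (i) from the length-one walk term $\alpha_i\sum_\ell g_{i\ell}F(c_\ell)$ and in (ii) from the direct term $\mathbf{1}\{j=i\}\sum_\ell g_{i\ell}x_\ell^*$, both of which the paper's write-up leaves implicit.
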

\begin{quote}
\begin{proof}
\textbf{(i)} \Cref{prop:ref strength} shows that if a given $\alpha_{j}$ increases, then all elements of $(ag)^{k}$ weakly increase for any $k$. Consequently, all elements $(I - ag)^{-1}_{ij}$ increase. 
\textbf{(ii)} having found that $x_i^*$ is increasing in $\alpha_k$, the second part of the proof to \Cref{prop:ref strength} goes through unchanged.
\end{proof}
\end{quote}

\begin{prop}[Cost]
If $\lambda_1 < 1$: (i) $x_i^*$ is strictly decreasing and convex in $c_j$ for all $j$, (ii) $u_i^*$ is strictly increasing in $c_j$ for all $j \neq i$.
\end{prop}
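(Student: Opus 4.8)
The plan is to reuse the closed form from \Cref{rem:solution HC}, namely $x_i^* = C^{gen}_i = \sum_k B_{ik} F(c_k)$ with $B = (I-G)^{-1}$ and $F \equiv f^{'-1}$ strictly decreasing and convex, and to follow the template of \Cref{prop:cost} but with the cost now entering agent by agent.

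For part (i) I would differentiate this closed form directly. Holding $c_k$ fixed for $k \neq j$, $x_i^*$ is an affine function of $F(c_j)$, so $\partial x_i^*/\partial c_j = B_{ij} F'(c_j)$ and $\partial^2 x_i^*/\partial c_j^2 = B_{ij} F''(c_j)$. Since every entry $B_{ij} = \sum_{k \geq 0} (G^k)_{ij} \geq 0$ (with $B_{ii} \geq 1$), and $F' < 0$, $F'' > 0$, the first derivative is nonpositive and the second nonnegative, strictly so whenever $B_{ij} > 0$ (automatic for $j=i$, and for $j \neq i$ precisely when $i$ is connected to $j$). This gives strict decrease and convexity.

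For part (ii) the one real step is to suppress the direct cost channel using the equilibrium first-order condition. By the heterogeneous-cost analogue of \Cref{lem:foc in eq}, the argument of $f$ for agent $i$ equals $F(c_i)$ in equilibrium, so $u_i^* = f(F(c_i)) - c_i x_i^* + b_i \alpha_i \sum_j g_{ij}$. Fixing $j \neq i$ and differentiating in $c_j$, the term $f(F(c_i))$ depends only on $c_i$ and the linking term only on the (fixed) network, so both drop out; only $-c_i x_i^*$ survives. Substituting the part-(i) derivative gives $\partial u_i^*/\partial c_j = -c_i B_{ij} F'(c_j) > 0$, since $c_i > 0$, $B_{ij} > 0$, and $F'(c_j) < 0$.

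The main obstacle is not computation but seeing why the restriction to $j \neq i$ is essential: for $j = i$ the term $f(F(c_i))$ no longer has zero derivative, and differentiating $f(F(c_i)) - c_i x_i^*$ in $c_i$ reintroduces exactly the competing forces of \Cref{prop:cost}(ii), whose balance hinges on a centrality threshold rather than having an unambiguous sign. Once the FOC is used to absorb the reference-point dependence into the single scalar $F(c_i)$, the cross-cost effect of $c_j$ on $u_i^*$ acts \emph{only} through $i$'s total spending $c_i x_i^*$, which cheaper neighbours unambiguously lower, delivering the result. A minor caveat is that ``strictly'' again presumes $B_{ij} > 0$; if $i$ is not connected to $j$ then $u_i^*$ is locally constant in $c_j$, so the statement is read for agents $i$ who do compare, directly or indirectly, to $j$.
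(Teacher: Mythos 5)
Your proposal is correct, and in substance it is the paper's own argument, with one small and arguably cleaner reorganisation in part (ii). Part (i) is exactly the paper's proof spelled out: differentiate $x_i^* = \sum_k B_{ik}F(c_k)$ from \Cref{rem:solution HC} and invoke $F' < 0$, $F'' > 0$. For part (ii), the paper differentiates $u_i^* = f\left(x_i^* - \sum_k G_{ik}x_k^*\right) - c_i x_i^*$ first and then cancels the direct terms using the equilibrium condition $f'(\cdot) = c_i$, obtaining $du_i^*/dc_j = -c_i F'(c_j)\sum_{k \neq i} G_{ik}B_{kj}$; you instead substitute the equilibrium identity $x_i^* - \alpha_i \sum_k g_{ik}x_k^* = F(c_i)$ \emph{before} differentiating (the same device the paper itself uses in the proof of \Cref{prop:perturbation}), which collapses the cross-cost effect to $-c_i\,\partial x_i^*/\partial c_j = -c_i B_{ij}F'(c_j)$. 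The two formulas coincide: the off-diagonal rows of $(I-G)B = I$ give $B_{ij} = \sum_k G_{ik}B_{kj}$ for $j \neq i$, so nothing is lost or gained mathematically, though your version makes the dependence on $i$'s comparison weight $B_{ij}$ immediate. Two further points in your favour. First, your strictness caveat is genuinely needed: the derivative vanishes when $B_{ij} = 0$, so ``strictly increasing for all $j \neq i$'' implicitly requires $i$ to be connected to $j$; the paper's proof has the same requirement (it needs $\sum_k G_{ik}B_{kj} > 0$) but does not flag it. Second, your sign is the right one — the paper's final display asserts $-f'(\cdot)\sum_{k\neq i}G_{ik}B_{kj}F'(c_j) < 0$, which is a typo, since $F' < 0$ makes that expression positive, exactly as the claim that $u_i^*$ is increasing requires and as your derivation shows. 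One shared reliance worth knowing about: both you and the paper simply assert that $F = f'^{-1}$ is convex, which does not follow from concavity of $f$ alone (it needs something like $f''' \geq 0$); this is inherited from \Cref{prop:cost} and is not a defect of your argument relative to the paper's.
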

\begin{quote}
\begin{proof}
(i) follows straightforwardly from \Cref{rem:solution HC} and the fact that $F(\cdot)$ is strictly decreasing and convex. (ii) equilibrium utility is $u_i^* = f(x_i^* - \sum_{k \neq i} G_{ik} x_k^*) - c_i x_i^*$. Differentiate with respect to $c_j$:
\begin{align*}
    \frac{d u_i^*}{d c_j} = \left( \frac{d x_i^*}{d c_j} - \sum_{k \neq i} G_{ik} \frac{d x_k^*}{d c_j} \right) f'(\cdot) - c_i \frac{d x_i^*}{d c_j}
\end{align*}
Now recall that $f'(x_i^* - \sum_j x_j^*) - c_i = 0$ in equilibrium, and that $x_i^* = C_i^{gen}$ for all $i$, $\frac{d x_k^*}{d c_j} = \frac{d C_k^{gen}}{d c_j} = B_{kj} F'(c_j) < 0$. Substituting these in yields:
\begin{align*}
\frac{d u_i^*}{d c_j} = - f'(\cdot) \sum_{k \neq i} G_{ik} B_{kj} F'(c_j) < 0
\end{align*}
\end{proof}
\end{quote}
This result is somewhat different to the homogeneous cost version. Because the cost parameter is now agent-specific, the outcome is much simpler. An increase in an agent $j$'s cost pushes down her consumption, relaxing the need for others to keep up with ``The Joneses'' (in this case, agent $j$). This increases welfare for all $i \neq j$. Since $i$ has not experienced an increase in her own costs, there is no off-setting effect. The impact of someone else's cost parameter on your welfare is unambiguous.

\begin{prop}\label{rem:perturb effect HC}
The change in agent $i$'s action due a comparison shift $D$ is equal to;
\begin{align*}
\frac{ \phi B_{ir} (C^{gen}_{u} - C^{gen}_{d}) }{1 - \phi (B_{ur} - B_{dr}) }
\end{align*}
\end{prop}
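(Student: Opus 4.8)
The plan is to mirror the proof of \Cref{lem:perturb effect} almost verbatim, exploiting the fact that the perturbation theorem (\Cref{perturbation theorem}) acts purely on the matrix $B = (I-G)^{-1}$ and is therefore insensitive to the cost structure. The only substantive change is bookkeeping: because costs now vary across agents, the scalar $f^{'-1}(c)$ can no longer be pulled outside the summation, so the row sums of $B$ become the cost-weighted sums that define generalised Bonacich centrality.

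First I would record the two equilibrium expressions. By \Cref{rem:solution HC}, the pre-shift equilibrium action is $x_i^* = C_i^{gen} = \sum_j B_{ij} f^{'-1}(c_j)$. A comparison shift $D$ leaves every $\alpha_i$ unchanged, so under $\alpha_i < 1$ the post-shift equilibrium still exists (\Cref{perturb corr 5}), and \Cref{perturbation theorem} gives $(I - [G+D])^{-1} = B + H$ for the explicit matrix $H$ stated there --- crucially, this $H$ depends only on $G$ and $D$, never on the costs. Hence the post-shift action is $x_i^{'*} = \sum_j (B_{ij} + H_{ij}) f^{'-1}(c_j)$, and subtracting yields $\Delta x_i^* = \sum_j H_{ij} f^{'-1}(c_j)$.

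Next I would substitute the explicit form $H_{ij} = \frac{\phi B_{ir}(B_{uj} - B_{dj})}{1 - \phi(B_{ur} - B_{dr})}$. Because the prefactor $\frac{\phi B_{ir}}{1 - \phi(B_{ur} - B_{dr})}$ does not depend on $j$, it factors out, leaving $\Delta x_i^* = \frac{\phi B_{ir}}{1 - \phi(B_{ur} - B_{dr})} \sum_j (B_{uj} - B_{dj}) f^{'-1}(c_j)$. The final step is to recognise the remaining sum: by \Cref{general bonacich centrality}, $\sum_j B_{uj} f^{'-1}(c_j) = C_u^{gen}$ and $\sum_j B_{dj} f^{'-1}(c_j) = C_d^{gen}$, so the sum equals $C_u^{gen} - C_d^{gen}$, delivering the claimed expression.

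There is no genuine obstacle; the result is essentially a one-line generalisation of \Cref{lem:perturb effect}. The single place where care is needed --- and the only point worth stating explicitly --- is the claim that $H$ is cost-independent. I would stress that the Woodbury/Chang computation underlying \Cref{perturbation theorem} never references $c$, so the same $H$ governs the perturbation of $B$ irrespective of cost heterogeneity; everything else amounts to substituting $f^{'-1}(c_j)$ for $f^{'-1}(c)$ and reinterpreting the resulting weighted row sums as generalised centralities.
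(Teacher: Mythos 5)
Your proof is correct and takes the same route as the paper: the paper's own proof is a one-liner stating that the result follows from \Cref{prop:perturb effect restated} by replacing $C^{b}_{i} \cdot f^{'-1}(c)$ with $C^{gen}_{i}$, which is exactly the computation you carry out explicitly. Your observation that the perturbation matrix $H$ from \Cref{perturbation theorem} depends only on $G$ and $D$ (never on costs), so that $f^{'-1}(c_j)$ simply stays inside the row sum and the weighted sums become generalised centralities, is precisely the point the paper leaves implicit.
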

\begin{proof}
This follows from Proposition \ref{prop:perturb effect restated} but replacing $C^{b}_{i} \cdot f^{'-1}(c)$ with $C^{gen}_{i}$.
\end{proof}

All other results concerning comparison shifts also follow as a result of this, including an analogue to \Cref{prop:perturbation}. This is because they are also based on \Cref{prop:perturb effect restated} in the homogeneous cost case. 

\begin{prop}[Endogenous network]
In all pairwise stable networks, if $b_i \geq c_i f^{'-1}(c_i)$, then $G_{ij} > 0$ only if $\frac{b_i}{c_i} = \frac{b_j}{c_j}$.
\end{prop}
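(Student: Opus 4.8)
The plan is to transfer the homogeneous-cost argument (the joint proof of \Cref{prop: endogenous network} and \Cref{cor:endogenous network}) almost verbatim, replacing the benefit level $b_i$ throughout by the benefit--cost ratio $\rho_i \equiv b_i/c_i$ and the matching condition $b_i = b_j$ by $\rho_i = \rho_j$. The engine is \Cref{lem:comp static u}, which already holds with agent-specific costs: $\frac{d u_i^*}{d G_{ij}} = -x_j^* c_i + b_i = c_i(\rho_i - x_j^*)$. Hence agent $i$ weakly wants to strengthen a link to $j$ exactly when $x_j^* \le \rho_i$, and \emph{strictly} wants to sever it when $x_j^* > \rho_i$. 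I will use the natural analogue of the model's shared-benefit assumption, namely that every $i$ has some $j$ with $\rho_j = \rho_i$.

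First I would establish a universal lower bound: $x_i^* \ge \rho_i$ for every agent $i$ in any pairwise stable network. This splits into two cases. If $b_i < c_i F(c_i)$ (equivalently $\rho_i < F(c_i)$, writing $F = f^{'-1}$) the bound is immediate, since an unlinked agent consumes $F(c_i)$ and links only raise the reference point, so $x_i^* \ge F(c_i) > \rho_i$. If instead $b_i \ge c_i F(c_i)$ — the regime singled out in the statement — I would mirror step (ii) of the homogeneous proof: whenever $x_i^* < \rho_i$, the matched partner $j$ with $\rho_j = \rho_i \equiv \rho$ has link value $c_j(\rho - x_i^*) > 0$, while $i$'s value for linking to $j$ is $c_i(\rho - x_j^*)$, positive as long as $x_j^* < \rho$. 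So both endpoints want to strengthen $G_{ij}$ while both sit below the common threshold; each strengthening raises both consumptions (strategic complementarity, i.e.\ the Neumann-series monotonicity underlying \Cref{prop:equilibrium} and \Cref{rem:solution HC}), and this continues until $x_i^*$ reaches $\rho_i$, giving $x_i^* \ge \rho_i$.

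With the lower bound in hand the conclusion is one line. Take any $G_{ij} > 0$ and suppose $\rho_i \ne \rho_j$; let $k \in \{i,j\}$ be the higher-ratio endpoint and $l$ the lower. Since the lower bound holds for \emph{every} agent, $x_k^* \ge \rho_k > \rho_l$, so $\frac{d u_l^*}{d G_{lk}} = c_l(\rho_l - x_k^*) < 0$ and $l$ strictly benefits from cutting the link, contradicting pairwise stability. Hence $\rho_i = \rho_j$, i.e.\ $b_i/c_i = b_j/c_j$. Note that the universality of the lower bound is exactly what frees me from tracking which of $i,j$ carries the hypothesis $b_i \ge c_i F(c_i)$: that hypothesis only matters for putting $i$ in the regime where the bound $x_i^* \ge \rho_i$ is enforced by link formation rather than by the consumption floor. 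I expect the main obstacle to be precisely this above-threshold lower bound: because each increment in a link weight perturbs the entire equilibrium through $(I-G)^{-1}$, a partner's consumption cannot be held fixed, so the rigorous version must invoke monotonicity of equilibrium consumption in the network weights together with the shared-ratio partner assumption to guarantee a willing counterparty at every step below $\rho_i$. Everything else is the mechanical substitution $b_i \mapsto b_i/c_i$ into the homogeneous argument.
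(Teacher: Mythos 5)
Your proposal is correct and takes essentially the same route as the paper: the paper's own proof of this proposition is literally to rerun the homogeneous-cost argument (the joint proof of \Cref{prop: endogenous network} and \Cref{cor:endogenous network}, driven by \Cref{lem:comp static u}, which is already stated with agent-specific $c_i$) ``replacing $c$ with the agent-specific version as appropriate'' --- exactly your substitution $b_i \mapsto b_i/c_i$ and $b_i = b_j \mapsto b_i/c_i = b_j/c_j$. Your repackaging into a universal lower bound $x_i^* \geq b_i/c_i$ followed by the lower-ratio endpoint severing any cross-ratio link is a mild streamlining of the paper's steps (i)--(iv) rather than a different argument, and the ``willing counterparty'' step you flag as the main obstacle is treated with the same (no greater) looseness in the paper's own proof.
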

\begin{quote}
\begin{proof}
This follows straightforwardly from the proof to \Cref{prop: endogenous network}, replacing $c$ with the agent-specific version as appropriate.
\end{proof}
\end{quote}





\end{document}